\title{Satisfiability of Context-free String Constraints with Subword-ordering and Transducers} 
\titlerunning{Context-free String Constraints with Subword-ordering and Transducers} 
\author{C. Aiswarya}{Chennai Mathematical Institute, India \and  CNRS, ReLaX, IRL 2000, India}{aiswarya@cmi.ac.in}{https://orcid.org/0000-0002-4878-7581}{}
\author{Soumodev Mal}{Chennai Mathematical Institute,  India}{soumodevmal@cmi.ac.in}{https://orcid.org/0000-0001-5054-5664}{}
\author{Prakash Saivasan}{Institute of Mathematical Sciences, HBNI, India \and CNRS, ReLaX, IRL 2000, India}{psaivasan@imsc.res.in}{https://orcid.org/0000-0001-5060-0117}{MATRICS GRANT (MTR/2022/000312)}
\authorrunning{C. Aiswarya, S. Mal, and P. Saivasan} 
\keywords{satisfiability,	subword,	string constraints,	context-free,	transducers} 
\tikzstyle{branch}=[fill,shape=circle,minimum size=3pt,inner sep=0pt]
\begin{document}

\maketitle

\newcommand{\twonexpc} {{\sc 2NExptime}-complete}
\newcommand{\twonexph} {{\sc 2NExptime}-hard}
\newcommand{\twonexp} {{\sc 2NExptime}}
\newcommand{\nexpc} {{\sc NExptime}-complete}
\newcommand{\nexph} {{\sc NExptime}-hard}
\newcommand{\nexp} {{\sc NExptime}}

\pgfdeclarelayer{bg}    
\pgfsetlayers{bg,main}

\begin{abstract}
	We study the satisfiability of string constraints where context-free membership constraints may be imposed on variables. Additionally a variable may be constrained to be a subword of a word obtained by shuffling variables and their transductions. The satisfiability problem is known to be undecidable even without rational transductions. It is known to be \nexpc\ without transductions, if the subword relations between variables do not have a cyclic dependency between them. We show that the satisfiability problem stays decidable in this fragment even when rational transductions are added. It is \twonexpc\  with context-free membership, and \nexpc\  with only regular membership. For the lower bound we prove a technical lemma that is of independent interest: The length of the shortest word in the intersection of a pushdown automaton (of size $\mathcal{O}(n)$) and $n$ finite-state automata (each of size $\mathcal{O}(n)$)  can be double exponential in $n$.
	
\end{abstract}

\section{Introduction}

The theory of strings has always been an important and  active area of research for long. In fact, as Hilbert notes, it is the very foundation of mathematical logic itself \cite{frege-to-godel,CorcoranFM74}. The recent successes in employing the theory for practical verification has only re-iterated its importance.
The study of the theory of string constraints dates back to Tarski and Hermes \cite{Tarski1935, Quine1939HermesHS}, who in 1933 provided the axiomatic foundation for it. There have been several other advancements of string theories since then, some of the notable ones include \cite{CorcoranFM74,quine_1946, Matiyasevich, GSMakanin, Plandowski99, BuchiSenger1988}. In 1977, Makanin  studied the algorithmic aspect of the word equations (equation involving concatenation and equality) and showed that the satisfiability problem is decidable \cite{GSMakanin}. The complexity for this problem was improved in \cite{Plandowski99}. Despite receiving much attention, the theory of strings  has long standing unsolved open problems, indicating the intrinsic difficult nature of the theory.

 One important aspect of the study here is the satisfiability of string constraints. The question here asks whether it is possible to  assign a word to each variable such that the given set of string constraints is satisfied. The constraints themselves can be either relational,  which relate variables or membership,  that define the domain for each variable. 

In the recent years, the constraint satisfaction problem of strings (CSPS) has received much attention from verification community due to its usefulness in modeling and reasoning about programs. This problem has particularly been useful in verifying web services \cite{HagueLO15} and database applications from injection attacks \cite{Amadini23}.  In such attacks, the attacker constructs an input string in such a way that the underlying semantics of the interpretation is changed. The CSPS, and more importantly its implementations in solvers \cite{CVC416,Trau2018,Z3str2017,Hampi13,KiezunGHEG19,Day22,KanLRS22} have provided the much needed power to model and verify programs for such vulnerabilities. This in turn has directed the study  to explore the boundaries of solvability.

 However one impediment for this has been the theoretical limitation. For instance, with respect to word equations, adding a transducer renders the model undecidable. Similarly introducing membership in context free language also renders the model undecidable (see \cite{DayGHMN2018}, \cite{GaneshMSR12} for more details).  Despite this, there have been several advancements in this regard \cite{HagueLin,LinB16,HolikJLRV18,ChenHLRW19,ChenHHHLRW20,ChenCHLW18,AbdullaACHRRS14,AbdullaADHJ19,FigueiraJL22}.

The context-free membership constraints are particularly useful feature to have since checking vulnerabilities include checking for programs, that are inherently context-free,  masquerading as string queries. In \cite{AiswaryaMS22}, the authors provided first such model that could handle context-free membership queries and yet has decidability for CSPS, under some restrictions. They showed that if every relational constraint has sub-word relation instead of equality and assuming an \emph{acyclicity} restriction,  the satisfaction problem is {\sc NExptime complete}. In fact, the authors in their model include  a more powerful shuffle operator against the usual concatenation.  Further they show that the complexity of the satisfiability problem when only regular membership is involved is also the same i.e, {\sc NExptime complete}. They also provide an interesting connection of their model with lossy channel systems that include pushdown automata.

Yet another feature  in string solvers that has been much desired is that of transductions. As noted in \cite{HolikJLRV18, ChenHHHLRW20}, most modern applications, especially browsers include implicit transductions that mutates the input string. To verify such applications, one also needs the power of transductions. There have been very few successful attempt towards decidability of string constraints that involve transductions, some of them being  \cite{HolikJLRV18, AbdullaADHJ19, ChenHHHLRW20,HagueLin}.

We investigate string constraints when sub-word ordering, context-free membership and transducers are involved. 
Unfortunately, in its full generality this problem is undecidable. However we show that imposing the same acyclicity restriction as in \cite{AiswaryaMS22} gives decidability under this setting. This extends the decidability result of \cite{AiswaryaMS22} to include transductions.

In \cite{AiswaryaMS22} the satisfiability of the acyclic variant of the string constraints without transducers was shown to be inter-reducible with the control-state reachability problem of acyclic networks of pushdown systems communicating over lossy fifo channels. They showed that both these problems are \nexpc. In our setting, with the additional feature of transductions, we can enrich the model of communicating pushdown to allow transductions to be sent in the channels. Such transductions naturally model encoders such as error correcting codes or injection of noise.  

 We show that, when only regular membership is allowed, adding transductions do not alter the complexity. It is still \nexpc. Interestingly when context-free membership is involved, it becomes \twonexpc.

 Our \twonexp lower bound argument relies on a new technique that is of independent interest. In fact, we show that we can count exactly $2^{2^n}$ using one pushdown automaton with a binary stack alphabet and 3 states,  and $n$ finite state automata each of size $\mathcal{O}(n)$. Along the way we also show that 
 1) we can count exactly $2^n$ using a pushdown automaton with $\mathcal{O}(n)$ states and a binary stack alphabet, and 2) we can count exactly $2^n$ using $n$ finite state automata each of size  $\mathcal{O}(n)$.

As an application of this, we obtain a tight bound on the size of the smallest DFA of the downward closure, upward closure and the Parikh image closure of the intersection  language of $n$ finite state automata, each of size $\mathcal{O}(n)$. This size is ${\Theta}(2^n)$. Likewise, the size of  the smallest DFA  of the downward, upward and Parikh image closure for the intersection of language of $n$ finite state automata  with the language of a pushdown automaton, each of size  $\mathcal{O}(n)$ is  ${\Theta}(2^{2^n})$. 
\paragraph*{Related work}
Apart from the work mentioned in the introduction, there are several other work on string constraints. In \cite{ChenCHLW18}, the authors consider word equations equipped with replace all function and show decidability for the acyclic fragment. 

In \cite{AbdullaMFC17}, the authors develop an uniform framework to decide the satisfiability and unsatisfiability of string constraints  based on identifying patterns. In \cite{AbdullaACDHHTWY21}, the authors consider string constraints extended with negation and show how to solve them.
In \cite{ChenHLRW19}, the authors provide a semantic restriction on string manipulating programs that guarantees decidability for checking path feasibility. In \cite{AbdullaADK20}, the authors study the problem of regular separability of the language of two word equations. In \cite{DayGGM23}, the authors compare the expressive power of the logical theories built around word equations.

In \cite{HolikJLRV18}, word equations  with equality, transducers and regular membership is considered. This problem in full generality is immediately  undecidable. The authors consider a straight line fragment and show that the satisfiability problem is {\sc Expspace complete}. In \cite{ChenHLRW19}, the authors investigated the decidability of string constraints in the presence of regular membership constraints, \texttt{replaceAll} operator involving regular expressions and straight line restriction. In \cite{abs211104298}, the authors consider a stronger match and replace operator and show decidability.  In \cite{AbdullaADHJ19}, word equations  with equality, transducers, length constraints and regular membership is considered and a  chain free fragment of it was shown to be decidable. The authors show that the chain-free fragment of the satisfiability problem in this setting is decidable. 

All of these work consider word equation (uses equality for comparison) in the model, our work uses subword ordering as the comparison operator.  Further more, none of the work mentioned above considers context free membership constraints. In \cite{AiswaryaMS22}, subword ordering and context free membership is considered, where as it does not include transductions.

Apart from these, there are several approaches which attempts to solve the problem from a practical perspective, some of them being \cite{ParoshMYDLAP,AbdullaACHRRS15,AbdullaACDHHTWY21,AbdullaACHRRS14, Z3str2017, BerzishDGKMMN23, HolikJLRV18}.

\section{Preliminaries}\label{sec:preliminaries}
\newcommand{\Nat}{\mathbb{N}}
\newcommand{\Natz}{\Nat_0}
\newcommand{\nset}[1]{[#1]}
\newcommand{\multiset}[1]{\{\!\!\{#1\}\!\!\}}
\newcommand{\msetunion}{\oplus}
\newcommand{\sizeof}[1] {|#1|}

\newcommand{\alphabet}{\Sigma}
\newcommand{\alphabeteps}{\alphabet_\epsilon}
\newcommand{\subword} {\preceq}
\newcommand{\len}[1]{\textsf{len}(#1)}
\newcommand{\pos}[1]{\textsf{pos}(#1)}
\newcommand{\proj}[2]{{#1}_{\downarrow{#2}}}
\newcommand{\shuffle}[1]{\mathsf{Shuffle}(#1)}
\newcommand{\functionset}{\mathcal F}
\newcommand{\terms}[2]{#1(#2)}

\paragraph*{  Sets, Multisets, Functions } We denote the set of natural number $\{1,2, \dots \}$ by $\Nat$. For $n \in \Nat$, we denote by $\nset{n}$ the set of natural numbers up to $n$: $\{1, 2, \dots, n\}$. Let $\Natz $ denote the set $\{0, 1, 2, \dots\}$. That is, $\Natz = \{0\} \cup \Nat$.

Let $S$ be any set. A \textit{multiset} $X$ of $S$ assigns a multiplicity $X(s) \in \Natz$ to each element $s\in S$. We say that $s\in X$ if $X(s)>0$.
For a usual subset $X$, the multiplicity $X(s) \in \{0,1\}$. A multiset $X$  may also be written as $\multiset{s_1, s_2, \dots}$, by listing each element $s$, $X(s)$ many times. The set of all multisets of $S$ is denoted $\Natz^S$, and the set of all usual subsets of $S$ is denoted by $2^S$. The size of a multiset $X$, denoted $\sizeof{X}$ is the sum of the multiplicities of the elements. That is, $\sizeof{X} = \sum_{s \in X} X(s)$.

\paragraph*{ Word, Subword, Shuffle, Projection}
Let $\alphabet$ be an alphabet. $\alphabet^\ast$ denotes the set of all words over $\alphabet$, $\epsilon$ denotes the empty word, and  $\alphabeteps = \alphabet \cup \{\epsilon\}$.  For a word $w = a_1a_2\dots a_n\in \alphabet^\ast$, we denote by $\len{w}$, the \textit{length} of $w$ ($\len{w} = n$) and by $w[i]$ its $i$th letter $a_i$. The set of positions of $w$ is denoted $\pos{w}$.  That is, $\pos{w} = \nset{\len{w}}$. For $Y \subseteq \pos{w}$, we denote by $\proj{w}{Y}$ the \textit{projection} of $w$ to the positions in $Y$.  If 
$Y = \{i_1, i_2, \dots i_m\}$ with $0 < i_1 < i_2 < \dots < i_m \le n$, then $\proj{w}{Y} = a_{i_1}a_{i_2} \dots a_{i_m}$.
For $u, v \in  \alphabet^\ast$, we say $u$ is a \textit{(scattered) subword} of $v$, denoted $u \subword v$,  if  there is $Y \subseteq \pos{v}$ such that $u = \proj{v}{Y}$. In this case we say $v$ is a \textit{superword} of $u$. Let $\alphabet' \subseteq \alphabet$ be a sub-alphabet and let $w \in \alphabet^\ast$. Projection of $w$ to $\alphabet'$, denoted $\proj{w}{\alphabet'}$, is defined to be $\proj{w}{Y}$ where $Y = \{i \mid w[i] \in \alphabet'\}$.

Let $X$ be a finite multiset of words from $\alphabet^\ast$ given by $X =  \multiset{w_1, \dots w_n}$. We define the \textit{shuffle} of $X$, denoted $\shuffle{X}$ to be the set $\{w \mid  $ there are $ Y_1, Y_2 \dots Y_n \subseteq \pos{w} $ forming a partition of $ \pos{w} $ and $w_i = \proj{w}{Y_i} $ for all $ i \in \nset{n}  \}$.

\newcommand{\pstates}{\textsf{States}}
\newcommand{\stackalphabet}{\Gamma}
\newcommand{\ptrans}{\textsf{Trans}}
\newcommand{\pinstate}{s_\text{in}}
\newcommand{\pfinstates}{F}
\newcommand{\lab}{\textsf{label}}
\newcommand{\movesto}[1]{\xrightarrow{#1}}
 \newcommand{\Ops}{\textsf{Ops}}
\newcommand{\push}{\textsf{push}}
\newcommand{\pop}{\textsf{pop}}
\newcommand{\nop}{\textsf{nop}}
 \newcommand{\PDA}{\mathsf{PDA}}
  \newcommand{\NFA}{\mathsf{NFA}}
   \newcommand{\TRANSD}{\mathsf{TRANSD}}
\newcommand{\out}{\textsf{out}}
\newcommand{\op}{\textsf{op}}
\newcommand{\transducerset}{\mathcal T}
\newcommand{\transducer}{T}
\newcommand{\idtransducer}{T_\text{id}}
\newcommand{\statesize}[1]{\textsf{state-size}(#1)}
\newcommand{\occ}[1]{n_#1}
\newcommand{\TRSET}[1]{\textsc{trset}(#1)}
\newcommand{\AUTSET}[1]{\textsc{autset}(#1)}
\paragraph*{Finite-state automaton, Transducers, Pushdown Automaton}

A (nondeterministic) \textit{finite-state automaton} (NFA) over an alphabet $\alphabet$ is given by a tuple $A = (\pstates, \ptrans, \pinstate, \pfinstates)$ where $\pstates$ is the finite set of states, $\ptrans \subseteq \pstates \times \alphabeteps \times \pstates$ is the set of transitions, $\pinstate \in \pstates$ is the initial state, and $\pfinstates \subseteq \pstates$ is the set of final/accepting states. We write $s \movesto{t}s'$ for some $t \in \ptrans$ if $t$ is of the form $(s, a, s')$.  Define the homomorphism $\lab: \ptrans^\ast \to \alphabet^\ast$ given by $\lab((s,a, s')) = a$. The \textit{language} of an NFA $A$, denoted $L(A)$ is given by $L(A) = \{w \mid w = \lab(t_1t_2\dots t_n)$ and $\pinstate \movesto{t_1} s_1 \movesto{t_2} s_2 \dots s_{n-1} \movesto{t_n} s_n$ with $s_n \in \pfinstates\}$. 

\noindent \begin{minipage}{0.45\textwidth}
A  \textit{transducer} from $\alphabet^\ast$ to $\alphabet^\ast$ is a tuple $\transducer  = (\pstates, \ptrans, \pinstate, \pfinstates, \out)$ where $A = (\pstates, \ptrans, \pinstate, \pfinstates)$ is an NFA, and $\out:\ptrans \to \alphabet^\ast $ defines the outputs on each transition. The function $\out$ defines a homomorphism $\out: \ptrans^\ast \to \alphabet^\ast$. The \textit{relation} $R \subseteq \alphabet^\ast \times \alphabet^\ast$ recognized by $\transducer$, denoted $R(\transducer)$ is given by $\{(u,v) \mid u = \lab(t_1t_2\dots t_n), v = \out(t_1t_2\dots t_n)$ and $\pinstate \movesto{t_1} s_1 \movesto{t_2} s_2 \dots s_{n-1} \movesto{t_n} s_n$ with $s_n \in \pfinstates\}$. The equality relation is realised by a transducer $\idtransducer$. A transducer is depicted in Figure~\ref{fig:exTransducer}.
\end{minipage}\hfil
\begin{minipage}{0.475\textwidth}
\centering
	\begin{tikzpicture}[node distance=2.25cm, initial text=,>=latex, thick]
	
	\node[state, initial above, ] (2) {$2$};
	\node[state, accepting, right of=2,] (3) {$3$};
	\node[state, accepting, left of=2] (1) {$1$};
	
	\draw[->] 
	(2) edge node[above] {$\epsilon/\epsilon$} (1)
	(2) edge node[above] {$\epsilon/\epsilon$} (3)
	(1) edge[loop above] node[above] {$b/bb$} (1)
	(1) edge[loop below] node[below] {$a/a$} (1)
	(3) edge[loop above] node[above] {$b/b$} (3)
	(3) edge[loop below] node[below] {$a/aa$} (3)
	;
\end{tikzpicture}

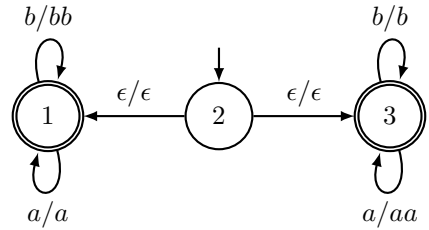
\captionof{figure}{A transducer. Here the label $x/y$ on a transition $t$ indicates that $\lab(t) = x$ and $\out(t) = y$. It nondeterministically chooses to duplicates $a$s leaving $b$s as such, or duplicates $b$s leaving $a$s as such.} \label{fig:exTransducer}
\end{minipage}

A \textit{pushdown automaton} over  $\alphabet$ is given by a tuple $P = (\pstates, \ptrans, \pinstate, \pfinstates, \op, \stackalphabet)$ where  $A = (\pstates, \ptrans, \pinstate, \pfinstates)$ is an NFA,  $\stackalphabet$ is the finite set of stack symbols, and $\op:\ptrans \to \Ops $ defines the stack operation of each transition, where $\Ops = \{\push(\gamma) \mid \gamma \in \stackalphabet\} \cup \{\pop(\gamma) \mid \gamma \in \stackalphabet\} \cup \{\nop\}$. When depicting the pushdown automaton pictorially, we represent a transition $t =(s, a, s')$ as $s \xrightarrow{a\mid op(t)} s'$. 
When $\op(t) = \nop$, we may simply write  $s \xrightarrow{a} s'$. Further if $a = \epsilon$ then we may write it as $s \xrightarrow{op(t)} s'$. 
A configuration of a PDA is a pair $(s,w) \in \pstates \times \stackalphabet^\ast$, indicating the current state and the stack contents. For two configurations $(s,w)$ and $(s', w')$ we write $(s,w) \movesto{t}(s',w')$ for some $t \in \ptrans$ if $t$ is of the form $(s, a, s')$  and 1)  $\op(t) = \push(\gamma)$ and $w' = \gamma\cdot w$, or 2) $\op(t) =  \pop(\gamma)$ and $w = \gamma \cdot w'$, or 3) $\op(t) = \nop$ and $w = w'$. The \textit{language} of a PDA P, denoted $L(P)$ is given by $L(P) = \{w \mid w = \lab(t_1t_2\dots t_n)$ and $(\pinstate,\epsilon) \movesto{t_1} (s_1,w_1) \movesto{t_2} (s_2,w_2) \dots (s_{n-1},w_{n-1}) \movesto{t_n} (s_n, \epsilon)$ with $s_n \in \pfinstates\}$.

The set of all NFA / transducers / PDA over the alphabet $\alphabet$ is denoted $\NFA(\alphabet)$ / $\TRANSD(\alphabet)$ / $\PDA(\alphabet)$.  A language $L \subseteq \alphabet^\ast$ is said to be context-free  (resp. regular) if there is a PDA (resp. NFA) $A$ such that $L = L(A)$. A relation $R \subseteq \alphabet^\ast \times \alphabet^\ast$ is said to be rational if it is recognized by some transducer $\transducer$. 

Given an NFA $A$ (resp. transducer $T$), its number of states is denoted by $\statesize{A}$ (resp. $\statesize{T})$. Given a PDA $P$ by $\statesize{P}$ we denote the sum of the number of states and number of stack symbols. That is $\statesize{P} = \sizeof{\pstates} + \sizeof{\stackalphabet}$.

\section{String constraints}\label{sec:stringConstraints}
\newcommand{\variableset}{\mathcal V}
\newcommand{\tuple}[1]{(#1)}
 \newcommand{\Mem}{\mathsf{Mem}}
 \newcommand{\Rel}{\mathsf{Rel}}
 \newcommand{\assignment}{\sigma}
 \newcommand{\multiplicity}{\textsf{multiplicity}}
 
 A string constraint  over a set of variables $\variableset$ and an alphabet $\alphabet$ is given by a set of membership constraints and a set of subword ordering constraints. The membership constraint is given by associating a pushdown automaton to each variable, indicating that the word assigned to the variable must belong to the language of the pushdown automaton.  A subword order constraint is given by a pair $(x, Y)$ where $x \in \variableset$ and $Y$ is a finite multiset over $ \variableset \times \TRANSD(\alphabet) $. 
 
  For example, the constraint $(x,  {\multiset{(y, \transducer_1), (y, \transducer_2), (y, \transducer_2), (z, \transducer_2)}})$ means that the words assigned to $x$, $y$ and $z$, say $w_x$, $w_y$ and $w_z$ respectively, must satisfy $w_x \subword w$ for some $w \in \shuffle{\multiset{u_1, u_2, u_3, u_4}}$, where $(w_y, u_1) \in R(\transducer_1)$, $(w_y, u_2) \in R(\transducer_2)$, $(w_y, u_3) \in R(\transducer_2)$, and $(w_z, u_4) \in  R(\transducer_2)$. Note that the transducers can be identity in which case the input and the output are the same. For instance,  if $\transducer_1 = \idtransducer$ then $u_1$ must be same as $w_y$. 
 
 We sometimes denote the constraint $(x,Y)$ by  $x \subword \shuffle{Y}$. Abusing notation, we may  write a pair $(x, T) \in  \variableset \times \TRANSD(\alphabet) $ as $T(x)$. If $Y = \multiset{(y, T)}$ (i.e., a singleton), then we  may simply   write $x \subword T(y)$ instead of $x \subword \shuffle{\multiset{T(y)}} $. Further, we may simply write $x$ for $(x, \idtransducer)$.  For instance, $(x, \multiset{(y, \idtransducer)})$ may be also written as $x \subword y$.

 \begin{definition}
 	A string constraint $C$ is a tuple $\sc = \tuple{\alphabet, \variableset,  \Mem, \Rel}$ where $\Mem : \variableset \to \PDA(\alphabet) \cup \NFA(\alphabet)$ assigns a PDA or an NFA to each variable, and $\Rel \subseteq \variableset \times \Natz^{  \variableset \times \TRANSD(\alphabet) }$  is a finite set of subword-order constraints. 
 	\end{definition}

 We denote by $\TRSET{C}$  the finite set of  transducers occurring in the string constraint $C$. That is, $\TRSET{C} = \{T \mid \exists (x,Y) \in \Rel, y \in \variableset, (y, T) \in Y\}$.  Similarly, $\AUTSET{C}$ is the finite set of PDA/NFA occurring in $C$. That is, $\AUTSET{C} = \{\Mem(x) \mid x \in \variableset\}$. A string constraint is \textit{regular} if for every $v \in \variableset$, $\Mem(v) $ is an NFA, (equivalently, if $\AUTSET{C} \subseteq \NFA(\alphabet)$). An important parameter for our complexity considerations will be the number of times a variable is used in the  right hand side (RHS). We denote it by $\multiplicity_C(x) = \sum_{T \in \TRSET{C}, (y,Y) \in \Rel} Y((x,T))$. We omit the subscript and simply write $\multiplicity(x)$ when $C$ is clear from the context.
 
 \begin{definition} \label{def:satassign}
 	A string constraint $C$ is \emph{satisfiable} if there exists an assignment $\assignment: \variableset \to \alphabet^\ast$ that satisfies every membership  and relational constraints in $C$ --- that is,
 	\begin{enumerate}
 		\item $\assignment(v) \in L(\Mem(v))$ for all $v \in \variableset$
 		\item\label{cond:rel-sat} For every $(x, Y) \in \Rel$, if $Y = \multiset{(y_1, T_1), ( y_2, T_2), \dots ( y_n, T_n)} $, then there are words $u_1, u_2, \dots , u_n$ such that $(\assignment(y_i), u_i) \in R(T_i)$ for each $i \in \{1, 2, \dots, n\}$, and there is a word  $w \in \shuffle{\multiset{u_1, u_2, \dots, u_n }}$ such that $\assignment(x) \subword w$ . 
 	\end{enumerate}
 Such an assignment $\assignment$ is called a \emph{satisfying assignment}. 
 	\end{definition}

\begin{example}  Consider a string constraint on two variables $x$ and $y$. The membership constraints are as follows. $\Mem(x)$  is an NFA for   $\{ababab\}$, and $\Mem({y})$  is an NFA for   $\{ab\}$. There is only one relational constraint:   $x \subword \shuffle{\multiset{(y, T)(y, T)}}$, where $T$ is the transducer defined in Figure~\ref{fig:exTransducer}. This string constraint is satisfiable. 
	\end{example}

\begin{definition}(Satisfiability Problem for String Constraints)\\	
		\begin{tabular}{c p{11.5cm}}	
		Input : & A string constraint $C$.\\
		Question:& Is $C$ satisfiable ?
	\end{tabular}
	\end{definition}

 The satisfiability problem is undecidable already for regular string constraints without transducers (or, equivalently, when only $\idtransducer$ is allowed) \cite{AiswaryaMS22}. To circumvent undecidability, acyclic fragment of string constraints were considered in \cite{AiswaryaMS22}. Formally, let $x< y$ if $(x, Y) \in \Rel$ with $(y, \transducer) \in Y$ for some transducer $\transducer$. The string constraint is acyclic if $<$ is acyclic. For the acyclic fragment without transducers, satisfiability was shown in \cite{AiswaryaMS22}  to be \nexpc. The lower bound already holds for regular acyclic string constraints without transducers.

We study the satisfiability problem for acyclic string constraints in the presence of transducers. Our main results are:
\begin{theorem}\label{thm:2nexpcCFL}
	Satisfiability problem for acyclic context-free string constraints with transducers is \twonexpc.
	\end{theorem}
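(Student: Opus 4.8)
The plan is to establish both $\twonexp$ membership and $\twonexp$-hardness.

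\paragraph*{Upper bound.} The first step is to reformulate a relational constraint operationally. For $(x,Y)\in\Rel$ with $Y=\multiset{(y_1,T_1),\dots,(y_n,T_n)}$, an assignment $\assignment$ satisfies condition~\ref{cond:rel-sat} for this constraint exactly when $\pos{\assignment(x)}$ can be partitioned into $Z_1,\dots,Z_n$ so that each $\proj{\assignment(x)}{Z_i}$ is a subword of some output of $T_i$ on input $\assignment(y_i)$; equivalently, $\proj{\assignment(x)}{Z_i}$ lies in the subword‑closure of the regular output language $\{u\mid (\assignment(y_i),u)\in R(T_i)\}$. Dually, for a fixed word $v$ the set of inputs $\{z\mid v\subword u \text{ for some }(z,u)\in R(T)\}$ is regular and recognised by an NFA of size $\mathcal{O}(\statesize{T}\cdot\len{v})$, obtained by taking the product of $T$ with the small automaton for the superwords of $v$ and projecting onto the input component. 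Thus, once the words with which a variable interacts are fixed, every constraint mentioning that variable becomes a \emph{regular} side condition on it, and checking condition~\ref{cond:rel-sat} becomes a reachability computation in a product automaton (polynomial in the word lengths and the automata sizes), while membership in $\Mem(v)$ is checked by the cubic‑time pushdown algorithm.

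Next I would order the variables $v_1,\dots,v_m$ consistently with the acyclic relation $<$, so that every variable occurring on the right‑hand side of the constraint of $v_i$ occurs earlier. The heart of the argument is a \emph{small‑model property}: if $C$ is satisfiable, then it has a satisfying assignment $\assignment$ with $\len{\assignment(v)}\le 2^{2^{p(|C|)}}$ for a fixed polynomial $p$. Granting this, the $\twonexp$ algorithm simply guesses all the (doubly exponentially long) words $\assignment(v)$, then checks each membership constraint and, via the reformulation above, each relational constraint, all in deterministic doubly exponential time. For the small‑model property I would induct along the chosen order: given already‑bounded values for the variables that $v_i$ depends on, the admissible values of $v_i$ form $L(\Mem(v_i))$ intersected with finitely many regular conditions — the subword‑closures of shuffles of transduced, already‑bounded words (from constraints with $v_i$ on the left) and the preimage‑type conditions from the reformulation (from constraints with $v_i$ on the right) — each recognised by an NFA whose state count is a product, over the polynomially many relevant occurrences, of $\mathcal{O}(\statesize{T}\cdot(\text{a bounded length}))$; hence the admissible set, if non‑empty, contains a word of length at most exponential in $\statesize{\Mem(v_i)}$ times that product, by the shortest‑word bound for pushdown automata. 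A careful propagation of this estimate along the acyclic dependency DAG keeps all lengths doubly exponential.

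\paragraph*{Lower bound.} For $\twonexp$‑hardness I would reduce from the acceptance problem of a nondeterministic Turing machine running in time $2^{2^n}$. A run is encoded by a word listing the $2^{2^n}$ successive configurations, each of length $2^{2^n}$, so the encoding has doubly exponential length; the membership constraint of the ``run'' variable is a pushdown automaton checking the locally verifiable parts of a correct accepting run, and a bounded number of subword‑ordering constraints with suitable transducers enforce the remaining long‑range consistency between consecutive configurations. The crucial point is that the ``run'' variable is provably forced to carry a doubly‑exponentially long word while all automata stay of polynomial size: this is exactly the technical intersection lemma (the shortest word in a pushdown automaton of size $\mathcal{O}(n)$ intersected with $n$ finite automata of size $\mathcal{O}(n)$ can be doubly exponential), which I would prove by a counter construction using the stack to realise one exponential and the $n$ automata, through Chinese‑remainder‑style modular checks, to realise the second. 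Embedding the same device in the reduction yields a constraint that is satisfiable iff the machine accepts; acyclicity and context‑freeness of the produced constraint are immediate from the construction.

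\paragraph*{Main obstacle.} The hard part is the small‑model property, and specifically controlling how the two kinds of blow‑up compound along the acyclic DAG: each pushdown membership constraint contributes an exponential (the shortest‑word bound) and each shuffle contributes a product over the occurring transduced variables, so a naive accounting produces a tower of exponentials rather than a doubly exponential bound. The resolution hinges on the observation recorded above — that a transduction of a single, already‑bounded word, together with its subword‑closure and its preimages, is captured by an automaton whose size is only \emph{linear} in that word's length (and in $\statesize{T}$) — so that the expensive pushdown step is paid only against polynomial‑arity products of such automata; making this induction over the DAG precise, and handling variables that occur on several right‑hand sides simultaneously, is where the real work lies.
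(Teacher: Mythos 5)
Your overall architecture (a small-model property plus guess-and-check for the upper bound; a counting gadget forcing doubly exponentially long words for the lower bound) matches the paper's, but the central step of your upper bound does not close. When you process $x_i$, the constraints in which $x_i$ occurs on the right reduce, as you say, to intersecting $L(\Mem(x_i))$ with $\multiplicity(x_i)$ many NFAs of size $\mathcal{O}(t\cdot B_{i-1})$ (one preimage condition per witnessing subword). But the shortest word in the intersection of a pushdown automaton with NFAs is in general \emph{exponential in the product of their sizes} --- the paper's own lower-bound gadget (one PDA and $\ell$ DFAs, each of size $\mathcal{O}(\ell)$, whose intersection forces words of length $2^{2^\ell}$) shows this is unavoidable. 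So your recurrence is $B_i \le 2^{\mathrm{poly}(B_{i-1}^{m})}$, which over $k$ variables yields a tower of exponentials of height $k$, not a doubly exponential bound. Your proposed fix (that the preimage automata are only linear in the word length) does not help, because the exponential comes from the pushdown shortest-word bound, not from the automaton sizes. The paper avoids the intersection bound entirely: it annotates a Chomsky-normal-form parse tree of $w_i$ with $m$-tuples of transducer states at block boundaries, marks at most $m\cdot B_{i-1}$ positions as crucial (those hit by the embeddings of the witnessing subwords), and pumps down only along paths free of crucial nodes; this yields $B_i \le 2m\, t^{2m} p^3\, 2^{p^3 t^{2m}}\cdot B_{i-1}$, a factor \emph{linear} in $B_{i-1}$, which is what keeps the final bound doubly exponential. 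A second, smaller issue: your topological order is reversed. You must process (and bound) the variables occurring on the \emph{left} of constraints before those on their right, because shrinking $x_i$ to a subword automatically preserves constraints where $x_i$ is on the left, whereas preserving constraints where it is on the right requires the witnessing subwords --- and hence the left-hand sides --- to be already bounded; with your order the preimage conditions are taken against unbounded words and the induction does not propagate.

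On the lower bound, your plan has the right shape but the second exponential is not obtained the way you suggest. Chinese-remainder checks with $n$ automata of size $\mathcal{O}(n)$ pin a count down only modulo a product of primes of magnitude $2^{\Theta(n\log n)}$, which is singly exponential. The paper instead writes the current stack height of the exponential-counter PDA explicitly into the word as an $\ell$-bit binary number and uses one DFA per bit to verify each increment and decrement; the resulting 3-state PDA then uses a stack of height up to $2^\ell-1$ as a binary counter, giving exactly $2^{2^\ell}$ distinguished letters. Moreover, enforcing long-range consistency between consecutive configurations of a doubly exponentially long run purely with subword constraints is precisely the delicate point your sketch leaves open; the paper sidesteps it by reducing from a doubly-exponentially bounded PCP, using pushdown membership languages of the form $\{w\cdot\#^\ast\cdot f(w^r)\}$ and the sandwich $x_1\subword x_f\subword x_g\subword x_2$ together with $\len{\assignment(x_1)}=\len{\assignment(x_2)}$ forced by the counting gadget, which turns the subword chain into equality.
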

\begin{theorem}\label{thm:nexpcReg}
	Satisfiability problem for acyclic regular string constraints with transducers is \nexpc.
\end{theorem}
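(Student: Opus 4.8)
Since acyclic regular string constraints \emph{without} transducers --- equivalently, those using only $\idtransducer$ --- are already \nexph\ by \cite{AiswaryaMS22}, and are a special case of the constraints considered here, the lower bound is inherited and the entire content is the matching upper bound. The plan is to get it from a \emph{small-model property}: every satisfiable acyclic regular string constraint $C$ admits a satisfying assignment $\assignment$ with $\len{\assignment(x)}\le 2^{\mathrm{poly}(\sizeof{C})}$ for every $x\in\variableset$. Granting this, an \nexp\ procedure guesses such an $\assignment$ letter by letter; checks each membership constraint $\assignment(x)\in L(\Mem(x))$ in time polynomial in $\len{\assignment(x)}$ and $\statesize{\Mem(x)}$; and for each relational constraint $(x,Y)$ with $Y=\multiset{(y_1,T_1),\dots,(y_n,T_n)}$ additionally guesses a partition $P_1,\dots,P_n$ of $\pos{\assignment(x)}$ together with, for each $i$, a run of $T_i$ on input $\assignment(y_i)$ producing an output $u_i$ with $\proj{\assignment(x)}{P_i}\subword u_i$. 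One first observes that whenever such a $u_i$ exists, one of length at most $\len{\proj{\assignment(x)}{P_i}}+\mathcal{O}(\len{\assignment(y_i)}\cdot\statesize{T_i}^2)$ exists (between consecutive input letters take only shortest $\epsilon$-input segments of $T_i$), so the guessed runs are again of exponential length, and the whole verification runs in nondeterministic exponential time.

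To prove the small-model property, the first step is to fold the subword ordering into the transducers: for every $(y,T)$ occurring in $C$ replace $T$ by a transducer $T'$ of size $\mathcal{O}(\statesize{T})$ with $R(T')=\{(a,c)\mid \exists b,\ (a,b)\in R(T)\text{ and }c\subword b\}$, so that a relational constraint $x\subword\shuffle{\multiset{(y_i,T_i)}_{i\in\nset{n}}}$ becomes equivalent to requiring $\assignment(x)\in\shuffle{\multiset{v_1,\dots,v_n}}$ for some words $v_i$ with $(\assignment(y_i),v_i)\in R(T_i')$. The second step is an induction over the acyclicity DAG of $<$: when we come to a variable $x$, the values of the variables above it have been fixed to words of bounded length, and the words demanded of $x$ by the variables \emph{below} it have likewise been argued to be short; the set of values still admissible for $x$ --- namely $L(\Mem(x))$, intersected with the (regular, downward-closed) languages $\{v\mid(\assignment(y_i),v)\in R(T_i')\}$ shuffled together, and with the conjunction of the downward demands --- is regular and accepted by an NFA that is a product of polynomially many NFAs, each of size polynomial in $\sizeof{C}$ and in the length bounds one step away in the DAG. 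We take the shortest word in this NFA as $\assignment(x)$ and then re-derive short demands for the remaining variables, via shortest witnessing transducer runs and shortest shuffle decompositions relative to the now-fixed value of $x$.

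The main obstacle, and what makes this a genuine argument rather than a routine induction, is two-directional: the value chosen for a variable depends on the values of the variables \emph{above} it (its superword suppliers) while the demands placed on a variable propagate up from variables \emph{below} it, so neither a pure top-down nor a pure bottom-up sweep closes the recursion; moreover, one and the same variable may be used in several relational constraints and several slots, so the single short word eventually chosen for it must simultaneously lie in its membership NFA and realise all those demands (treating its occurrences independently is genuinely lossy for satisfiability). Handling this will require phrasing the induction so that the admissible-value automaton of a variable has size only polynomial in the length bounds of the adjacent DAG level --- rather than a product over all of its occurrences --- so that the bound grows by just one exponential over the whole DAG. Keeping this bound at a \emph{single} exponential, as opposed to the double exponential of Theorem~\ref{thm:2nexpcCFL}, is precisely the place where NFAs (whose shortest word is linear in the size) replace PDAs (whose shortest word can already be exponential in the size); we expect the remainder of the argument to be the natural specialisation to NFAs of the proof behind Theorem~\ref{thm:2nexpcCFL}.
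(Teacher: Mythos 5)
Your overall plan --- lower bound inherited from \cite{AiswaryaMS22}, upper bound via a single-exponential small-model property followed by a guess-and-check \nexp\ procedure (with the observation that subword-of-shuffle checking and transducer-run witnesses stay within the size budget) --- matches the paper, and your first step of folding the subword order into the transducers is essentially the paper's ``witnessing projection'' reformulation. The gap is in the inductive step, and you have in fact pointed at it yourself without closing it. If the admissible-value set of a variable $x$ is realised as a product of NFAs, one factor per occurrence of $x$ on a right-hand side, each of size roughly $t\cdot D_{i-1}$ (to track how much of a length-$D_{i-1}$ witnessing subword of the output has been matched), then the shortest word of that product can genuinely be of order $(t\,D_{i-1})^{\Theta(m)}$ --- the paper's own Section~4 construction of $\ell$ small DFAs whose intersection forces words of length $2^{\ell}$ shows that nothing better holds for general products. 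The recurrence $D_i \approx (t\,D_{i-1})^{m}$ iterated over $k$ levels gives $2^{m^{\Theta(k)}}$, which is doubly, not singly, exponential. Asking that the automaton be ``polynomial in the length bounds of the adjacent level'' is not enough; you need the dependence on $D_{i-1}$ to be \emph{linear}. On top of this, the order in which you fix values is circular as stated: you fix the variables above $x$ before you know which subwords of their transduced outputs the not-yet-fixed $x$ will actually need.

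The paper resolves both issues at once by never constructing admissible-value automata. It starts from an arbitrary satisfying extended assignment and only ever \emph{shrinks} values to subwords, processing variables in topological order with the LHS-most variable first. Shrinking $\extendedassignment(x_i)$ to a subword automatically preserves every constraint in which $x_i$ occurs on the \emph{left} (a subword of a subword of a shuffle is still a subword of that shuffle, with the same witnesses), so no demand ever has to propagate upward; this breaks the circularity. For the $m=\multiplicity(x_i)$ occurrences of $x_i$ on the \emph{right}, the left-hand sides are already shrunk to length at most $D_{i-1}$, so at most $m\cdot D_{i-1}$ input positions of $\extendedassignment(x_i)$ are ``crucial'' (incident to some witnessing subword), and between two consecutive crucial positions one pumps down in the product of the NFA $\Mem(x_i)$ and the $m$ transducer state spaces, a product of size only $t^{m+1}$ because the length-$D_{i-1}$ progress components do not move there. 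This yields $D_i \le m\,D_{i-1}\,t^{m+1}$ --- linear in $D_{i-1}$ --- and hence $D_k = 2^{\mathcal{O}(m\cdot k\cdot \log t)}$. That additive/linear accounting of the witnessing positions, rather than a multiplicative product over occurrences, is the missing idea; the rest of your proposal (including the NP check for $w \subword \shuffle{W}$ and the bound on output-witness lengths) is consistent with the paper.
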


\begin{remark} Our result shows an interesting contrast with string equations (with equality instead of subword order in relational constraints). Satisfiability of string equations (with concatenation, no shuffle) is decidable,  when regular membership constraints are allowed. Adding transducers on top however render the satisfiability undecidable. In our setting, where subword order is used instead of equality, 
	adding transducers to the acyclic fragment retains decidability. 
	\end{remark}
\begin{remark}
Without transducers,  regular and context-free string constraints have the same complexity. In the presence of transducers they are in different complexity classes.
	\end{remark}
\begin{remark}
	It was shown in \cite{AiswaryaMS22} that concatenation can be expressed by shuffle. This simulation is only linear and furthermore it preserves acyclicity. Thus our complexity upper bounds already hold for string constraints which uses the more popular concatenation operation instead of shuffle. Interestingly, the lower bounds in Theorem~\ref{thm:2nexpcCFL} and Theorem~\ref{thm:nexpcReg} already hold for the variant without shuffle. 
	\end{remark}
 In Section~\ref{sec:2nexp-lower-bound} and Section~\ref{sec:2nexptimeUpperBound} we prove the lower bound  and upper bound claimed in Theorem~\ref{thm:2nexpcCFL} respectively. The proof of Theorem~\ref{thm:nexpcReg} is given in Section~\ref{sec:reg}.  In Section~\ref{sec:discussions}, we discuss some implications of our results and conclude.

\section{2NEXPTIME Hardness}\label{sec:2nexp-lower-bound}

We prove the hardness by giving a reduction from a bounded variant of the PCP problem that is \twonexpc.

\subsection{(Double-exponentially) Bounded PCP problem}

In this decidable variant of the PCP problem, we are also given a parameter $\ell$ as part of  the input in unary, and we ask whether there is a solution of length $2^{2^\ell}$. Formally the problem is stated as follows. 

\begin{definition}(Double-exponentially) Bounded PCP problem (2eBPCP).\\
	\begin{tabular}{c p{11.5cm}}	
		Input : & $(\Sigma_1, \Sigma_2, f, g, \ell)$ where $\Sigma_1$ and $\Sigma_2$ are two disjoint finite alphabets, $f, g : \Sigma_1 \to \Sigma_2^\ast$ are two functions which naturally extend to a homomorphism from $\Sigma_1^\ast \to \Sigma_2^\ast$, and $\ell \in \mathbb N$ is a natural number.\\
		Question:& Is there a word  $w \in \Sigma_1^+$ with $|f(w)| = 2^{2^\ell} $ and  $f(w) = g(w)$ ?
	\end{tabular}	
\end{definition}

The above problem is \twonexpc.  If the problem asked for the length of $f(w)$ to be $\ell$, it would be NP-complete \cite{ComputersIntractabilityGareyJohnson1990}, and if it was $2^\ell$ it would be \nexpc\ \cite{AiswaryaMS22}.

\begin{theorem}
	(Double-exponentially) Bounded PCP problem is  \twonexpc.
\end{theorem}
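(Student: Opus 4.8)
The plan is to show both membership in \twonexp\ and \twonexp-hardness for 2eBPCP.

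\paragraph*{Upper bound.} For the upper bound, given an instance $(\Sigma_1, \Sigma_2, f, g, \ell)$, a candidate solution is a word $w \in \Sigma_1^+$ with $|f(w)| = 2^{2^\ell}$. Since $f(x)$ is nonempty for a solution to exist (we can assume each letter maps to a nonempty word, or handle the degenerate case separately), the length of $w$ itself is at most $2^{2^\ell}$, which is doubly exponential in $\ell$ (recall $\ell$ is given in unary, so the input size is at least $\ell$). A nondeterministic machine guesses $w$ letter by letter, maintaining along the way: a counter for $|f(w)|$ (bounded by $2^{2^\ell}$, hence representable in $2^\ell$ bits, i.e.\ exponential space in the input — but we only need to guess-and-check, so we can actually just guess $w$ of doubly exponential length and then verify). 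Verification that $f(w) = g(w)$ and $|f(w)| = 2^{2^\ell}$ is done in time polynomial in $|w| + |f(w)|$, which is doubly exponential in the input. Hence the problem is in \twonexp. (Alternatively, one reduces directly to ordinary PCP-with-length-bound and invokes a standard padding/counter argument.)

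\paragraph*{Lower bound.} For \twonexp-hardness, I would reduce from the acceptance problem for a nondeterministic Turing machine $M$ running in time $2^{2^{p(n)}}$ on inputs of length $n$, for a fixed polynomial $p$. This is the standard route used to prove \nexph\ of the singly-exponential bounded PCP (as in \cite{AiswaryaMS22}): encode a halting computation of $M$ as a sequence of configurations, each of length bounded by the time bound, and build PCP tilings $f, g$ whose matching pairs force consecutive configurations to be consistent (the classical Post-correspondence encoding of Turing computations). The only twist here is the length budget: a computation of $M$ on input $x$ of length $n$ has at most $2^{2^{p(n)}}$ steps, each configuration of length at most $2^{2^{p(n)}}$, so the whole computation — and hence the PCP solution word $f(w)$ — has length at most $2^{2^{q(n)}}$ for a suitable polynomial $q$. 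Setting $\ell := q(n)$ (written in unary, which costs only polynomial space) and padding the construction so that every accepting computation yields a solution of length \emph{exactly} $2^{2^\ell}$ (by appending a padding gadget that can extend a valid solution by arbitrary amounts up to the target, e.g.\ a dedicated ``filler'' tile pair $(a,a)$), we obtain a solution of $f(w) = g(w)$ with $|f(w)| = 2^{2^\ell}$ iff $M$ accepts $x$. This reduction is computable in polynomial time.

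\paragraph*{Main obstacle.} The routine Turing-to-PCP encoding is classical, so the real care goes into the padding argument: I must ensure that (i) a padding gadget exists that lets me inflate any genuine solution to length exactly $2^{2^\ell}$ without introducing spurious solutions, and (ii) no ``all-padding'' word (a word using only filler tiles) is itself a solution of the required length when $M$ rejects — this is handled by forcing the solution to begin with a distinguished start tile that only the computation-encoding portion provides. A clean way to guarantee exact length is to include filler pairs of several lengths (e.g.\ lengths $1$ and a few powers of two) so the target can always be hit; since $2^{2^\ell}$ minus the (polynomially-described, but doubly-exponentially-long) genuine part is a nonnegative integer expressible in binary, appropriate fillers cover it. I expect this bookkeeping — matching lengths exactly while preserving soundness — to be the one place needing genuine attention; everything else follows the established \nexph\ proof with $\ell$ promoted one exponential level.
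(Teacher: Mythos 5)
Your proposal follows essentially the same route as the paper: membership by guessing and checking a doubly-exponential witness, and hardness by reducing from a nondeterministic Turing machine with a doubly-exponential time bound via the classical TM-to-MPCP encoding (your ``distinguished start tile'' is exactly the paper's bounded-MPCP intermediate step), followed by the standard linear MPCP-to-PCP conversion. Your explicit padding gadget for hitting length exactly $2^{2^\ell}$ addresses a bookkeeping point the paper handles only implicitly (through the size of the completion suffix and the choice of $\ell$), but this is a refinement of the same argument rather than a different approach.
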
\label{thm:boundedPCP}
\begin{proof}
	 \newcommand{\poly}{\textsf{poly}}
	\newcommand{\TM}{\textsf{TM}}
	\newcommand{\uset}{\mathcal{U}}
	\newcommand{\vset}{\mathcal{V}}
	\newcommand{\States}{{Q}}
	\newcommand{\PCP}{\mathcal{P}}
	\newcommand{\mpcp}{bounded-MPCP}
	
	Clearly  the 2eBPCP problem is in \twonexp as we can guess a solution of the appropriate size and verify it. 
	
	For the hardness, we first consider a modified version of bounded PCP called the bounded MPCP and show that this problem is  \twonexph. 
	Out proof strategy is similar to that in \cite{AiswaryaMS22}. Following the technique provided in \cite{HopcroftBook}, it is easy  to reduce this problem to bounded PCP. The bounded MPCP asks, given two equi-dimensional vector of words $\uset = (u_1, \cdots, u_n )$ and $\vset = (v_1, \cdots, v_n )$ over an alphabet $ \alphabet$ and an integer $\ell \in \Nat$, whether there is a sequence $i_1, \cdots i_k \in \nset{n}^+$ such that $u_1 \cdot u_{i_1} \cdot u_{i_2} \cdots u_{i_k} = v_1 \cdot v_{i_1} \cdot v_{i_2} \cdots v_{i_k}$ and $\sizeof{u_1 \cdot u_{i_1} \cdot u_{i_2} \cdots u_{i_k}} = \sizeof{v_1 \cdot v_{i_1} \cdot v_{i_2} \cdots v_{i_k}} = 2^{2^\ell}$. Notice that here we require that the solution start from a designated initial index. Going from bounded MPCP to bounded PCP requires only a linear blowup.

	In order to give the reduction, we fix a non-deterministic  Turing machine $\TM$ over the alphabet $\{0,1\}$ of size $n$ and an input $w$ and show how to construct an bounded MPCP $\PCP$ and an $\ell$ instance such that 
	the $\TM$ has an accepting run on $w$ of size at most $2^{2^{\poly(\sizeof{w},n)}}$ for some polynomial $\poly()$ if and only if $\PCP$ has a solution of size exactly $2^{2^{\ell}}$. Further more, the size of $\PCP$ and $\ell$ will only be quadratically dependent on $\poly()$,  $n$ and $\len{w}$.

	We first recall the construction that reduces an unrestricted $\TM$ to unrestricted MPCP from \cite{HopcroftBook}. 
	We  assume that the transitions of the Turing machine $\delta \subseteq \States \times \Sigma \times \States \times \Sigma \times \{ R,L\}$, where $\States$ are the set of states of the Turing machine and $\{R,L\}$ denotes the directions of the head movement i.e.  right, left. Further we will assume that $q_0$ is the start state of our Turing machine and that $F$ is the set of final states.

	The required PCP instance is $\PCP$ and is given  below. Here $\uset$ ( $\vset$ ) are obtained by projecting to the first (second) component of the pairs given below.
	
	\begin{eqnarray*}
		\{(\#,\#q_0w\#),(0,0),(1,1),(\#,\#) \} & \cup  & 
		\{(qx,yp) \mid (q,x,p,y,R) \in \delta, q \notin F \}  \\
		\{(xqy,pxz) \mid (q,y,p,z,L) \in \delta , q \notin F\} & \cup & 
		\{ (qx,q), (xq,q) \mid q \in F, x \in \Sigma \} \\
		\{ (q\#\#,\#) \mid q \in F \}  &
	\end{eqnarray*}

	The following lemma provides us with the required correctness and the value of $\ell$.
	
	\begin{lemma}\label{lem:apx:TMpcp}
		The Turing machine $\TM$ has an accepting run of size $m$ on $w$ then the  MPCP instance $\PCP$ has a solution of size $c \times m^2$ for some constant $c \in \Nat$.
	\end{lemma}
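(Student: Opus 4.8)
The plan is to make the length bookkeeping explicit in the classical Hopcroft--Ullman simulation of a Turing machine by an MPCP instance. Fix an accepting run of $\TM$ on $w$ and write it as a sequence of configurations (instantaneous descriptions) $C_0 \Rightarrow C_1 \Rightarrow \cdots \Rightarrow C_t$, where $C_0 = q_0 w$, each $C_i$ is a string $u\,q\,av$ with $u,v \in \Sigma^*$, $a \in \Sigma$ and $q$ the control state scanning cell $a$, and $C_t$ contains a final state. Up to a constant, $t$ equals the size $m$ of the run; and since the head moves at most one cell per step, $|C_i| \le |C_0| + i + O(1) \le |w| + m + O(1)$, so every $|C_i| = O(m)$ in the regime $m \ge |w|$ in which the lemma is applied (otherwise $|w|$ is absorbed into the constant, the instance being fixed).

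I will exhibit an MPCP solution whose top string (the concatenation of the first components) spells $\#C_0\#C_1\#\cdots$, while its bottom string always stays exactly one configuration ahead. The solution must begin with the designated first pair $(\#,\#q_0w\#)$, after which top $=\#$ and bottom $=\#C_0\#$. Inductively, when top $=\#C_0\#\cdots\#C_i\#$ and bottom $=\#C_0\#\cdots\#C_i\#C_{i+1}\#$ for some $i<t$, append a block of pairs that: copy, letter by letter, the prefix of $C_{i+1}$ lying to the left of the head with the pairs $(0,0)$ and $(1,1)$; then rewrite the $2$ or $3$ symbols straddling the head with the single transition pair $(qx,yp)$ for the right-move step $C_{i+1}\Rightarrow C_{i+2}$, or $(xqy,pxz)$ for a left-move step; then copy the remaining suffix of $C_{i+1}$ together with the trailing $\#$. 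The top then reads $C_{i+1}\#$ more and the bottom reads exactly $C_{i+2}\#$ more, re-establishing the invariant for $i+1$; this block uses $O(|C_{i+1}|)=O(m)$ pairs, each of length at most $3$. After $t$ such blocks, top $=\#C_0\#\cdots\#C_{t-1}\#$ and bottom $=\#C_0\#\cdots\#C_t\#$ with a final state $q$ occurring in $C_t$.

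It remains to let the top catch up. Using the mopping-up pairs $(qx,q)$ and $(xq,q)$, one copies $C_t$ from bottom to top but deletes one symbol adjacent to $q$ per round, so the bottom regenerates an almost identical configuration that is one symbol shorter; after $O(|C_t|)=O(m)$ rounds only the state symbol (flanked by $\#$) remains to be matched, and the pair $(q\#\#,\#)$ equalises the two strings, completing a valid MPCP solution. The mop-up contributes $O(m^2)$ to the total length, and so does the main phase ($t=O(m)$ blocks, each of length $O(m)$, plus the first pair of length $|w|+O(1)$); hence the whole solution has length at most $c\cdot m^2$ for a suitable constant $c$.

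The only delicate point is the invariant bookkeeping inside the transition blocks: one must check that the chosen pair straddling the head, together with the copying pairs, reproduces exactly the successor configuration for both head directions and at the tape endpoints (head at the left end, or scanning a fresh blank at the right end). Since only the forward direction is claimed here --- an accepting run yields a solution --- it suffices that such a block \emph{exists}, a routine case analysis; the converse, that every solution of $\PCP$ decodes to a genuine accepting run of $\TM$, is the standard argument of \cite{HopcroftBook} and is not needed for this lemma.
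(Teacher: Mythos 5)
Your proposal is correct and follows essentially the same route as the paper's proof: the top string lags the bottom by one configuration through the simulation phase (contributing $O(m)$ blocks of length $O(m)$ each), and the final configuration is then erased one symbol per round via the pairs $(qx,q),(xq,q)$ and closed off with $(q\#\#,\#)$, contributing a further $O(|C_t|^2)=O(m^2)$. You are merely more explicit about the invariant and the length bookkeeping than the paper's proof sketch, which states the partial solution and the quadratic-size completion suffix directly.
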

	\begin{proof}[Proof idea:]
		Assume that  there is a computation of the Turing machine of the form $C_1 {\movesto{}} C_2 \movesto{} \cdots \movesto{} C_n$, where each $C_i$ is a configuration of the Turing machine. We will assume that each configuration is of size at most $n$. In this case, there is a partial solution to the MPCP instance of the form  $\#C_1\#C_2\# \cdots\#C_n\#, \#C_1\#C_2\# \cdots\#C_{n-1}$.
		Suppose $C_n$ contains a final state then the completion is done by reducing one letter from the final configuration at a time using the pairs $(qx,q), (xq,q)$. We refer to these as the completion suffix. Notice that the completion suffix is of size  $ |C_n| \times (|C_n| -1)$. Hence we have a \mpcp\ solution of size $2 \times n^2$.
	\end{proof}
	The other direction  is as in the following lemma and is easy to see. 
	\begin{lemma}\label{lem:apx:pcpTM}
		If the MPCP instance $\PCP$ that we have constructed has a solution of size $m$, then The Turing machine $\TM$ has an accepting run of size at most $m$ on $w$ 
	\end{lemma}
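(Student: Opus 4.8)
The plan is to show the converse simulation: a short MPCP solution forces a short accepting run of the Turing machine. The argument is essentially a syntactic inversion of the construction recalled just above, reading the matched words as an encoding of a computation history. First I would observe that in any MPCP solution the matching must begin with the starter pair $(\#,\#q_0w\#)$, since it is the only pair whose two components agree on a common prefix (the symbol $\#$) --- every other pair has distinct first symbols in the two components, and the ``$\uset$-side'' always lags behind the ``$\vset$-side''. Hence the $\vset$-side, which is always at least as long as the $\uset$-side during a partial solution, is forced to spell out $\#\,C_1\,\#\,C_2\,\#\cdots$, where $C_1 = q_0 w$ is the initial configuration and, inductively, each block between consecutive $\#$'s on the $\vset$-side is obtained from the preceding block on the $\uset$-side by exactly one application of a rule of $\delta$ (or a copy/completion rule). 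This is the standard ``the $\vset$-side is one configuration ahead of the $\uset$-side'' invariant.

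Next I would make the induction precise: reading the solution left to right and tracking the suffix of the $\vset$-side that has not yet been matched by the $\uset$-side (the ``overhang''), one shows by induction on the number of pairs used that at each point the overhang is of the form $\#\,C_{i+1}\,\#\cdots\#\,C_j\,\#$ (possibly with a partially completed tail), where $C_1 \movesto{} C_2 \movesto{} \cdots \movesto{} C_j$ is a legal partial computation from the start configuration. The copy pairs $(0,0),(1,1),(\#,\#)$ transcribe the unchanged parts of a configuration; the transition pairs $(qx,yp)$ and $(xqy,pxz)$ enact exactly one head move; and because the overhang can only be emptied using the completion pairs $(qx,q),(xq,q)$ and finally $(q\#\#,\#)$ with $q \in F$, the solution can close only if some $C_m$ contains a final state. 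Therefore the prefix of the matched word up to the first completion pair decodes an accepting run $C_1 \movesto{} \cdots \movesto{} C_m$ of $\TM$ on $w$, and the length of that run is bounded by the total number of pairs used, which is at most $m$ (each pair contributes at least one symbol, and the run uses no more pairs than configurations transcribed).

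I would conclude by noting the size bound: if the MPCP solution has size $m$ (number of pairs, or length of the matched word --- the two are within a constant factor since every pair has bounded width depending only on $|\States|$ and $|\Sigma|$, which are part of the fixed machine $n$), then the number of configuration-blocks transcribed on the $\vset$-side, and hence the length of the extracted computation, is at most $m$; so $\TM$ has an accepting run of size at most $m$ on $w$. The main obstacle is making the overhang-invariant bookkeeping fully rigorous --- in particular verifying that no ``illegal'' pair can ever be applied without immediately breaking the match (so that every solution really does correspond to a faithful computation history), and handling the boundary cases at the blank-filled ends of the tape and at the transition pairs near the $\#$ markers. These are routine but tedious case distinctions over the pair set, entirely analogous to the classical undecidability proof of PCP, so I would state them as a claim and verify the representative cases.
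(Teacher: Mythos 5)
Your proposal is correct and is exactly the argument the paper has in mind: the paper gives no proof of this lemma at all (it only remarks that the direction "is easy to see"), deferring to the classical Hopcroft--Ullman decoding of an MPCP solution into a computation history, which is precisely the overhang-invariant argument you reconstruct. The only cosmetic point is that you need not argue the solution starts with the starter pair via prefix analysis --- in MPCP this is forced by definition --- and the size bound is immediate because every configuration of the extracted run appears verbatim in the matched word of length $m$.
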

	
	Now suppose we want to find if $\TM$ has an accepting run on $w$ of size at most $2^{2^{\poly(\sizeof{w},n)}}$, we let $\ell = c \times (\poly(\sizeof{w},n))^2$. Then by Lemmas \ref{lem:apx:pcpTM} and \ref{lem:apx:TMpcp}, we have 
	$\TM$ has an accepting run on $w$ of size at most $2^{2^{\poly(\sizeof{w},n)}}$ if and only if the \mpcp\ instance $\PCP$, $\ell$ has a solution.
\end{proof}

\subsection{Towards a reduction}

Our idea is to use 4 variables  $ x_1, x_2, x_f, x_g $. The membership constraint for $x_f$ is a PDA for the language $L_f = \{w \cdot\#^\ast \cdot f(w^r) \mid w \in \Sigma_1^\ast\}$, and that for  $x_g$ is a PDA for the language $L_g = \{w\cdot \#^\ast \cdot  g(w^r) \mid w \in \Sigma_1^\ast\}$. Recall that $w^r$  denotes the reverse of  $w$, and $\#$ is a special symbol not in $\Sigma_1$ or $\Sigma_2$. Suppose $x_1$ and $x_2$ are constrained to the language $\Sigma_1^m \#^n\Sigma_2^{2^{2^\ell}}$ such that $m+n = 2^{2^\ell}$, by  polynomial-sized constraints. Then with the relational constraints 1) $x_1 \subword x_f$ 2) $x_f \subword x_g$ and 3) $x_g \subword x_2$, we will achieve our reduction. Recall that $x \subword y$ is a short hand for $x \subword \shuffle{\multiset{(y, \idtransducer)}}$.  Indeed these constraints are satisfiable  if and only if the 2eBPCP has a solution. 

Notice that our constraints for $x_1$ and $x_2$ requires \textit{counting} exactly $2^{2^\ell}$. This is not possible with a polynomial-sized PDA.  In the above paragraph we did not use transducers either. Without transducers, the satisfiability problem of string constraints is not \twonexph, it is indeed in \nexp \cite{AiswaryaMS22}. 

However, with the help of $\ell$ many transducers (or FSA) of size $\mathcal{O}(\ell)$ we can have a PDA that counts $2^{2^\ell}$.  We will describe this technique with PDA and DFA in the next subsection, and in the following subsection using this idea, we complete the reduction.

\subsection{Counting $2^{2^\ell}$ using one PDA and $\ell$ DFA}
\newcommand{\bzero}{{\color{blue} 0}}
\newcommand{\bone}{{\color{blue} 1}}
\newcommand{\mr}{{\color{blue} {\rightarrow}}}
\newcommand{\ml}{{\color{blue} {\leftarrow}}}
\newcommand{\rzero}{{\color{red} 0}}
\newcommand{\rone}{{\color{red} 1}}

\newcommand{\modeswitch}{\textsf{switch}}
\newcommand{\modepush}{\textsf{push}}
\newcommand{\modepop}{\textsf{pop}}
\newcommand{\statepush}[1]{\langle #1, \modepush \rangle}
\newcommand{\stateswitch}[1]{\langle #1, \modeswitch\rangle}
\newcommand{\statepop}[1]{\langle #1, \modepop\rangle}

\newcommand{\inc}{\textsf{inc}}
\newcommand{\dec}{\textsf{dec}}
\newcommand{\eq}{a}
\newcommand{\bops}{\textsf{O}}
\newcommand{\val}{\textsf{val}}

\newcommand{\tikzfigBi}	{ 
	\begin{tikzpicture}[node distance=1.5cm, initial text=,>=latex, thick,]	
		
		\node[state,,,,inner sep=0pt,minimum size=12pt] (us0) {$s^i_0$};
		\node[state,right of=us0,inner sep=0pt,minimum size=12pt] (u1) {};
		\node[state,right of=u1,inner sep=0pt,minimum size=12pt] (u2) {};
		\node[rectangle, rounded corners, node distance = 0cm, right=of u2,  ] (ud1) {{$\cdots$}};
		\node[state,, right =0cm of ud1,inner sep=0pt,minimum size=12pt] (uo1) {};
		\node[state,,right of=uo1,inner sep=0pt,minimum size=12pt] (uo2) {};
		
		\node[state,above right of = uo2,minimum size=12pt] (uu1) {};
		\node[state, right  of = uu1,inner sep=0pt,,minimum size=12pt] (uu2) {};
		\node[state, right  of = uu2,inner sep=0pt,,minimum size=12pt] (uu3) {};
		\node[rectangle, rounded corners, node distance = 0cm, right=of uu3,  ] (ud2) {{$\cdots$}};
		\node[state,,, right = 0cm of ud2,inner sep=0pt,,minimum size=12pt] (ut1) {$t^i_0$};
		
		\node[state,right    of = uo2, node distance = 2.56cm ,minimum size=12pt] (um1) {};
		\node[state, right  of  = um1,inner sep=0pt,,minimum size=12pt] (um2) {};
		\node[rectangle, rounded corners, node distance = 0cm, right=of um2,  ] (ud3) {{$\cdots$}};
		\node[state,,, right = 0cm of ud3,inner sep=0pt,,minimum size=12pt] (ut2) {$t^i_1$};

		
		\node[state,below right  of = uo2,minimum size=12pt] (ub1) {};
		\node[state, right  of = ub1,inner sep=0pt,,minimum size=12pt] (ub2) {};
		\node[state, right  of =ub2,inner sep=0pt,,minimum size=12pt] (ub3) {};
		\node[rectangle, rounded corners, node distance = 0cm, right=of ub3,  ] (ud4) {{$\cdots$}};
		\node[state,,, right = 0cm of ud4,inner sep=0pt,,minimum size=12pt] (ut3) {$t^i_2$};
		
		
		\node[state,,below  = 3cm of us0,,inner sep=0pt,minimum size=12pt] (ds0) {$s^i_1$};
		\node[state,right of=ds0,inner sep=0pt,minimum size=12pt] (d1) {};
		\node[state,right of=d1,inner sep=0pt,minimum size=12pt] (d2) {};
		\node[rectangle, rounded corners, node distance = 0cm, right=of d2,  ] (dd1) {{$\cdots$}};
		\node[state,, right =0cm of dd1,inner sep=0pt,minimum size=12pt] (do1) {};
		\node[state,,right of=do1,inner sep=0pt,minimum size=12pt] (do2) {};
		
		\node[state,above right of = do2,minimum size=12pt] (du1) {};
		\node[state, right  of = du1,inner sep=0pt,,minimum size=12pt] (du2) {};
		\node[state, right  of = du2,inner sep=0pt,,minimum size=12pt] (du3) {};
		\node[rectangle, rounded corners, node distance = 0cm, right=of du3,  ] (dd2) {{$\cdots$}};
		\node[state,,, right = 0cm of dd2,inner sep=0pt,,minimum size=12pt] (dt1) {$t^i_3$};
		
		\node[state,right    of = do2, node distance = 2.56cm ,minimum size=12pt] (dm1) {};
		\node[state, right  of  = dm1,inner sep=0pt,,minimum size=12pt] (dm2) {};
		\node[rectangle, rounded corners, node distance = 0cm, right=of dm2,  ] (dd3) {{$\cdots$}};
		\node[state,,, right = 0cm of dd3,inner sep=0pt,,minimum size=12pt] (dt2) {$t^i_4$};

		
		\node[state,below right  of = do2,minimum size=12pt] (db1) {};
		\node[state, right  of = db1,inner sep=0pt,,minimum size=12pt] (db2) {};
		\node[state, right  of =db2,inner sep=0pt,,minimum size=12pt] (db3) {};
		\node[rectangle, rounded corners, node distance = 0cm, right=of db3,  ] (dd4) {{$\cdots$}};
		\node[state,,, right = 0cm of dd4,inner sep=0pt,,minimum size=12pt] (dt3) {$t^i_5$};

		\draw[->, blue ,inner sep=1.5pt] 
		(us0) edge node[above] { $\bzero,\bone$} (u1)
		(u1) edge node[above] { $\bzero,\bone$} (u2)
		(uo1) edge node[above] { $\bzero$} (uo2)
		(uo2) edge[bend left] node[above, sloped ] { $\bzero$} (uu1)
		(uu1) edge node[above,  ] { $\bzero$} (uu2)
		(uu2) edge node[above,  ] { $\bzero$} (uu3)
		(uu1) edge node[above,  sloped] { $\bone$} (um1)
		(uu2) edge node[above, sloped ] { $\bone$} (um2)
		(um1) edge node[above, at start, xshift=3mm  ] {$\bzero,\bone$} (um2)
		(uo2) edge[bend right] node[below, sloped ] { $\bone$} (ub1)
		(ub1) edge node[below,  ] { $\bone$} (ub2)
		(ub2) edge node[below,  ] { $\bone$} (ub3)
		(ub1) edge node[above, yshift=-0mm, xshift=-2mm, sloped ] { $\bzero$} (um1)
		(ub2) edge node[above, xshift=-2mm,sloped ] { $\bzero$} (um2)
		
		;
		\draw[->, blue, inner sep=1.5pt,] 
		(ds0) edge node[above] { $\bzero,\bone$} (d1)
		(d1) edge node[above] { $\bzero,\bone$} (d2)
		(do1) edge node[above] { $\bone$} (do2)
		(do2) edge[bend left] node[above, sloped ] { $\bzero$} (du1)
		(du1) edge node[above,  ] { $\bzero$} (du2)
		(du2) edge node[above,  ] { $\bzero$} (du3)
		(du1) edge node[above,  sloped] { $\bone$} (dm1)
		(du2) edge node[above, sloped ] { $\bone$} (dm2)
		(dm1) edge node[above, at start, xshift=3mm  ] {$\bzero,\bone$} (dm2)
		(do2) edge[bend right] node[below, sloped ] { $\bone$} (db1)
		(db1) edge node[below,  ] { $\bone$} (db2)
		(db2) edge node[below,  ] { $\bone$} (db3)
		(db1) edge node[above, yshift=-0mm, xshift=-2mm, sloped ] { $\bzero$} (dm1)
		(db2) edge node[above, xshift=-2mm,sloped ] { $\bzero$} (dm2)
		
		;

		
		\tikzset{node distance=1cm, fill=black, ,>=latex, thick}			
		\node[fill = black, , circle, inner sep = 0pt, minimum size = 0pt, outer sep = 0pt,  above  right of = ut1, yshift = -0cm] (ur) {};	
		
		\node[fill = black, , circle, inner sep = 0pt, minimum size = 0pt, outer sep = 0pt,  below right  of = dt3] (lr) {};
		
		\draw[->, brown] (ur) .. controls +(180:6) and +(70:2) .. node[near start, above ] {$\Gamma_2$} (us0);
		\draw[brown] (ut3) -| (ur);
		\draw[brown] (ut2) -| (ur); 
		\draw[brown ] (ut1) -| (ur);
		
		\draw[->, brown ] (lr) .. controls +(180:6) and +(290:2) .. node[near start, below  ] {$\Gamma_2$}  (ds0);
		\draw[brown] (dt1) -| (lr) ; 	
		\draw[brown] (dt2) -| (lr);
		\draw[brown] (dt3) -| (lr); 
		
		
		\tikzset{node distance=2cm, fill=black, ,>=latex, thick}			
		\node[fill = black, , circle, inner sep = 0pt, minimum size = 0pt, outer sep = 0pt,  above  right of = ut1, xshift=+3mm] (ur) {};	
		
		\node[fill = black, , circle, inner sep = 0pt, minimum size = 0pt, outer sep = 0pt,  below right  of = dt3, ] (lr) {};
		
		\draw[->, green!50!black] (ur) .. controls +(180:6) and +(90:2) .. node[above] {$\inc$} (us0);
		\draw[green!50!black] (dt3)+(0.2, 0.1) -| (ur);
		\draw[green!50!black] (ut2)+(0.2, 0.1) -| (ur); 
		\draw[green!50!black ] (ut1)+(0.2, 0.1) -| (ur);
		
		\draw[->, green!50!black ] (lr) .. controls +(180:6) and +(270:2) .. node[below] {$\inc$}  (ds0);
		\draw[green!50!black] (dt1)+(0.2, 0.1) -| (lr) ; 	
		\draw[green!50!black] (dt2)+(0.2, 0.1) -| (lr);
		\draw[green!50!black] (ut3)+(0.2, 0.1) -| (lr); 
		
		
		\tikzset{node distance=4cm, fill=black, ,>=latex, thick}			
		\node[fill = black, , circle, inner sep = 0pt, minimum size = 0pt, outer sep = 0pt,  above  right of = ut1, yshift = -0.5cm, ] (ur) {};	
		
		\node[fill = black, , circle, inner sep = 0pt, minimum size = 0pt, outer sep = 0pt,  below right  of = dt3, yshift = 0.5cm, xshift=+3mm] (lr) {};
		
		\draw[->, red] (ur) .. controls +(180:6) and +(110:2.5) .. node[near end, above] {$\dec$} (us0);
		\draw[red] (ut3)+(0.2, -0.1) -| (ur);
		\draw[red] (ut2)+(0.2, -0.1) -| (ur); 
		\draw[red] (dt1)+(0.2, -0.1) -| (ur);
		
		\draw[->, red] (lr) .. controls +(180:6) and +(250:2.5) .. node[near end, below] {$\dec$}  (ds0);
		\draw[red] (ut1)+(0.2, -0.1) -| (lr) ; 	
		\draw[red] (dt2)+(0.2, -0.1) -| (lr);
		\draw[red] (dt3)+(0.2, -0.1) -| (lr); 
		
	\end{tikzpicture}
}

\newcommand{\tikzfigAone}{\begin{tikzpicture}[node distance=2.2cm, fill=blue!10, initial text=,>=latex, thick]
	
	\node[state,fill=blue!10,,inner sep=0pt,minimum size=12pt] (us0) {$s^1_0$};

	\node[state,fill=blue!10,right = 1cm of us0,inner sep=0pt,minimum size=12pt] (u1) {};
	
	\node[state,fill=blue!10,,above right  of = u1,,inner sep=0pt,minimum size=12pt] (ut1) {$t^1_0$};

	\node[state,fill=blue!10,right of =  u1, node distance = 1.51cm ,,inner sep=0pt, ,minimum size=12pt] (ut2) {$t^1_1$};			
	\node[state,fill=blue!10,below right of =  u1,,inner sep=0pt,minimum size=12pt] (ut3) {$t^1_2$};
	\draw[->, inner sep=1pt] 
	(us0) edge node[above] {$ \bzero$} (u1)						
	(u1) edge[dotted, bend left] node[above,sloped] { $\bzero^{\ell-1}$} (ut1)
	(u1) edge[dotted, bend right] node[below,sloped] { $\bone^{\ell-1}$} (ut3)		
	;
	
	\node[state,fill=blue!10,below of=us0,node distance = 5cm,inner sep=0pt,minimum size=12pt] (ds0) {$s^1_1$};
	
	\node[state,fill=blue!10,right = 1cm of ds0,inner sep=0pt,minimum size=12pt] (d1) {};
	
	\node[state,fill=blue!10,,above right  of = d1,,inner sep=0pt,minimum size=12pt] (dt1) {$t^i_3$};
	\node[state,fill=blue!10,right of =  d1, node distance = 1.51cm ,,inner sep=0pt, ,minimum size=12pt] (dt2) {$t^1_4$};			
	\node[state,fill=blue!10,below right of =  d1,,inner sep=0pt,minimum size=12pt] (dt3) {$t^1_5$};
	
	\tikzset{node distance=1cm, fill=black, ,>=latex, thick}
	
	\node[draw = none, circle, minimum size=0pt,inner sep=0pt, outer sep = 0pt, right of = ut1] (ut11) {}; 
	\node[fill = black, , circle, inner sep = 0pt, minimum size = 0pt, outer sep = 0pt,  above  of = ut11, yshift = -0cm] (ur) {};
	
	\tikzset{node distance=1cm, fill=black, ,>=latex, thick}
	
	\node[draw = none, circle, minimum size=0pt,inner sep=0pt, outer sep = 0pt, right of = dt3] (dt33) {}; 
	\node[fill = black, , circle, inner sep = 0pt, minimum size = 0pt, outer sep = 0pt,  below  of = dt33] (lr) {};

	\draw[->] (ur) edge[bend right = 20] node[above] {$\Gamma_2$} (us0);
	\draw (ut3) -| (ur);
	\draw (ut2) -| (ur); 
	\draw (ut1) -| (ur);
	
	\draw[->] (lr) edge[bend left = 20] node[below] {$\Gamma_2$} (ds0);
	\draw (dt1) -| (lr) ; 	
	\draw (dt2) -| (lr);
	\draw (dt3) -| (lr); 
	
	\tikzset{node distance=1.5cm, fill=black, ,>=latex, thick}
	
	\node[draw = none, circle, minimum size=0pt,inner sep=0pt, outer sep = 0pt, right of = ut1] (ut11) {}; 
	\node[fill = black, , circle, inner sep = 0pt, minimum size = 0pt, outer sep = 0pt,  above  of = ut11, yshift = -0cm] (ur) {};
	
	\tikzset{node distance=1.5cm, fill=black, ,>=latex, thick, }
	
	\node[draw = none, circle, minimum size=0pt,inner sep=0pt, outer sep = 0pt, right of = dt3] (dt33) {}; 
	\node[fill = black, , circle, inner sep = 0pt, minimum size = 0pt, outer sep = 0pt,  below  of = dt33] (lr) {};

	\draw[->, green!70!black!70] (ur) edge[bend right = 40 ] node[above] {$\inc$} (us0);
	\draw[green!70!black!70] (ut2)+(0.2, 0.1) -| (ur); 
	\draw[green!70!black!70] (ut1)+(0.2,0.1) -| (ur);
	
	\draw[->, green!70!black!70] (lr) edge[bend left = 40]  node[below] {$\inc$} (ds0);
	\draw[green!70!black!70,] (ut3)+(0.2,0.1) -| (lr);
	\draw[green!70!black!70] (dt1)+(0.2,0.1) -| (lr) ; 	
	\draw[green!70!black!70] (dt2)+(0.2,0.1) -| (lr);
	
	\tikzset{node distance=2cm, fill=black, ,>=latex, thick}
	
	\node[draw = none, circle, minimum size=0pt,inner sep=0pt, outer sep = 0pt, right of = ut1] (ut11) {}; 
	\node[fill = black, , circle, inner sep = 0pt, minimum size = 0pt, outer sep = 0pt,  above  of = ut11, yshift = -0cm] (ur) {};

	\tikzset{node distance=2cm, fill=black, ,>=latex, thick, }
	
	\node[draw = none, circle, minimum size=0pt,inner sep=0pt, outer sep = 0pt, right of = dt3] (dt33) {}; 
	\node[fill = black, , circle, inner sep = 0pt, minimum size = 0pt, outer sep = 0pt,  below  of = dt33] (lr) {};

	\draw[->, red] (ur) edge[bend right = 60 ] node[above] {$\dec$} (us0);
	\draw[red] (ut2)+(0.2, -0.1) -| (ur); 
	\draw[red] (ut3)+(0.2,-0.1) -| (ur);
	\draw[red] (dt1)+(0.2,-0.1) -| (ur) ; 	
	
	\draw[->, red] (lr) edge[bend left = 60] node[below] {$\dec$} (ds0);
	\draw[red,] (dt3)+(0.2,-0.1) -| (lr);
	
	\draw[red] (dt2)+(0.2,-0.1) -| (lr);

	\draw[->, inner sep=1pt] 
	(ds0) edge node[above] {$ \bone$} (d1)						
	(d1) edge[dotted, bend left] node[above,sloped] { $\bzero^{\ell-1}$} (dt1)
	(d1) edge[dotted, bend right] node[below,sloped] { $\bone^{\ell-1}$} (dt3)
	;	
\end{tikzpicture}}

Let $\Gamma_1 = \{\bzero, \bone, \inc, \dec\}$, and let $\Gamma_2$ be another finite alphabet disjoint from $\Gamma_1$.   Our objective is to come up with a PDA $A$ and $\ell$ DFAs  $A_1, A_2, \dots A_\ell$ over the alphabet $\Gamma_1 \cup \Gamma_2$, each of size $\mathcal{O}(\ell)$ such that any word accepted by all of them (i.e., in  $ \cap_i L(A_i) \cap L(A)$) has $2^{2^\ell}$ occurrences of letters from $\Gamma_2$.

\medskip

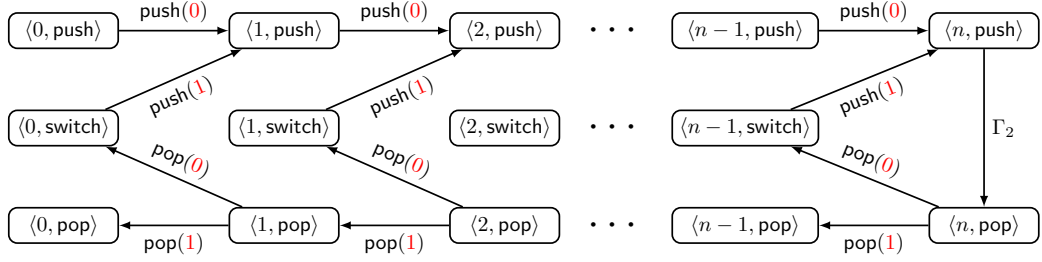
\begin{figure}
	\scalebox{0.8}{
		\begin{tikzpicture}[node distance=1.8cm, initial text=,>=latex, thick, minimum width=1.8cm]
		\node[rectangle, draw, rounded corners, ] (t0) {$\statepush{0}$};
		\node[rectangle, draw, rounded corners, right=of t0 ] (t1) {$\statepush{1}$};
		\node[rectangle, draw, rounded corners, right=of t1 ] (t2) {$\statepush{2}$};	
		\node[rectangle, rounded corners, node distance = 0cm, right=of t2,  ] (t3) {\scalebox{2}{$\cdots$}};
			\node[rectangle, draw, rounded corners,node distance = 0cm,  right=of t3 , minimum width=2.4cm ] (t4) {$\statepush{n-1}$};	
					\node[rectangle, draw, rounded corners, right=of t4 ] (t5) {$\statepush{n}$};	
		\node[rectangle, draw, rounded corners, node distance = 1cm,  below=of t0 ] (m0) {$\stateswitch{0}$};
\node[rectangle, draw, rounded corners, right=of m0 ] (m1) {$\stateswitch{1}$};
\node[rectangle, draw, rounded corners, right=of m1 ] (m2) {$\stateswitch{2}$};	
\node[rectangle, rounded corners, node distance = 0cm, right=of m2, ] (m3) {\scalebox{2}{$\cdots$}};
\node[rectangle, draw, rounded corners, node distance = 0cm, right=of m3, minimum width=2.4cm ] (m4) {$\stateswitch{n-1}$};
		\node[rectangle, draw, rounded corners, node distance = 1cm,  below=of m0,  ] (b0) {$\statepop{0}$};
\node[rectangle, draw, rounded corners, right=of b0 ] (b1) {$\statepop{1}$};
\node[rectangle, draw, rounded corners, right=of b1 ] (b2) {$\statepop{2}$};	
\node[rectangle, rounded corners, node distance = 0cm, right=of b2, ] (b3) {\scalebox{2}{$\cdots$}};
\node[rectangle, draw, rounded corners, node distance = 0cm, right=of b3, minimum width=2.4cm  ] (b4) {$\statepop{n-1}$};	
					\node[rectangle, draw, rounded corners, right=of b4 ] (b5) {$\statepop{n}$};	
	\draw[->] 
(t0) edge node[above] {$\push(\rzero)$} (t1)
(t1) edge node[above] {$\push(\rzero)$} (t2)
(t4) edge node[above] {$\push(\rzero)$} (t5)
(m0) edge[] node[below, sloped] {$\push(\rone)$} (t1)
(m4) edge node[below, sloped] {$\push(\rone)$} (t5)
(m1) edge node[below, sloped] {$\push(\rone)$} (t2)
(b2) edge node[above, sloped] {$\pop(\rzero)$} (m1)
(b1) edge node[above, sloped] {$\pop(\rzero)$} (m0)
(b5) edge node[below] {$\pop(\rone)$} (b4)
(b2) edge node[below] {$\pop(\rone)$} (b1)
(b1) edge node[below] {$\pop(\rone)$} (b0)
(b5) edge node[above, sloped] {$\pop(\rzero)$} (m4)
(t5) edge[] node[ minimum width = 0cm, right] {$\Gamma_2$} (b5)
;
	\end{tikzpicture}
}
	\caption{A PDA with $3n+2$ states that accepts  $(\Gamma_2)^{2^n}$.}\label{fig:exp-pda}
\end{figure}

First we give a PDA with $3n+2$ states and  stack symbols $\{\rzero, \rone\}$ that accepts $(\Gamma_2)^{2^n}$.
\begin{claim}
	There is a PDA with $3n+2$ states and  stack symbols $\{\rzero, \rone\}$ with stack-height never exceeding $n$ that accepts $(\Gamma_2)^{2^n}$.
\end{claim}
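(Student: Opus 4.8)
The plan is to build a PDA whose stack encodes a binary counter of $n$ bits, so that the number of distinct stack contents reachable is exactly $2^n$, and we produce exactly one $\Gamma_2$-letter per stack configuration. Looking at Figure~\ref{fig:exp-pda}, the intended construction has three "rows" of states: a push-row $\statepush{0},\dots,\statepush{n}$, a switch-row $\stateswitch{0},\dots,\stateswitch{n-1}$, and a pop-row $\statepop{0},\dots,\statepop{n}$, with stack symbols $\rzero,\rone$. The idea is that a stack content $w \in \{\rzero,\rone\}^n$ (read from top) represents a number $\val(w)$; the automaton enumerates the stacks $\rzero^n, \dots$ in the order of a standard binary increment, emitting a letter of $\Gamma_2$ from the top state $\statepush{n}$ after each full stack of height $n$ is assembled.

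\textbf{Key steps.} First I would pin down the semantics: on the push-row, from $\statepush{i}$ with $i<n$ we push $\rzero$ and move to $\statepush{i+1}$, which grows the current stack word by appending a $\rzero$ at the bottom-of-the-new-part (i.e., the portion being rebuilt); reaching $\statepush{n}$ means a height-$n$ stack has been (re)assembled, at which point we emit one $\Gamma_2$ and move to the pop-row $\statepop{n}$. On the pop-row, starting from $\statepop{n}$, we pop symbols one at a time: a $\pop(\rone)$ from $\statepop{i}$ goes to $\statepop{i-1}$ (this models the carry propagation of incrementing a binary counter: a trailing $\rone$ becomes $\rzero$ and we carry), while a $\pop(\rzero)$ from $\statepop{i}$ goes to the switch state $\stateswitch{i-1}$ (this is the lowest $\rzero$ bit, which becomes a $\rone$). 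From $\stateswitch{i}$ we do $\push(\rone)$ and go to $\statepush{i+1}$, then the push-row refills the rest with $\rzero$s up to height $n$. So one "cycle" pop-row $\to$ switch $\to$ push-row implements exactly one binary increment of the $n$-bit value sitting on the stack, and emits exactly one $\Gamma_2$ per value. The run starts at $\statepush{0}$, climbs to $\statepush{n}$ building $\rzero^n$ (value $0$), emits a $\Gamma_2$; then each subsequent increment up to value $2^n-1$ emits one more, for a total of $2^n$. Acceptance happens when the counter overflows: from $\statepop{n}$ (or wherever the all-$\rone$ stack is being drained) we can pop all $\rone$s down to $\statepop{0}$ with an empty stack and accept. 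I would make $\statepush{0}$ initial and $\statepop{0}$ final (and check that $\statepop{0}$ with empty stack is the only accepting configuration). Counting states: $(n{+}1) + n + (n{+}1) = 3n+2$, and two stack symbols, matching the claim. The stack height never exceeds $n$ by construction, since we only push from $\statepush{i}$ or $\stateswitch{i}$ with $i<n$, landing in $\statepush{i+1}$ with $i+1 \le n$, and the total height equals the index.

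\textbf{Correctness argument.} I would state and prove an invariant: whenever the run is in state $\statepush{i}$, the stack has height $i$ and its content (reading the already-fixed suffix) is a prefix-in-the-counter-sense of the value being built; more cleanly, every time the run visits $\statepush{n}$, the stack holds some $w\in\{\rzero,\rone\}^n$, and consecutive visits hold $w$ and the binary increment of $w$, starting from $\rzero^n$. This is a routine induction on the number of visits using the semantics above, the main point being that the pop-phase pops exactly the maximal block of trailing $\rone$s (via $\pop(\rone)$ staying on the pop-row) followed by one $\rzero$ (via $\pop(\rzero)$ diverting to the switch-row), which is precisely the "find lowest $0$, flip trailing $1$s to $0$, flip that $0$ to $1$" description of binary increment. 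Then: the number of $\Gamma_2$ emitted before reaching the accepting configuration is the number of values $0,1,\dots,2^n-1$, i.e., $2^n$; and since $\Gamma_2$ letters are the only letters read (all other transitions are $\epsilon$-labelled stack operations), $L(A) = (\Gamma_2)^{2^n}$.

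\textbf{Main obstacle.} The delicate point is the boundary behaviour — making sure the automaton accepts \emph{exactly} after $2^n$ emissions and not one too few or one too many, and that it cannot "cheat" by taking the accepting pop-path early or refilling with the wrong bits. In particular I need to verify that from $\statepop{i}$ there is no nondeterministic choice that lets the automaton skip an increment, and that the only way to reach the final configuration $(\statepop{0},\epsilon)$ is by draining an all-$\rone$ stack of height $n$, which forces all $2^n$ values to have been enumerated. Ensuring the stack symbol on each $\pop$ matches what was actually pushed (so the $\pop(\rzero)$ vs $\pop(\rone)$ branching genuinely reads the stored bit rather than guessing) is automatic from the PDA semantics but worth spelling out, since it is what ties the "shape" of the pop-phase to the stored counter value.
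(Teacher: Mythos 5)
Your construction is exactly the one the paper uses (indeed you read it off Figure~\ref{fig:exp-pda}): a three-row PDA whose stack is an $n$-bit binary counter, with the pop/switch/push cycle implementing increment and one $\Gamma_2$ letter emitted per visit to $\statepush{n}$, accepting at $\statepop{0}$ after draining the all-$\rone$ stack. The correctness invariant and the state count $(n{+}1)+n+(n{+}1)=3n+2$ match the paper's argument, so the proposal is correct and takes essentially the same approach.
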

\begin{proof}
	Such a PDA is depicted in Figure~\ref{fig:exp-pda}.  In this PDA the stack height never exceeds $n$. The PDA has three modes - a $\modepush$ mode where it keeps pushing $\rzero$s until the stack height is $n$, a $\modepop$ mode where it keeps popping the symbol $\rone$, and a $\modeswitch$ mode that switches from a $\modepop$ mode to $\modepush$  mode by replacing a $\rzero$ on the top of the stack  by a $\rone$.  The states then represent the current stack height and the mode. 
	
	When the PDA is in the state $\statepush{n}$, the stack contents represents an $n$-bit binary number. At this point it reads a symbol from $\Gamma_2$ and goes to the state $\statepop{n}$. From there, it does a sequence of transitions such that the next time it reaches $\statepush{n}$, the binary number in the stack would be incremented. For this it replaces the $\rzero\rone^m$ suffix with a $\rone\rzero^m$.
	
	The initial state is $\statepush 0$. The first time it reaches $\statepush n$, the stack content would be $\rzero^n$. The last time it reaches $\statepush n$ (or $\statepop n$) the stack content would be $\rone^n$, and from there it reaches $\statepop 0$ by popping the entire stack. The state $\statepop 0$ is accepting. Since this PDA reaches the state $\statepush{n}$ exactly once for every $n$-bit binary number, the number of $\Gamma_2$ symbols it reads is $2^n$. 
\end{proof}

\begin{figure}
	\scalebox{0.8}{
	\begin{tikzpicture}[node distance=1.8cm, initial text=,>=latex, thick, minimum width=1.8cm]
		\node[rectangle, draw, rounded corners, ] (t0) {$\statepush{0}$};
		\node[rectangle, draw, rounded corners, right=of t0 ] (t1) {$\statepush{1}$};
		\node[rectangle, draw, rounded corners, right=of t1 ] (t2) {$\statepush{2}$};	
		\node[rectangle, rounded corners, node distance = 0cm, right=of t2,  ] (t3) {\scalebox{2}{$\cdots$}};
		\node[rectangle, draw, rounded corners,node distance = 0cm,  right=of t3 , minimum width=2.4cm ] (t4) {$\statepush{n-1}$};	
		\node[rectangle, draw, rounded corners, right=of t4 ] (t5) {$\statepush{n}$};	
		\node[rectangle, draw, rounded corners, node distance = 1cm,  below=of t0 ] (m0) {$\stateswitch{0}$};
		\node[rectangle, draw, rounded corners, right=of m0 ] (m1) {$\stateswitch{1}$};
		\node[rectangle, draw, rounded corners, right=of m1 ] (m2) {$\stateswitch{2}$};	
		\node[rectangle, rounded corners, node distance = 0cm, right=of m2, ] (m3) {\scalebox{2}{$\cdots$}};
		\node[rectangle, draw, rounded corners, node distance = 0cm, right=of m3, minimum width=2.4cm ] (m4) {$\stateswitch{n-1}$};
						\node[rectangle, draw, rounded corners, right=of m4 ] (m5) {$\stateswitch{n}$};	
		\node[rectangle, draw, rounded corners, node distance = 1cm,  below=of m0,  ] (b0) {$\statepop{0}$};
		\node[rectangle, draw, rounded corners, right=of b0 ] (b1) {$\statepop{1}$};
		\node[rectangle, draw, rounded corners, right=of b1 ] (b2) {$\statepop{2}$};	
		\node[rectangle, rounded corners, node distance = 0cm, right=of b2, ] (b3) {\scalebox{2}{$\cdots$}};
		\node[rectangle, draw, rounded corners, node distance = 0cm, right=of b3, minimum width=2.4cm  ] (b4) {$\statepop{n-1}$};	
		\node[rectangle, draw, rounded corners, right=of b4 ] (b5) {$\statepop{n}$};	
		\draw[->] 
		(t0) edge node[above] {$\push(\rzero)$} (t1)
		(t1) edge node[above] {$\push(\rzero)$} (t2)
		(t4) edge node[above] {$\push(\rzero)$} (t5)
		(m0) edge[] node[below, sloped] {$\push(\rone)$} (t1)
		(m4) edge node[below, sloped] {$\push(\rone)$} (t5)
		(m1) edge node[below, sloped] {$\push(\rone)$} (t2)
		(b2) edge node[above, sloped] {$\pop(\rzero)$} (m1)
		(b1) edge node[above, sloped] {$\pop(\rzero)$} (m0)
		(b5) edge node[below] {$\pop(\rone)$} (b4)
		(b2) edge node[below] {$\pop(\rone)$} (b1)
		(b1) edge node[below] {$\pop(\rone)$} (b0)
		(b5) edge node[above, sloped] {$\pop(\rzero)$} (m4)
		(t5) edge[] node[ minimum width = 0cm, right] {$\Gamma_2$} (m5)
				(m5) edge[] node[ minimum width = 0cm, right] {$\Gamma_2$} (b5)
		;
	\end{tikzpicture}
}
	\caption{A PDA with $3n+3$ states that accepts  $(\Gamma_2)^{2^{n+1}}$.}\label{fig:exp-pda-1}
\end{figure}
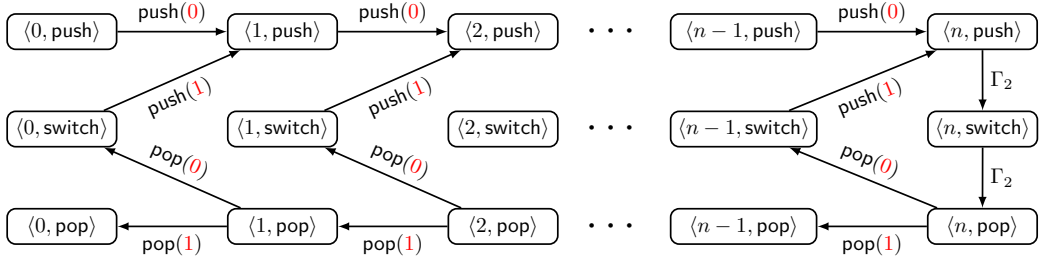

If $n = 2^\ell$ we will be able to get $2^{2^\ell}$ length words from $\Gamma_2$ as we wanted. However, we are allowed to use only $\mathcal{O}(\ell)$ states. To overcome this, we will use $\ell$ length binary numbers to indicate the current stack height. We then use $\ell$ DFAs, one for each bit, to update the binary numbers representing the stack height as required. We describe this below. 

There is a small caveat that we need to address first. Since we will be using $\ell$ bit numbers for representing the stack height, the maximum height we can faithfully represent is $2^\ell -1$. With this, we would only get $ (\Gamma_2)^{2^{\ell}-1}$ and not the desired $(\Gamma_2)^{2^{\ell}}$. Hence we modify the PDA in Figure~\ref{fig:exp-pda} to accept words of length $2^{n+1}$ by adding an extra state.  This new PDA  is depicted  in Figure~\ref{fig:exp-pda-1}. It reads two $\Gamma_2$ symbols each time a different $n$-bit number is present in the stack.  Thus the language of this $3n+3$ state PDA is $(\Gamma_2)^{2^{n+1}}$. 
\begin{claim}
	There is a PDA with $3n+3$ states and  stack symbols $\{\rzero, \rone\}$ with stack-height never exceeding $n$ that accepts $(\Gamma_2)^{2^{n+1}}$.
\end{claim}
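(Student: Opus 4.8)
The plan is to take the PDA $A'$ depicted in Figure~\ref{fig:exp-pda-1}, which is obtained from the PDA $A$ of Figure~\ref{fig:exp-pda} (used in the previous claim) by inserting a single new state $\stateswitch{n}$ between $\statepush{n}$ and $\statepop{n}$: the lone input-reading transition $\statepush{n}\movesto{\Gamma_2}\statepop{n}$ of $A$ is replaced by the two transitions $\statepush{n}\movesto{\Gamma_2}\stateswitch{n}$ and $\stateswitch{n}\movesto{\Gamma_2}\statepop{n}$, both performing a $\nop$ on the stack. Counting states, $A'$ has $n+1$ states in each of the three modes ($\modepush$, $\modeswitch$, $\modepop$), hence $3n+3$ in total, and its stack alphabet is still $\{\rzero,\rone\}$.

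First I would re-establish the structural invariant of the previous claim, now for $A'$. Along any run, pushes occur only in $\modepush$ mode and only while the height is below $n$, so the stack height never exceeds $n$. The configurations whose control is $\statepush{n}$ and those whose control is $\statepop{n}$ alternate, and a single round trip from $\statepush{n}$ back to $\statepush{n}$ proceeds by: reading two $\Gamma_2$ letters while passing through $\stateswitch{n}$; then in $\modepop$ mode popping a block $\rone^m$; then in $\modeswitch$ mode rewriting the exposed $\rzero$ to $\rone$; then in $\modepush$ mode refilling with $\rzero$'s up to height $n$. Thus each round trip replaces the suffix $\rzero\rone^{m}$ of the stack by $\rone\rzero^{m}$, i.e. increments the stack contents read as an $n$-bit binary number. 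The run begins at $\statepush{0}$ with empty stack, so its first visit to $\statepush{n}$ has stack $\rzero^{n}$; its last visit to $\statepush{n}$ (equivalently $\statepop{n}$) has stack $\rone^{n}$, after which it empties the stack down to the unique accepting state $\statepop{0}$. Inserting $\stateswitch{n}$ as a pure pass-through changes none of this, since it is a $\nop$ transition that does not alter which stack contents are reachable at $\statepush{n}$.

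Next I would read off the language. The only input-consuming transitions of $A'$ are $\statepush{n}\movesto{\Gamma_2}\stateswitch{n}$ and $\stateswitch{n}\movesto{\Gamma_2}\statepop{n}$; all other transitions are $\epsilon$-labelled and merely manipulate the stack. Hence along an accepting run the number of consumed letters equals twice the number of visits to $\statepush{n}$, which by the invariant is $2\cdot 2^{n}=2^{n+1}$, and every consumed letter lies in $\Gamma_2$; so $L(A')\subseteq(\Gamma_2)^{2^{n+1}}$. Conversely, for any $w\in(\Gamma_2)^{2^{n+1}}$ there is an accepting run spelling $w$: start with the $\epsilon$-moves of $\modepush$ from $\statepush{0}$ to reach $\statepush{n}$ with stack $\rzero^n$, and then for each of the $2^n$ values of the $n$-bit counter consume the next two letters of $w$ while passing $\statepush{n}\to\stateswitch{n}\to\statepop{n}$, perform the increment via $\modepop/\modeswitch/\modepush$, and finally, from the configuration with stack $\rone^n$, pop down to the accepting state $\statepop{0}$. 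Hence $L(A')=(\Gamma_2)^{2^{n+1}}$.

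I expect no real obstacle here: the argument is the one already carried out for the previous claim, and the only thing to check carefully is the (mild) point that splitting the single $\Gamma_2$-transition at the top of the counter into two $\Gamma_2$-transitions through $\stateswitch{n}$ doubles the output length per counter value while leaving the counting invariant, the stack-height bound, and the set of reachable stack contents untouched. That gives exactly $2^{n+1}$ letters of $\Gamma_2$ with $3n+3$ states, as claimed.
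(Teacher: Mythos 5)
Your construction is exactly the one the paper uses: Figure~\ref{fig:exp-pda-1} is obtained from Figure~\ref{fig:exp-pda} precisely by inserting the state $\stateswitch{n}$ and splitting the single $\Gamma_2$-transition into two $\nop$ transitions through it, so that two letters are read per value of the $n$-bit counter. Your verification of the counting invariant and the state count is correct and matches the paper's (largely implicit) argument, so there is nothing to add.
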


We will next describe the $\ell$ DFAs that succinctly record the stack height. We then give a PDA that, along with these $\ell$ DFAs,  accepts words with $2^{2^\ell}$ occurrences of letters from $\Gamma_2$. 

\medskip

Let $\{\inc, \dec\}$ disjoint from $\Gamma_2$ be the increment and the decrement operators on integers. Further, we may treat symbols from $\Gamma_2$ as `keep unchanged' operators. That is, $\inc(n) = n+1$, $\dec(n) = n-1$ and $\eq(n) = n$ for all $\eq \in \Gamma_2$.   Consider the following language over the alphabet $\Gamma = \{\bzero, \bone, \inc, \dec\} \cup \Gamma_2$ of alternating sequences of $\ell$-bit numbers and operators, where each operator when applied on the previous number gives the next number. Here, the binary numbers are written with the most-significant bit on the left. That is $\val(b_{1}b_{2} \dots b_\ell) = \sum_i b_i\times 2^{\ell-i}$. 
\begin{eqnarray*}
	L_\ell = \{n_0o_1n_1o_2n_2 \cdots o_kn_k &\mid&k \ge 0, n_i \in (\bzero+\bone)^\ell \textrm { for all }i: 0 \le i \le k \\
	&&	o_i \in \{\inc, \dec\} \cup \Gamma_2 \textrm{  for all }i: 0 < i \le k\\
	&& \val(n_{i}) \equiv o_{i} (\val(n_{i-1}))\mod 2^\ell \textrm{ for all }i:  0 < i \le k\}
\end{eqnarray*}

\begin{claim}
	There are  $\ell$ DFAs $B_1, B_2, \dots B_\ell$, each with $\mathcal{O}(\ell)$ states such that $L_\ell = \bigcap_i L(B_i)$. 
\end{claim}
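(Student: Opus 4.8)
The plan is to build, for each bit position $j\in\nset{\ell}$, a deterministic automaton $B_j$ that scans the input left to right, checks that it has the right shape (an alternating sequence of $\ell$-bit binary blocks and single operator symbols, beginning and ending with a block), and additionally verifies that bit $j$ of $n_i$ is the $j$-th bit of $o_i$ applied to $n_{i-1}$, for every $i$. Since $\val(n_i)\equiv o_i(\val(n_{i-1}))\bmod 2^\ell$ holds exactly when all $\ell$ bits of $n_i$ agree with those of the number obtained from $n_{i-1}$ by $o_i$, we will then have $L_\ell=\bigcap_{j\in\nset{\ell}}L(B_j)$.

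The point that makes $\mathcal{O}(\ell)$ states suffice is the following observation about binary arithmetic. Write a block as $b_1b_2\cdots b_\ell$ with $b_1$ the most significant bit, so that $\inc$ and $\dec$ act at position $\ell$ and the carry/borrow propagates to the left. The $j$-th bit of the block obtained by applying $\inc$ is the exclusive-or of $b_j$ with the carry into position $j$, and that carry equals $1$ precisely when $b_{j+1},\dots,b_\ell$ are all $1$; dually, applying $\dec$ yields at position $j$ the exclusive-or of $b_j$ with the borrow into position $j$, which equals $1$ precisely when $b_{j+1},\dots,b_\ell$ are all $0$; and applying a symbol of $\Gamma_2$ (a ``keep unchanged'' operator) leaves bit $j$ equal to $b_j$. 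Thus, to know the expected value of bit $j$ in the next block, $B_j$ needs only three bits of information about the current block: the value of $b_j$, a flag $A$ recording ``all of $b_{j+1},\dots,b_\ell$ seen so far are $1$'', and a flag $Z$ recording ``all of $b_{j+1},\dots,b_\ell$ seen so far are $0$''. Crucially, although the carry into position $j$ depends on lower-order bits that $B_j$ reads only \emph{after} bit $j$, $B_j$ does not need that carry until it has finished the current block and is about to commit to the expectation for the next one, so a single left-to-right pass suffices.

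Concretely, a state of $B_j$ records: a counter $p\in\nset{\ell}$ for the position of the bit about to be read inside the current block, together with a flag saying whether the next symbol is instead expected to be an operator; the value $b_j$ once position $j$ has been passed (and $\bot$ before that); the flags $A$ and $Z$ restricted to the part of the current block seen so far; and the ``expected bit $j$'' value in $\{0,1,\bot\}$ carried over from the previous block ($\bot$ while still inside the first block, where nothing is checked). On reading a bit inside a block, $B_j$ increments $p$, updates $b_j,A,Z$ accordingly, and --- when $p=j$ and it is not in the first block --- checks that the bit read equals the stored expectation, rejecting otherwise; at $p=\ell$ it moves to the ``expect operator'' state. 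On reading an operator $o$ it sets the new expected bit $j$ to $b_j$ if $o\in\Gamma_2$, to $b_j\oplus A$ if $o=\inc$, and to $b_j\oplus Z$ if $o=\dec$, resets $p,b_j,A,Z$, and continues. Any symbol violating the shape (an operator where a bit is expected, a bit where an operator is expected, or the input ending in the middle of a block) sends $B_j$ to a rejecting sink; a state is accepting iff it is the ``expect operator'' state, i.e.\ a full block has just been completed. The counter contributes $\Theta(\ell)$ states and the remaining components contribute only a constant factor, so $\sizeof{B_j}=\mathcal{O}(\ell)$.

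Correctness is then routine: each $B_j$ accepts $w$ iff $w$ has the prescribed shape $n_0o_1n_1\cdots o_kn_k$ and, by the arithmetic observation above, bit $j$ of $n_i$ equals bit $j$ of the number obtained from $n_{i-1}$ by $o_i$ for all $i$; intersecting over all $j\in\nset{\ell}$ forces equality of all $\ell$ bits, i.e.\ $\val(n_i)\equiv o_i(\val(n_{i-1}))\bmod 2^\ell$, which is exactly membership in $L_\ell$. The main thing to get right in the full proof is the carry/borrow bookkeeping and its interaction with the left-to-right reading order, which is precisely what the observation in the second paragraph isolates.
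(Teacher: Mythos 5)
Your proposal is correct and follows essentially the same route as the paper: one DFA per bit position, exploiting the fact that under $\inc$ (resp.\ $\dec$) bit $j$ toggles iff all less significant bits are $1$ (resp.\ $0$), tracked by a position counter plus constant-size flags, exactly as in the paper's automaton $B_i$ with its all-ones/all-zeros branches. The only difference is presentational (a single-initial-state DFA with explicit flag bookkeeping versus the paper's two-fork picture).
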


\begin{proof}
	We describe  the $\ell$ DFAs $B_1, B_2, \dots B_\ell$  below. 
	
	The $i$th DFA $ B_i$ guarantees that the $i$th bit takes the correct value. This DFA is depicted in the Figure~\ref{fig:Aut}.
	The automaton has two disconnected `forks' (the top one starting at $s^i_0 $ and the bottom one starting at $s^i_1 $). In the top fork, the $i$th bit read is always $\bzero$, and in the bottom fork the $i$th bit read is always $\bone$. 
	Consider $n_{j-1} o_{j} n_{j}$ occurring in the above sequence. Let $n_{j-1} = b_{1}b_{2} \dots b_\ell$ and let $n_{j } = b'_{1}b'_{2} \dots b'_\ell$. If $o_{j}$ is $\inc$, the $i$th bit $b_i$ is toggled ($b'_i \neq b_i$) iff $b_{m} = 1$ for all $m: m > i$. If $o_{j}$ is $\dec$, the $i$th bit $b_i$ is toggled ($b'_i \neq b_i$) iff $b_{m} = 0$ for all $m: m > i$. If $o_{j} \in \Gamma_2$ the $i$th bit is never toggled. The initial states are  $s^i_0$ and $s^i_1$, and the accepting states are $t^i_0, \dots t^i_5$. Clearly $L_\ell = \bigcap_i L(B_i)$. 
\end{proof}

\begin{figure}
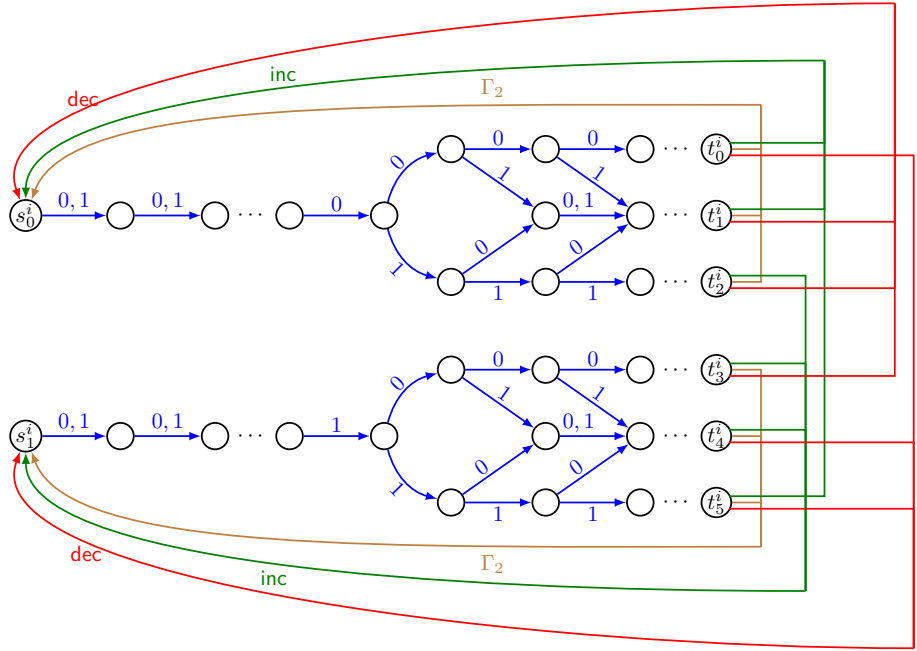

	\centering
\scalebox{.83}{\tikzfigBi}
	\caption{The automaton $B_i$. The transitions on $\{\bzero, \bone\} $ are depicted in blue. The transitions on $\Gamma_2$ (resp. $\inc$, $\dec$) are depicted in brown (resp. green, red).   } \label{fig:Aut}
\end{figure}

	However, for succinctly simulating the PDA given in Figure~\ref{fig:exp-pda-1},  we need the $\ell$ DFAs to faithfully reflect the stack height. For this we consider a slight modification of $L_\ell$. 
	\begin{eqnarray*}
		L'_\ell = \{ n_0o_1n_1o_2n_2 \cdots o_kn_k &\mid & k \ge 0, n_i \in (\bzero+\bone)^\ell \textrm { for all }i: 0 \le i \le k \\
		&&	o_i \in \{\inc, \dec\} \cup \Gamma_2 \textrm{  for all }i: 0 < i \le k\\ &&\val(n_{i}) \equiv o_{i} (\val(n_{i-1})) \textrm{ for all } i:  0 < i \le k\\
		&& n_0 = \bzero^\ell = n_k, \textrm{ if } o_i = \inc \textrm{ then } n_{i-1} \neq \bone^\ell  \\
		&& \textrm{ if } o_i = \dec \textrm{ then } n_{i-1} \neq \bzero^\ell\}
	\end{eqnarray*}
	This ensures that the PDA starts and ends with an empty stack. Further $\inc$ after $1^\ell$ and $\dec$ after $0^\ell$ are forbidden. Otherwise, the value will not  faithfully represent the stack height.

	\begin{claim}\label{claim:countingDFA}
		There are  $\ell$ DFAs $A_1, A_2, \dots A_\ell$, each with $\mathcal{O}(\ell)$ states such that $L'_\ell = \bigcap_i L(A_i)$. 
	\end{claim}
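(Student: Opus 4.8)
The plan is to build on the previous claim, where we already constructed $\ell$ DFAs $B_1,\dots,B_\ell$ recognizing $L_\ell$ (each enforcing correctness of the $i$th bit under the operators $\inc$, $\dec$, and $\Gamma_2$), and strengthen each of them into a DFA $A_i$ so that the intersection carves out $L'_\ell$ instead. Comparing the definitions of $L_\ell$ and $L'_\ell$, the new constraints are: (a) the arithmetic is now over $\mathbb{Z}$ rather than modulo $2^\ell$, which only matters at the boundaries; (b) $n_0 = \bzero^\ell$; (c) $n_k = \bzero^\ell$; (d) $\inc$ is forbidden immediately after $\bone^\ell$; and (e) $\dec$ is forbidden immediately after $\bzero^\ell$. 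Note that once (d) and (e) hold, no overflow or underflow ever occurs, so modular and non-modular arithmetic coincide along any accepted word; hence (a) is automatically subsumed once we enforce (d) and (e). So it suffices to additionally enforce (b), (c), (d), (e) on top of what the $B_i$ already guarantee.

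The key observation is that each of the four remaining conditions is itself a conjunction of $\ell$ per-bit conditions, so each can be distributed across the $\ell$ automata. First I would have $A_i$ refine $B_i$ by also tracking, between consecutive number-blocks, whether the currently-read $\ell$-bit number is all-ones and whether it is all-zeros; concretely, $A_i$ remembers the $i$th bit of the current number together with a shared ``running AND'' and ``running NOR'' flag accumulated as the $\ell$ bits stream past. Then: for (b), $A_i$ simply rejects if its very first bit read is $\bone$; for (c), $A_i$ only accepts if its last bit read was $\bzero$ — this is easy since $A_i$ already knows its own $i$th bit of the final block; for (d), when the operator $\inc$ is about to be read, $A_i$ rejects if the flag recording ``all previous bits were $1$'' is set — but a single $A_i$ cannot see all bits, so instead each $A_i$ checks only that \emph{its} bit of the preceding block was $1$ is \emph{not} the witness of non-all-ones; more precisely, we let $A_i$ reject an $\inc$-transition exactly when bit $i$ of the preceding number is $\bzero$? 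No — that would forbid too much. The correct distribution is: the event ``$n_{i-1} = \bone^\ell$'' equals $\bigwedge_j (\text{bit } j \text{ is } \bone)$, which is \emph{not} a conjunction we can test by a union of failures. So instead I would enforce (d) globally by a single additional small DFA (or fold it into $A_1$), which reads the $\ell$-bit block, computes the running AND, and rejects if $\inc$ follows a block whose AND is $1$; symmetrically for (e) with a running OR and $\dec$. This extra bookkeeping costs only $O(\ell)$ states and can be merged into, say, $A_1$ (or distributed as two dedicated automata, still $O(\ell)$ each, keeping the count at $\ell$ by absorbing $B_{\ell-1},B_\ell$ if needed — or simply stating ``$O(\ell)$ many DFAs each of $O(\ell)$ states''). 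Finally, conditions (b) and (c) genuinely \emph{are} per-bit conjunctions, so they distribute cleanly: $A_i$ locally checks that bit $i$ of $n_0$ is $\bzero$ and that bit $i$ of $n_k$ is $\bzero$.

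Concretely, then, I would define each $A_i$ as the product of $B_i$ with a small gadget: the gadget has a ``start'' phase rejecting if the first bit (position $i$ within the first block) is $\bone$; it passes through the middle unchanged except that $A_1$ additionally maintains the running-AND and running-OR of each block and blocks the forbidden $\inc/\dec$ transitions; and it enters an accepting sink only if the $i$th bit of the final block was $\bzero$. Since $B_i$ has $O(\ell)$ states and all gadget information (phase, one or two Boolean flags, position counter mod $\ell$) adds only an $O(\ell)$ factor — actually only an $O(1)$ factor on top of the position counter $B_i$ already carries — each $A_i$ still has $O(\ell)$ states. Then $\bigcap_i L(A_i) = \bigl(\bigcap_i L(B_i)\bigr) \cap (\text{b}) \cap (\text{c}) \cap (\text{d}) \cap (\text{e}) = L_\ell \cap (\text{b,c,d,e}) = L'_\ell$, using the observation above that (d)+(e) makes modular and integer arithmetic agree. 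I expect the main obstacle to be exactly this distribution issue: the ``forbidden overflow/underflow'' conditions (d) and (e) are universally-quantified over bits and therefore do not split into a union of local per-automaton failures, so one must either dedicate one or two auxiliary $O(\ell)$-state automata to them or fold them into a single $A_i$ that is permitted to read all bits of each block — a routine but slightly delicate construction that must be spelled out carefully to keep the state count linear in $\ell$.
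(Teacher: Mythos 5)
Your proposal is correct and follows essentially the same route as the paper: keep the $B_i$, restrict the initial and final states per bit to force $n_0=n_k=\bzero^\ell$, observe that forbidding overflow/underflow makes the modular and exact conditions coincide, and enforce the two ``forbidden operator'' conditions in a single automaton because they are conjunctions over all bits and do not distribute. The one place you diverge is in how (d) and (e) are realized: you bolt a running-AND/running-OR gadget onto $A_1$ via a product construction, whereas the paper notices that $B_1$ already contains this information for free --- for $i=1$ the state $t^1_5$ is reached exactly when bit $1$ is $\bone$ and all later bits are $\bone$, i.e.\ when the number just read is $\bone^\ell$, and dually $t^1_0$ is reached exactly on $\bzero^\ell$ --- so it suffices to delete the outgoing $\inc$ transition from $t^1_5$ and the outgoing $\dec$ transition from $t^1_0$, with no new states at all. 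So the ``main obstacle'' you flag is real (the all-ones test cannot be split across the $A_i$) but dissolves once one sees that $B_1$'s fork structure already computes the needed AND/OR; your explicit gadget is a correct, if slightly heavier, substitute that still stays within $\mathcal{O}(\ell)$ states.
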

\begin{proof}
	The states of $A_i$ are exactly those of $B_i$. For $i\ge 2$, the transitions of $A_i$ is exactly the same as that of $B_i$. The transitions for $A_1$ is obtained by removing two transitions from that of $B_1$, namely the outgoing $\inc$ transition from $t^1_5$ and the outgoing $\dec$ transition from $t^1_0$. For all $i\ge 1$, the initial state of $A_i$ is $s^i_0$ and the final state is $t^i_0$. 
\end{proof}

\begin{claim}\label{claim:countingPDA}
	There is a PDA  $A$ with 3 states and stack symbols $\{\rzero, \rone\}$ such that $L(A) \cap L'_\ell$ when projected to $\Gamma_2$ is exactly $(\Gamma_2)^{2^{2^\ell}}$.
\end{claim}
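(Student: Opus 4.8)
The plan is to turn the PDA of Figure~\ref{fig:exp-pda-1} into a $3$-state machine by collapsing its states. There the subscript of a state only records the current stack height, and by Claim~\ref{claim:countingDFA} every word of $L'_\ell$ already carries this height faithfully in its $\ell$-bit blocks $n_i$: the blocks begin and end at $\bzero^\ell$, the operators $\inc$, $\dec$ and the $\Gamma_2$-letters move the encoded value by $+1$, $-1$, $0$ exactly as $\push$, $\pop$, $\nop$ move a stack pointer, and the value never overflows $2^\ell-1$ or underflows. So I would keep only the three mode states --- call them $p$ ($\push$-mode, initial), $r$ ($\modeswitch$-mode), $q$ ($\pop$-mode, accepting) --- over stack alphabet $\{\rzero,\rone\}$: every state has a $\nop$-self-loop on each $\{\bzero,\bone\}$-letter (so $A$ reads past the blocks); on $\inc$, $p$ does $\push(\rzero)$ and stays while $r$ does $\push(\rone)$ and moves to $p$; on $\dec$, $q$ does $\pop(\rone)$ and stays while $\pop(\rzero)$ moves it to $r$; and on a $\Gamma_2$-letter, $p\to r$ and $r\to q$ by $\nop$. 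Then $\statesize{A}=3$ as required.

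I would prove the claimed equality by two inclusions. For $\supseteq$: given $v\in(\Gamma_2)^{2^{2^\ell}}$, read $w$ off the intended binary-counter run of Figure~\ref{fig:exp-pda-1} instantiated with $n=2^\ell-1$ --- emit $\inc$ (resp.\ $\dec$) for each push (resp.\ pop), copy the next letter of $v$ on each $\Gamma_2$-move, and append the $\ell$-bit code of the new height after every operator. Then $w\in L(A)\cap L'_\ell$ and $\proj{w}{\Gamma_2}=v$ has the right length, since that PDA reaches its top configuration exactly once per $(2^\ell-1)$-bit counter value and reads two $\Gamma_2$-symbols there, i.e.\ $2\cdot 2^{2^\ell-1}=2^{2^\ell}$ in total; note this uses that $A$ and the $A_i$ are blind to which $\Gamma_2$-letter is read, so every $v$ of this length is realisable. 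For $\subseteq$: take $w\in L(A)\cap L'_\ell$; using $w\in L'_\ell$ and the correspondence $\inc/\dec/\Gamma_2 \leftrightarrow \push/\pop/\nop$, an induction on the operators of $w$ shows that the $i$-th block of $w$ is exactly the binary code of $A$'s stack height after its $i$-th operator and that this height stays in $[0,2^\ell-1]$. It then remains to show that the operator subword of any accepting run of $A$ must be a \emph{genuine} increment-counter run --- reaching height $2^\ell-1$ for each of the $2^{2^\ell-1}$ stack-counter values and reading exactly two $\Gamma_2$-symbols at each --- so that $|\proj{w}{\Gamma_2}|=2^{2^\ell}$.

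The main obstacle is this last step, i.e.\ ruling out ``shortcut'' accepting runs: a run that takes a $\Gamma_2$-move ($p\to r$ or $r\to q$) before the stack reaches height $2^\ell-1$, or that does a partial increment and then drains the stack, still ends in $q$ with empty stack but has too few $\Gamma_2$-letters. The naive collapse of the first paragraph does admit such runs, so $A$ must be refined so that a $\Gamma_2$-move is enabled only in a configuration that is certified to be at maximal height; the certificate has to be extracted from the $\ell$-bit block just read --- which the DFAs pin to the height --- rather than from the stack top, which is a single symbol and cannot witness the height. Once the $\Gamma_2$-moves are gated so that they force the adjacent block to be $\bone^\ell$, the desired rigidity follows: the stack content, read as a binary numeral, can only be incremented (never reset), it starts at $\rzero^{2^\ell-1}$, ends at $\rone^{2^\ell-1}$, and $A$ can return to $q$ with empty stack only after the numeral has passed through all $2^{2^\ell-1}$ values, each contributing two $\Gamma_2$-letters --- which is the core of the $\subseteq$ direction. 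Implementing this gating within only $3$ states, and then formalising the monotone-counter invariant it enables, is where essentially all the difficulty lies.
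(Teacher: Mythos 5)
Your machine is the paper's machine: the PDA of Figure~\ref{fig:dexpPDA} is exactly the three-mode collapse you describe, with the $\ell$-bit blocks of $L'_\ell$ carrying the height and the pruned $\inc$-edge of $A_1$ forcing the two $\Gamma_2$-reads at $\bone^\ell$; your $\supseteq$ direction is also the paper's. But your argument for $\subseteq$ is not a proof. You correctly observe that the naive collapse admits shortcut accepting runs --- for instance $\bzero^\ell\, a\, \bzero^\ell\, a\, \bzero^\ell$ with $a\in\Gamma_2$ lies in $L'_\ell$, and $A$ accepts it without ever touching the stack, ending in $\langle\modepop\rangle$ with empty stack (which must be the accepting configuration for the intended runs to be accepted at all), so $aa$ would appear in the projection --- and you then state that the $\Gamma_2$-moves ``must be gated'' to maximal height and that implementing the gate ``is where essentially all the difficulty lies.'' Identifying the obstacle is not overcoming it: the inclusion $\subseteq$, which is the entire content of the claim, is missing from your submission. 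The resolution is also not where you are looking for it. A three-state PDA over the stack alphabet $\{\rzero,\rone\}$ cannot test its own height, so the gate cannot be implemented inside $A$; it has to be pushed into the finite-state side, i.e.\ one strengthens $L'_\ell$ with the clause ``if $o_i\in\Gamma_2$ then $n_{i-1}=\bone^\ell$,'' which is a pruning of the $\Gamma_2$-transitions of $A_1$ (keep only the one leaving $t^1_5$) entirely analogous to the pruning already performed there for $\inc$ and $\dec$. With that clause your monotone-increment invariant does close: $\Gamma_2$-letters occur only at height $2^\ell-1$, they come in pairs, each pair is followed by a genuine binary increment of the stack word, and the run can only terminate from $\rone^{2^\ell-1}$, giving exactly $2\cdot 2^{2^\ell-1}$ occurrences.

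You should also know that the paper's own proof of this claim is precisely the naive collapse you distrust: it argues only that $\inc$ is disabled after $\bone^\ell$, hence that the two $\Gamma_2$-reads are forced \emph{there}, and it never excludes a $\Gamma_2$-edge being taken earlier at a lower counter value. So the gap you isolated is genuine and is present in the paper as well; your diagnosis is valuable, but a diagnosis plus an acknowledged unsolved difficulty does not establish the claim.
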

\noindent
\begin{minipage}{0.4\textwidth}
	\scalebox{1}{
	\begin{tikzpicture}
		\node[rectangle, draw, rounded corners, ] (t0) {$\modepush$};
		\node[rectangle, draw, rounded corners, below=of t0 ] (t1) {$\modeswitch$};
		\node[rectangle, draw, rounded corners, below=of t1 ] (t2) {$\modepop$};	
		
		\draw[->] 
		(t1) edge[bend left] node[left] { $\inc\!\mid\!\push(\rone)$} (t0)
		(t2) edge [bend left]node[left] { $\dec\!\mid\!\pop(\rzero)$} (t1)
		(t0) edge[bend left] node[right] { $\Gamma_2$} (t1)
		(t1) edge [bend left]node[right] { $\Gamma_2$} (t2)
		(t0) edge[loop right] node[right] { $\inc\!\mid\!\push(\rzero)$} (t0)
		(t0) edge[loop left] node[left] { $\bone,\bzero$} (t0)
		(t1) edge[loop left] node[left] { $\bone,\bzero$} (t1)
		(t2) edge[loop right] node[right] { $\dec\!\mid\! \pop(\rone)$} (t2)
		(t2) edge[loop left] node[left] { $\bone,\bzero$} (t2)
		
		;
	\end{tikzpicture}}
	
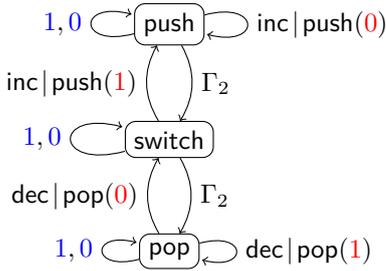
\captionof{figure}{The PDA  $A$. This PDA and the $\ell$ DFAs together faithfully encode the accepting runs of the PDA in Figure~\ref{fig:exp-pda-1} with $n = 2^\ell-1$.  Thus they accept words with exactly $2^{2^\ell}$ occurrences of $\Gamma_2$.
 }\label{fig:dexpPDA}
\end{minipage}\hfill
\begin{minipage}{0.55\textwidth}
\begin{proof}
	The PDA $A$ is depicted in Figure~\ref{fig:dexpPDA}. The three states represents the three modes of the PDA in Figure~\ref{fig:exp-pda-1}.
	The $\ell$ DFAs will guarantee that we start with the number $\bzero^\ell$. Because of $A_1$, the PDA cannot take the $\inc$ transition from the state $\langle \modepush \rangle$ immediately after the number  $\bone^\ell$. It will have to read a $\Gamma_2$ symbol and move to the state $\langle \modeswitch \rangle $. The  PDA will loop in this state once by reading the same number ($\bone^\ell$) as mandated by  the $\Gamma_2$ transitions of the DFAs.  From this state, again $\inc$ is disabled by $A_1$, and hence the PDA will read another $\Gamma_2$ symbol and go to the state $\langle \modepop \rangle$. The PDA will read $\bone^\ell$ staying in the state $\langle \modepop \rangle $ after which it can take a $\dec$ transition. 
\end{proof}
\end{minipage}

\newcommand{\pda}{P}
\subsection{Completing the reduction}
\newcommand{\pcp}{\mathcal{P}}
\def\dsqcup{\sqcup\mathchoice{\mkern-7mu}{\mkern-7mu}{\mkern-3.2mu}{\mkern-3.8mu}\sqcup}
\newcommand{\shsymbol}{\mathfrak{W}}
Before giving the reduction, let us first define a PDA and the transducers that we use in the reduction. Let $\pda_1$ be the PDA in Claim~\ref{claim:countingPDA} with $\Gamma_2 = \Sigma_1 \cup \{\#\}$. Let $\pda_2 $ the PDA for $L(\pda_1)\cap (\Sigma_1^\ast \#^\ast)$. Let $\pda_3$ be the PDA in Claim~\ref{claim:countingPDA} with $\Gamma_2 = \Sigma_2$. Let $\pda_4$ be the PDA for $L(\pda_2)\cdot L(\pda_3)$.  
Let  $T_1, \dots T_\ell$ be $\ell$ transducers. The input automaton of $T_i$ is exactly the DFA  $A_i$ from Claim~\ref{claim:countingDFA} with $\Gamma_2 = \Sigma_1 \cup \Sigma_2 \cup \{\#\}$. The output of the transducer  $T_i$ on every transition is $\epsilon$. 

\medskip

Now we are ready to give the reduction. Let $\pcp = (\Sigma_1, \Sigma_2, f, g, \ell)$ be an input to a 2eBPCP problem. We describe how to obtain a string constraint $C_P$ from this. 
We use the alphabet $ \Sigma = \Sigma_1 \cup \Sigma_2 \cup \Gamma_1 \cup \{\#\}$, and the variable set $\variableset = \{x_0, x_1, x_2, x_f, x_g\}$. Next we define the membership constraints $\Mem$.  Let $\Mem(x_0) = A_\epsilon$  where $A_\epsilon$ is an NFA for $\{\epsilon\}$. We have  $ \Mem(x_1) =\pda_4, \Mem(x_2) = \pda_4$.  Now we need to augment the language of $x_f$ and $x_g$ to also account for the letters from $\Gamma_1$, which can be achieved by adding $\Gamma_1$ self loops in all the states in the PDA for $L_f$ and  $L_g$ respectively. Thus the language for $x_f$ is $\shsymbol(L_f, \Gamma_1^\ast)$, where $\shsymbol(L_f, \Gamma_1^\ast) = \{w \mid w \in \shuffle{u,v}, u \in L_f, v\in \Gamma_1^\ast\}$. Similarly the language for $x_g$ is $\shsymbol(L_g, \Gamma_1^\ast)$. Let $\pda_5 $ and $\pda_6$ be  PDAs recognizing $\shsymbol(L_f, \Gamma_1^\ast)$, and $\shsymbol(L_g, \Gamma_1^\ast)$ respectively. We have $\Mem(x_f) = \pda_5$ and $\Mem(x_g) = \pda_6$. 
Let $\Rel$ be the following relational constraints: 1)$ x_0 \subword T_i(x_1)$, for all $i: 1< i < \ell$, 2) $x_0 \subword T_i(x_2)$, for all $i: 1< i < \ell$, 3) $x_1 \subword x_f$, 4) $x_f \subword x_g$ and 5) $x_g \subword x_2$. We have $2\ell +3$ relational constraints. Further $\multiplicity(x_1) = \ell  = \multiplicity(x_2)$ in our construction.
Let $C_\pcp = (\Sigma, \variableset, \Mem, \Rel)$. 
\begin{claim}\label{claim:2eBPCP}
	The string constraint $C_\pcp$ is satisfiable if and only if the 2eBPCP instance $\pcp$ has a solution.
\end{claim}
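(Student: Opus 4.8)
The plan is to show both directions by tracking what each variable is forced to hold in a satisfying assignment. First I would establish the key structural fact: because $\Mem(x_0) = A_\epsilon$ accepts only $\epsilon$, the constraints $x_0 \subword T_i(x_1)$ and $x_0 \subword T_i(x_2)$ impose \emph{no} subword requirement (the empty word is a subword of everything), but they \emph{do} force $\assignment(x_1)$ and $\assignment(x_2)$ to lie in the domain of $R(T_i)$ for every $i$ --- i.e.\ to be accepted by the input automaton $A_i$. Combined with $\Mem(x_1) = \Mem(x_2) = \pda_4$, this means $\assignment(x_1), \assignment(x_2) \in L(\pda_4) \cap \bigcap_i L(A_i)$. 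By Claim~\ref{claim:countingPDA} (applied with $\Gamma_2 = \Sigma_1\cup\{\#\}$ for $\pda_1$, restricted to $\Sigma_1^\ast\#^\ast$ via $\pda_2$, then concatenated with the $\pda_3$-part over $\Gamma_2 = \Sigma_2$), together with the observation that the $A_i$'s encode $L'_\ell$ and the output of each $T_i$ is $\epsilon$, this intersection is exactly the set of words $u \cdot v$ with $u \in \Sigma_1^m\#^p$, $m + p = 2^{2^\ell}$, and $v \in \Sigma_2^{2^{2^\ell}}$, up to the interleaved $\Gamma_1$-letters which are consumed by the $A_i$'s but produce no output. So $\assignment(x_1)$ and $\assignment(x_2)$, after erasing $\Gamma_1$, have the form $\Sigma_1^m\#^p\Sigma_2^{2^{2^\ell}}$ with $m+p = 2^{2^\ell}$.

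Next I would analyse the chain $x_1 \subword x_f \subword x_g \subword x_2$. Since $\Mem(x_f) = \pda_5$ recognizes $\shsymbol(L_f,\Gamma_1^\ast)$, erasing $\Gamma_1$ from $\assignment(x_f)$ gives a word of the form $w \cdot \#^q \cdot f(w^r)$ with $w\in\Sigma_1^\ast$; similarly erasing $\Gamma_1$ from $\assignment(x_g)$ gives $w'\cdot\#^{q'}\cdot g(w'^r)$. For the forward direction, given a 2eBPCP solution $z\in\Sigma_1^+$ with $|f(z)| = 2^{2^\ell}$ and $f(z) = g(z)$, I would set $w = w' = z^r$ so that $f(w^r) = f(z) = g(z) = g(w'^r)$, choose the $\#$-blocks so all four words have the $\Sigma_1$-block and $\Sigma_2$-block lengths matching $2^{2^\ell}$ as required by $\pda_4$, and pad with the appropriate $\Gamma_1$-letters dictated by $L'_\ell$; then each $\subword$ in the chain holds because the words literally coincide on the non-$\#$ parts and the $\#$-blocks are nested appropriately --- here I use that $|z| \le |f(z)| = 2^{2^\ell}$ so the $\Sigma_1$-prefix fits, and that $\subword$ allows the padding $\#$'s and $\Gamma_1$'s to be absorbed. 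For the converse, from a satisfying $\assignment$: the chain of subword inequalities forces the $\Sigma_1$-prefixes to satisfy $w_1 \subword w \subword w' \subword w_2$ where $|w_1| = |w_2| = 2^{2^\ell} - (\text{number of }\#)$; counting $\Sigma_2$-letters similarly forces $f(w^r)$ and $g(w'^r)$ to be squeezed between two blocks of $\Sigma_2$-length $2^{2^\ell}$, and the exact-counting from $\pda_4$ pins down $w = w' = z^r$ for some $z$ with $f(z) = g(z)$ and $|f(z)| = 2^{2^\ell}$; reading off $z$ gives the 2eBPCP solution (it is nonempty because the $\Sigma_2$-block has length $2^{2^\ell} > 0$).

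The main obstacle I expect is the bookkeeping around the $\#$-blocks and the $\Gamma_1$-padding: one must verify that the lengths chosen for the $\#^p$, $\#^q$, $\#^{q'}$ blocks (and the number of $\Gamma_1$-letters forced by membership in $L'_\ell$) can be made simultaneously consistent with (a) membership of $x_1,x_2$ in $\pda_4 \cap \bigcap_i L(A_i)$, (b) membership of $x_f,x_g$ in $\pda_5,\pda_6$, and (c) the four subword relations. The clean way to handle this is to note that $\subword$ is monotone and that a longer $\#$-block is always a superword of a shorter one, so it suffices to pick the $\#$-counts to be \emph{weakly increasing} along the chain $x_1, x_f, x_g, x_2$ while keeping each individual word's total $\Sigma_1$-plus-$\#$ length equal to $2^{2^\ell}$; the interleaved $\Gamma_1$-letters never obstruct a subword embedding since the target words carry enough of them (again by monotonicity, one can always choose the target's $\Gamma_1$-decoration to dominate). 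Once this combinatorial consistency is checked, both implications follow directly.
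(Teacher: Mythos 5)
Your proposal is correct and follows essentially the same route as the paper: the $T_i$-constraints force $\sigma(x_1),\sigma(x_2)\in\bigcap_i L(A_i)$, the exact counts of $2^{2^\ell}$ enforced by $P_4$ together with the subword chain $x_1\preceq x_f\preceq x_g\preceq x_2$ squeeze $\sigma(x_f)$ and $\sigma(x_g)$ into equality (you do this projection-wise, the paper via whole-word lengths), and the solution is then read off from the memberships in $P_5,P_6$. The paper's other direction simply assigns the \emph{identical} word $w'$ to all of $x_1,x_f,x_g,x_2$, which makes the $\#$-block and $\Gamma_1$-padding bookkeeping in your last paragraph unnecessary.
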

\begin{proof}
	Suppose the string constraint $C_\pcp$ is satisfiable, let $\assignment$ be the satisfying assignment.  Then we claim that $\assignment(x_f) = \assignment(x_g) = \assignment(x_1) = \assignment(x_2)$.
	Further more, $\proj{\assignment(x_f)}{\Gamma_2} = \Sigma_1^{m_1} \#^{m_2} \Sigma_2^{m_1 + m_2}$, where $m_1$ and $m_2$ are such that $m_1 + m_2 = 2^{2^\ell}$. This, combining with the membership constraints of $x_f,x_g$ immediately provides us with the solution for the 2eBPCP  instance $\pcp$. 
	
	To see why $\assignment(x_f) = \assignment(x_g) = \assignment(x_1) = \assignment(x_2)$, note that the constraints in $1$ and $2$ will ensure that $ \assignment(x_1) \in \bigcap_{i = 1}^\ell L(A_i)$.
	The membership constraint $\Mem(x_1)= P_4  $ will entail $ \assignment(x_1) = w_1 \cdot w_2$ with $w_1 \in L(P_2)$ and $w_2 \in L(P_3)$. This will in turn ensure that $\proj{w_1}{\Gamma_2} = \Sigma_1^{m_1} \#^{m_2}$, where $m_1 + m_2 = 2^{2^\ell}$ and $\proj{w_1}{\Gamma_2}  = \Sigma_2^{2^{2^\ell}}$. Similar argument can also be made for $\sigma(x_2)$. From this, we also obtain that $|\sigma(x_1)| = |\sigma(x_2)|$. From the relational constraints $3,4,5$, we also have $x_1 \subword x_2$ implying $\sigma(x_1) = \sigma(x_2)$.
	
	\medskip
	For the other direction, suppose $\pcp$ has a solution, then we construct the satisfying assignment for $C_\pcp$ as follows. Let the solution for $\pcp$ be $w \in\Sigma_1^*$ such that $f(w) = g(w)$ and $|f(w)| = 2^{2^\ell}$. Let $w' \in L'_\ell$ be such that $\proj{w'}{\Sigma_1 \cup \Sigma_2 \cup \{\#\}} = v$, where $v = w \cdot \#^m \cdot f(w)^{r}$, for some $m = 2^{2^\ell} - |w|$. We let $\assignment(x_1) = \assignment(x_2) = \assignment(x_g) = \assignment(x_f) = w'$. Notice that the constraints $1,2$ are satisfied since the word is picked from $L'_\ell$. It satisfies the  constraints $3,4,5$ since all the variables are assigned the same word. It is also easy to verify that the membership constraints are satisfied.  Hence, $\sigma$ that we constructed is a satisfying assignment.
\end{proof}

Notice that we construct $C_\pcp$ from $\pcp$ in polynomial time, and it is acyclic.  Hence, it follows that the satisfiability checking of acyclic string constraints is \twonexph, proving the lower bound of Theorem~\ref{thm:2nexpcCFL}.

\section{Satisfiability is in 2NEXPTIME}\label{sec:2nexptimeUpperBound}
\newcommand{\constraint}[2]{\textsf{constraint}(#1,#2)}
\newcommand{\Transducer}[2]{\transducer_{#2}^{#1}}
\newcommand{\outputvar}[2]{o_{#2}^{#1}}
\newcommand{\extendedvariableset}{\widehat{\variableset}}
\newcommand{\extendedassignment}{\widehat{\assignment}}
\newcommand{\bound}{B}

We will show that if an acyclic string constraint is satisfiable, then there is a satisfying assignment of double exponential size. 

\begin{theorem}[Small model property]\label{thm:smallmodel} Let $C$ be a satisfiable acyclic string constraint. Then $C$ has a satisfying assignment $\assignment$ such that $\len{\assignment(x)}  \le \bound$ where $\bound$ is $$2^{2^{\mathcal{O}(m\log t+ \log p + \log k)}}$$ where 
	\begin{itemize}
		\item 	$m = \max_{x \in \variableset} \multiplicity(x)$, the maximum multiplicity of any variable,
		\item  $t = \max_{T \in \TRSET{C}} \statesize{T}$, the maximum number of states of any transducer, 
		\item $p = \max_{P \in \AUTSET{C}} \statesize{P}$, the maximum number of states (and stack symbols) of any automaton.
		\item $k = \sizeof{\variableset}$, the number of variables. 
	\end{itemize}
	\end{theorem}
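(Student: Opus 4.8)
## Proof Proposal for Theorem~\ref{thm:smallmodel}

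\textbf{Overall strategy.}The plan is to start from an arbitrary satisfying assignment $\assignment$ of $C$ and to repeatedly replace the value of a single variable by a short word, keeping the constraint satisfied, until all values are below $\bound$. To make the witnesses of the relational constraints explicit, I first pass to an \emph{extended constraint}: for each constraint $(x,Y)\in\Rel$ with $Y=\multiset{(y_1,T_1),\dots,(y_n,T_n)}$ I add fresh \emph{output variables} (generically $\outputvar{i}{j}$, the output of $\Transducer{i}{j}$), and I record in an \emph{extended assignment} $\extendedassignment$ the witnessing outputs $\extendedassignment(\outputvar{i}{j})=u_j$ with $(\assignment(y_j),u_j)\in R(T_j)$, together with a witnessing shuffle word. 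The constraint then decomposes into purely local requirements: (i) for each original constraint, $\extendedassignment(x)\subword\shuffle{\multiset{\extendedassignment(o_1),\dots,\extendedassignment(o_n)}}$; (ii) for each output variable $o_j$, the pair $(\extendedassignment(y_j),\extendedassignment(o_j))$ lies in $R(T_j)$; and (iii) $\assignment(v)\in L(\Mem(v))$ for the original variables. I also assume, at a cost linear in the relevant size parameters, that every transducer is in a normal form reading at most one letter and outputting at most one letter per transition, and that the runs witnessing $R(T)$ repeat no state on $\epsilon$-input transitions; in particular $\len{\extendedassignment(o_j)}\le O(t)\cdot\len{\assignment(y_j)}$.

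Next I process the variables along the acyclic order $<$, from the $<$-maximal ones down to the $<$-minimal ones (``innermost first''), so that when a variable $x$ is treated, every variable occurring on the right-hand side of a constraint on $x$ — hence feeding an upper bound of type (i) for $x$ — has already received its new small value. For a fixed $x$ I build a single automaton $\constraint{C}{x}$ capturing all current requirements on the value of $x$: it is the product of $\Mem(x)$ (a PDA of state-size $\le p$, or an NFA) with (a) for each constraint $(x,Y)\in\Rel$, an NFA of size $O(\len w)$ accepting exactly the subwords of the now-fixed word $w\in\shuffle{\multiset{\extendedassignment(o_1),\dots}}$; and (b) for each occurrence of $x$ on a right-hand side via a transducer $T$, an NFA of size $O(t\cdot s)$ accepting those words on which $T$ has a run producing a superword of a designated word of length $\le s$, namely the portion that the corresponding left-hand side variable must place into that slot. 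Since the subword relation is compatible with shuffles — replacing a shuffled component by a superword, or replacing a value by one of its subwords, preserves ``is a subword of some shuffle of'' — the old value $\assignment(x)$ still lies in $L(\constraint{C}{x})$, so the language is non-empty; I set the new value of $x$ to be a shortest word of $L(\constraint{C}{x})$, and by the same compatibility this keeps every constraint satisfied.

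Finally I bound $\statesize{\constraint{C}{x}}$, whence, by the standard fact that a PDA of state-size $S$ with non-empty language accepts a word of length $2^{O(S)}$ (and an NFA one of length $O(S)$), the length of the new value of $x$. The automaton $\constraint{C}{x}$ is a product of at most $1+|\Rel|+\multiplicity(x)$ machines: one PDA of state-size $\le p$, and NFAs of state-size polynomial (of degree $O(m)$) in $t$ and in the bound already established for the processed variables. Hence $\statesize{\constraint{C}{x}}$ is at most $2^{O(m\log t+\log p+\log k)}$ times a polynomial in the previous bound; unwinding this recurrence along $<$, with the designated words in (b) taken of single-exponential size so that the recursion has, in effect, constant depth on the doubly-exponential scale rather than depth $k$, yields $\bound=2^{2^{O(m\log t+\log p+\log k)}}$. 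The main obstacle is exactly this interaction between the two roles of a variable: as a left-hand side it should be \emph{shrunk}, but as the source of a transduction feeding another constraint it is implicitly \emph{bounded from below} (that other constraint's variable carries its own membership lower bound); resolving this without a circular dependency — by fixing, before the shrinking pass, canonical small witnesses for these lower-bound obligations and only then minimizing along the acyclic order — is the crux of the argument.
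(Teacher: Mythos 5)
Your high-level skeleton matches the paper's: pass to an extended assignment with explicit output variables and witnessing subwords, then shrink one variable at a time along the acyclic order. But two of your choices break the argument.

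\textbf{The processing order is reversed.} The paper's induction starts from the $<$-\emph{minimal} variables, i.e.\ those that never occur on a right-hand side ($\multiplicity(x_1)=0$). Shrinking such a variable to a subword is essentially free (subword transitivity preserves all constraints where it is on the left, and a plain pumping-down in a CNF grammar for $\Mem(x_1)$ handles membership, giving $B_1=2^{p^3}$). Only then, when a variable $x_i$ with $\multiplicity(x_i)>0$ is treated, are the witnessing subwords it must continue to produce bounded by $B_{i-1}$, because the left-hand sides of the constraints in which $x_i$ occurs on the right have already been shrunk. You process the $<$-\emph{maximal} variables first; at that moment the ``designated words'' of your item (b) are portions of left-hand-side values that have not yet been shrunk and can be arbitrarily long, so the NFA of size $O(t\cdot s)$ has unbounded $s$ and the variable cannot be shortened at all. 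Your proposed repair --- fixing canonical small witnesses before the shrinking pass --- is circular: proving that those witnesses can be taken small is precisely the content of the theorem, and the paper obtains it only via the min-first induction.

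\textbf{The quantitative step does not give the claimed recurrence.} Even with the correct order, bounding the new value by ``the shortest word of the product automaton has length $2^{O(\text{state-size})}$'' fails: the product contains NFAs whose state counts are linear in $B_{i-1}$ (subwords of a fixed shuffle word of that length; superwords of a designated witness of that length), so this argument yields $B_i = 2^{\mathrm{poly}(B_{i-1})}$, i.e.\ a tower of height $k$ rather than $2^{2^{O(m\log t+\log p+\log k)}}$. Section~4 of the paper shows that shortest words in such intersections genuinely can be doubly exponential in the number of components, so the generic state-size bound cannot be sharpened. What the paper does instead is exhibit an explicit short word \emph{inside} the intersection: it block-decomposes the $m$ transducer runs on the current value $w_i$, marks at most $m\cdot B_{i-1}$ positions as crucial (those supporting the witnessing embeddings of the already-shrunk left-hand sides), annotates a parse tree of $w_i$ with $m$-tuples of transducer states at block boundaries, and pumps down only along paths containing no crucial node. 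This gives the \emph{multiplicative} recurrence $B_i = 2m\cdot t^{2m}\cdot p^3\cdot B_{i-1}\cdot 2^{p^3t^{2m}}$, which unwinds to the stated double exponential. (Your shortest-word-of-the-product value would in fact be no longer than the paper's pumped-down witness, but your stated justification for its length is the wrong one and proves only the tower bound.) A further minor point: you cannot assume up front that transducer runs repeat no state on $\epsilon$-input transitions; $\epsilon$-loops may carry output needed for a downstream embedding and can only be removed when they avoid the crucial transitions, which is again part of the conservative shrinking.
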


Our aim, towards a 2NEXPTIME procedure, is to non-deterministically guess an assignment of size at most double exponential, and check that it satisfies the Conditions 1 and 2 (see Definition~\ref{def:satassign}). However, Condition 2 uses more existentially quantified variables, and it is not evident that verifying Condition 2 can be done within the complexity limits. Towards this, we define an extended assignment which considers the values given to these existentially quantified variables as well, and show that every word used in this extended assignment is of length at most $B$. 
We define the extended assignment and the related notions and notations, and state the small model property for the extended assignment. 

Recall that $\multiplicity(x)$ denotes the number of times a variable occurs on the RHS of the constraints. In each such occurrence, the variable occurs in a pair along with a transducer (of the form $(x, T)$), which belongs to the RHS of a constraint from $\Rel$ of the form $(y,Y)$. Let us fix some enumeration of these occurrences, and define the respective transducer and  constraint of the $i$th occurrence of $x$ by $\Transducer{x}{i}$ and $\constraint{x}{i}$. Now, as per Condition 2, there are words (output words of the respective transducers), that witness the transduction. For every $x \in \variableset$ and $i \in \nset{ \multiplicity(x)}$, let $\outputvar{x}{i}$ be a new variable. This variable is intended to take as value a witness word for the output of the transducer $\Transducer{x}{i}$ on the word provided by $x$, so that the constraint $\constraint{x}{i}$ is satisfied. Let $\extendedvariableset(C)$, or simply $\extendedvariableset$ when $C$ is clear from the context,  contain the output variables in addition to the original variables. That is, $\extendedvariableset = O \cup \variableset$, where $O = \{\outputvar{x}{i} \mid x \in \variableset, i \in \nset{\multiplicity(x)} \}$. 
An extended assignment $\extendedassignment : \extendedvariableset \to \alphabet^\ast$ \textit{satisfies} a string constraint $C$ if
\begin{enumerate}[label={\textbf{E}\arabic*}]
	\item  \label{cond:ext-mem}$\extendedassignment(x) \in \Mem(x)$ for all $x \in\variableset$
	\item  \label{cond:ext-trans}$( \extendedassignment(x), \extendedassignment(\outputvar{x}{i})) \in R(\Transducer{x}{i})$, for all $x\in \variableset$, $i \in \nset{ \multiplicity(x)}$
	\item \label{cond:ext-rel}For every $(y, Y) \in \Rel$,  we have $\extendedassignment(y) \subword  \shuffle{\extendedassignment(Y)}$ where $\extendedassignment(Y)$ is an overloaded  notation  for the multiset 
	\begin{equation}
		\multiset{\extendedassignment(\outputvar{x}{i}) \mid x \in \variableset, i \in \nset{ \multiplicity(x)}, \constraint{x}{i} = (y,Y)}. \label{eqn:sigmahatY}
\end{equation}	
\end{enumerate}

\newcommand{\maxout}{c}
We will actually prove the small model property for the extended assignments. 
\begin{lemma}\label{lemm:extendedsmallmodel}
	Let $C$ be a satisfiable acyclic string constraint. Let $\bound$ be $2^{2^{\mathcal{O}(m\log t+ \log p + \log k)}}$. Then $C$ has a satisfying extended assignment $\extendedassignment$ such that $\len{\extendedassignment(x)} \le \bound $ for all $x \in \variableset(C)$, and $\len{\extendedassignment(y)} \le 2ct\bound$ for all $y \in O$. The parameters $m, t, p, k$ are as defined in Theorem~\ref{thm:smallmodel}, and $\maxout = \max \{ \len{\out(tr)} \mid tr \textrm{ is a transition of } T, \textrm{ and } T \in \TRSET{C} \}$.
	\end{lemma}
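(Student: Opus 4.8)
The plan is to prove Lemma~\ref{lemm:extendedsmallmodel} by an induction on the variables following a linearization of the acyclic order $<$, shrinking the assigned words one strongly-connected-component-free layer at a time while preserving satisfaction of \ref{cond:ext-mem}, \ref{cond:ext-trans}, and \ref{cond:ext-rel}. Since $<$ is acyclic, I would process the variables of $\variableset$ in reverse topological order: a variable $x$ that sits at the bottom (no $y$ with $x < y$) only participates in relational constraints on the left-hand side, so its value must merely be a subword of a fixed shuffle of already-fixed output words and lie in $L(\Mem(x))$; a variable higher up feeds output words into constraints for variables below it, which have already been shrunk, so the output words it must produce are bounded. The key technical tool is the standard pumping/Parikh-style bound: if a word $w$ lies in $L(P)$ for a PDA (or NFA) $P$ and must also be a subword of some target word $v$, then there is a $w' \preceq v$ with $w' \in L(P)$ of length at most polynomial in $\statesize{P}$ and in $|v|$ — more precisely, one keeps, for each letter-position of $v$ one wants to ``use'', a short accepting run, and a shortest such subword has length bounded by a function that is polynomial in $\statesize{P}$ times the number of ``obligations'' imposed from below.

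Concretely, here is the order of steps. First, set up the extended variable set $\extendedvariableset = O \cup \variableset$ and note that a satisfying extended assignment exists by hypothesis (take any satisfying $\assignment$ and choose witness outputs $\outputvar{x}{i}$). Second, fix a topological order $x_1 < x_2 < \dots$ is \emph{wrong direction}; rather fix an enumeration $v_1, \dots, v_k$ of $\variableset$ such that if $v_a < v_b$ then $a > b$ — so $v_1$ is ``maximal'' and $v_k$ is ``minimal'' — and process $v_k, v_{k-1}, \dots, v_1$? Actually the cleanest is: process variables so that when I shrink $\extendedassignment(x)$, every constraint $\constraint{y}{j}$ in which an output $\outputvar{x}{\cdot}$ of $x$ participates has its left-hand variable $y$ \emph{already shrunk}; since $y < x$ whenever $x$ appears (via a transducer) on the RHS of $y$'s constraint, I must shrink the $<$-minimal variables last and the $<$-maximal ones first. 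So: enumerate $\variableset$ as a topological sort of $<$ and iterate from the top (maximal) down. When it is $x$'s turn, I know the target $\extendedassignment(y) \preceq \shuffle{\extendedassignment(Y)}$ for every $(y,Y)$ on whose RHS some occurrence $(x, \Transducer{x}{i})$ sits, and these $\extendedassignment(y)$ have already been bounded (inductive hypothesis). I then need to choose a new short word $w$ for $x$ with $w \in L(\Mem(x))$, $w \preceq \shuffle{\text{(outputs of other variables in } x\text{'s own LHS constraint)}}$ — wait, that is circular; the LHS constraint of $x$ constrains $\extendedassignment(x)$ to be a subword, and the output variables feeding that constraint come from variables strictly above $x$, hence \emph{not} yet shrunk and already short by their own induction. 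So the subword targets for $x$ are: (a) from $x$'s LHS constraint, a shuffle of outputs of $<$-larger variables (short, fixed before this step by processing top-down — careful: we process top-down so larger variables ARE processed earlier); (b) the requirement that, applying each $\Transducer{x}{i}$ to the new $w$, we can still pick outputs $\outputvar{x}{i}$ that keep $y$'s constraints satisfiable — but $y$'s constraint only needs the \emph{multiset union} of outputs to superword $\extendedassignment(y)$, and since we are free to pick any valid output of the transducer, what we need is that the new $w$ is still ``rich enough''. This last point is the delicate one, so I would handle it by first shrinking the \emph{outputs} and the \emph{LHS words} jointly.

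The core lemma I expect to be the main obstacle is exactly the simultaneous-shrinking step for a single variable $x$ together with its outputs and the LHS-constraint obligations it must meet. The clean way: treat $x$'s value $w$, its transduction runs, and the subword-witnesses as a single object — a run of the product automaton $\Mem(x) \times \prod_i \Transducer{x}{i}$ together with an embedding of $w$ into the shuffle targets of $x$'s LHS constraint, and of each output $u_i = \outputvar{x}{i}$ into $\extendedassignment(\constraint{x}{i})$ — and argue that this product run can be short-circuited between repeated ``profiles'' without destroying any of the marked positions that are genuinely used. The number of marked positions is at most the sum of the lengths of the finitely many target words $\extendedassignment(y)$ that $x$'s outputs must embed into (each already $\le \bound$ by induction, and there are at most $m$ of them) plus the length of $x$'s own target ($\le \bound$), and each deletion of an unmarked loop removes at least one transition of the product automaton whose state space has size $p \cdot t^{m}$; so a shortest such run — and hence the new $w$ — has length $\mathcal{O}(p \, t^{m} \cdot m \, \bound)$, while each output grows by at most a factor $\maxout t$ per step of $w$, giving $\len{\extendedassignment(\outputvar{x}{i})} \le \maxout t \cdot \mathcal{O}(p\, t^m\, m\, \bound)$. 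Choosing $\bound = 2^{2^{\mathcal{O}(m\log t + \log p + \log k)}}$ makes this a fixed point: $p\, t^m\, m$ is absorbed into $2^{\mathcal{O}(m\log t + \log p)}$, and the $k$ levels of induction multiply at most $2^{\mathcal{O}(\log k)}$ more; the double exponent is needed only because the bottom-level variable's subword target is itself a shuffle of up to $m$ words each of length $\bound$, and a PDA counting witness inside it can be double-exponential (this is precisely the phenomenon isolated in the lower-bound section). I would close by noting the bound on $O$-variables, $\len{\extendedassignment(y)} \le 2\maxout t\bound$, follows from the per-step growth estimate once $x$'s word is bounded by $\bound$ after a final clean-up pass that re-shrinks each $x \in \variableset$ to length $\le \bound$ using only its membership and LHS constraints (the outputs then follow).
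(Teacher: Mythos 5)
Your overall strategy --- linearize the acyclic order, shrink one variable at a time, and preserve only the positions that witness already-fixed obligations --- is the paper's strategy, but two steps of your write-up are genuinely broken. First, the processing order. You correctly state the requirement (when $x$ is shrunk, every constraint $(y,Y)$ with $(x,T)\in Y$ must have its left-hand side $y$ \emph{already} shortened, so that the witnessing subwords your new outputs must still contain are short), but you then derive the opposite order from it: since $y<x$ for those constraints, you must process the $<$-\emph{minimal} variables (those with $\multiplicity = 0$, appearing only on left-hand sides) \emph{first}, not last. Under your ``maximal first'' order the words $\extendedassignment(y)$ are still the original, unbounded ones when you reach $x$, so the count of ``marked positions'' in your core step is unbounded and the induction collapses; the ``rich enough'' difficulty you flag in point (b) is unsolvable in that direction, whereas in the correct direction it disappears, because shrinking the left-hand side of a subword constraint preserves it for free by transitivity of $\subword$.

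Second, the quantitative core. You shrink $w$ by removing unmarked loops of the product $\Mem(x)\times\prod_i T^{x}_{i}$ and conclude a length of order $p\,t^{m}\cdot m\,\bound$ per step. That is the argument for NFAs; $\Mem(x)$ is a pushdown automaton, and a PDA run cannot be short-circuited between repeated control states without wrecking the stack. Iterating your polynomial per-step bound over the $k$ variables would yield a singly exponential small-model bound, contradicting the paper's own \twonexph\ construction in which every satisfying assignment has length at least $2^{2^{\ell}}$ --- so the argument cannot be right as stated. The paper instead takes a Chomsky-normal-form grammar for $\Mem(x)$ with $p^{3}$ non-terminals, annotates each parse-tree node with the $m$-tuples of transducer states at the block boundaries it spans, and pumps down between repeated annotated non-terminals on paths containing no crucial node; the subtrees hanging off the skeleton of crucial nodes then have size up to $2^{p^{3}t^{2m}}$, giving the recurrence $B_{i}=2m\,t^{2m}\,p^{3}\cdot 2^{p^{3}t^{2m}}\cdot B_{i-1}$ whose $k$-fold iteration is exactly the stated double exponential. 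Your appeal to the lower-bound section to ``explain'' the double exponent is not a substitute for this counting. (Your treatment of the output variables --- re-choosing transducer outputs block by block, with at most $t$ transitions per block after shrinking $\epsilon$-loops and at most $c$ output letters per transition --- matches the paper's and is fine once the two issues above are repaired.)
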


 With this, our non-deterministic procedure guesses an extended assignment and verifies that it satisfies the conditions 1, 2 and 3. In fact, checking whether a given word $w$ is a subword of some word in $\shuffle{W}$ where $W$ is a finite mutliset of words is NP-complete \cite{AiswaryaMS22, ChiniKKMS17}.  It remains to prove Lemma~\ref{lemm:extendedsmallmodel}.
 
 \begin{proof}[Proof of Lemma~\ref{lemm:extendedsmallmodel}]

Consider an acyclic string constraint $C$. Recall that we write  $x< y$ if $(x, Y) \in \Rel$ with $(y,T) \in Y$ for some $T \in \TRSET{C}$. Consider a topological sorting of the variables respecting the relation $<$, say $x_1, x_2, \dots , x_k$. Note that $x_1$ does not appear in the RHS of any subword order constraint (in other words, $\multiplicity(x_1) = 0$). If $(x_i, T) $ appears on the RHS of any constraint for some $T$, then the LHS of that constraint is  $x_j$ for some $j < i$. 

Suppose $C$ is satisfiable. Let $\extendedassignment_0$ be a satisfying extended assignment. In order to get the $\extendedassignment$ as per Lemma~\ref{lemm:extendedsmallmodel}, we will construct a sequence of $k$ extended assignments, each progressively modifying the previous one until we reach our goal. 
That is, we will construct the sequence, $ \extendedassignment_0, \extendedassignment_1, \dots ,\extendedassignment_k = \extendedassignment$ such that for each $i \in \nset{k}$
\begin{enumerate}[label=\textbf{I}\arabic*]
	\item\label{cond:inv1}  $\extendedassignment_i$ is satisfiable.  That is, 1) membership constraints are satisfied (Condition~\ref{cond:ext-mem}), 2) transductions are accepted by the transducers (Condition~\ref{cond:ext-trans}),  and 3) the relational constraints are satisfied (Condition~\ref{cond:ext-rel}). 
	\item\label{cond:inv2} for all $j: j \le i $, \, $\len{\extendedassignment_i(x_j)}\le\bound_j$ and for each $\ell \in \multiplicity(x_j)$, $\len{\extendedassignment_i(\outputvar{x_j}{\ell})} 
	\le 2 c t \bound_j$. We define $\bound_n$  as follows. $B_1 = 2^{p^3}$, and  for $n>1$, $B_n = 2m \cdot t^{2m} \cdot p^3 \cdot B_{n-1} \cdot 2^{p^3 t^{2m}}$.
\end{enumerate}

 Note that, $B_n$ increases with $n$ and $\bound_k$ is at most $\bound$.

\noindent \textbf{Base cases }We consider $\extendedassignment_0$ and $\extendedassignment_1$ as base cases. For $\extendedassignment_0$, it is given to be satisfiable, and Condition~\ref{cond:inv2} above holds vacuously. 

Towards constructing $\extendedassignment_1$, consider the variable $x_1$, and a context-free grammar $G_1$ for $\Mem(x_1)$ in Chomsky Normal Form with at most $p^3$ non-terminals \cite{AtigKS13}. Note that $G_1$ can be constructed in polynomial time. Since $\extendedassignment_0$ is satisfying, the word $w_1 = \extendedassignment_0(x_1)$ has a valid parse tree in $G_1$. If a non terminal repeats in any leaf to root path in this tree, say at node $n_1$ and node $n_2$ with $n_2$ an ancestor of $n_1$, then we can shrink the parse tree (pump down) by replacing the subtree  rooted at $n_2$ by the subtree rooted at $n_1$ to get a smaller parse tree of a smaller word in the language. Furthermore, this smaller word will be a subword of $w_1$. Consider a shrinking of the parse tree of $w_1$ which cannot be shrunk any further. This tree has size at most $2^{p^3}$, and hence  its yield $\widehat{w_1}$ satisfies Condition~\ref{cond:inv2}. Setting $x_1$ to $\widehat{w_1}$ will also satisfy Condition~\ref{cond:inv1}. Further, note that there are no output variables corresponding to $x_1$.
Hence we get $\extendedassignment_1$: $\extendedassignment_1(x) = \begin{cases}  \widehat{w_1} &\text{ if }  x = x_1\\
	\extendedassignment_0(x) &\text{ otherwise. }
\end{cases}$

\noindent\textbf{Inductive Step} Now, for the inductive case, assume we have constructed $\extendedassignment_{i-1}$.  We will describe how to obtain $\extendedassignment_{i}$. Let $G_i$ be the context-free grammar in Chomsky Normal Form for $\Mem(x_i)$ with $p^3$ non-terminals. We will basically do a ``conservative'' pumping down of $w_i = \assignment_{i-1}(x_i)$, which ensures that the constraints are still satisfied, which we explain below.

\medskip

\noindent\textbf{Challenges } In order to bound the length of $\extendedassignment_{i}(x_i)$ we may consider subwords $w'_ i \subword w_i$, so that the constraints in which $x_i$ appear on the left continue to be satisfied. In addition, such a subword $w'_i$ must not only satisfy the membership constraint ($w'_i  \in \Mem(x_i)$) but also admit the specified transductions  -- that is, for all $\ell \in \nset{\multiplicity(x_i)}$, we must have $(w'_i, u_\ell') \in R(\transducer_\ell^{x_i})$ for some $u'_\ell$. Furthermore, a mere existence of such a $u'_\ell$ is not sufficient -- consider the constraint $(x_j, Y) = \constraint{x_i}{\ell} $ and let $\extendedassignment_{i-1}(x_j) = w_j$ and $\extendedassignment_{i-1}(Y) = U$ with $\extendedassignment_{i-1}(\outputvar{x_i}{\ell}) = u_\ell$ (recall, $\extendedassignment(Y)$ is defined in Equation~\ref{eqn:sigmahatY})). Since $\extendedassignment_{i-1}$ is satisfying we know that $w_j \subword \shuffle U$. However, it need not be the case that $w_j \subword \shuffle{ U \setminus \{u_\ell\} \cup \{u'_\ell\} }$.  
Hence we need to find a suitable $u'_\ell$ such that $w_j \subword U \setminus \{u_\ell\} \cup \{u'_\ell\}$. One way to ensure this, is by insisting that $u_\ell'$ provides the same ``witnessing subword'' that $u_\ell$ provided. We formalise this notion of ``witnessing subword'' below. 

We give two equivalent definitions for $w \subword \shuffle{U}$.
\begin{claim}\label{claim:equivSAT}
	Let $w$ be a word and $U = \multiset{u_1', u_2' \dots u_n'}$ be a multiset of words. The following statements are equivalent. 
	\begin{enumerate}
		\item There exists $w'$: 1) $w' \in \shuffle{U}$ and 2) $  w\subword w'$. 
		\item There exist $u_1', u_2' \dots u_n'$:  1) $w \in \shuffle{\multiset{u_1', u_2' \dots u_n'}}$  and 2) $u_i' \subword u_i$.
	\end{enumerate}
\end{claim}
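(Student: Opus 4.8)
The plan is to prove the two characterizations of $w \subword \shuffle{U}$ equivalent by exhibiting explicit correspondences in both directions, using only the definitions of shuffle, projection, and subword from the preliminaries.

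\textbf{Direction $(2) \Rightarrow (1)$.} Suppose we have words $u_1', \dots, u_n'$ with $w \in \shuffle{\multiset{u_1', \dots, u_n'}}$ and $u_i' \subword u_i$ for each $i$. Since $u_i' \subword u_i$, pick witnessing position sets realizing these subword relations; concretely, there are words obtained by ``padding'' — I would argue that from $w \in \shuffle{\multiset{u_1', \dots, u_n'}}$ there is a partition $Y_1, \dots, Y_n$ of $\pos w$ with $\proj{w}{Y_i} = u_i'$, and since each $u_i'$ embeds into $u_i$, we can insert the missing letters of $u_i$ into $w$ at fresh positions to build a larger word $w'$ whose positions partition into $n$ blocks, the $i$th block reading exactly $u_i$. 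Then $w' \in \shuffle{U}$ by construction, and $w \subword w'$ because $w$ is the projection of $w'$ to the (original, un-inserted) positions. The mild bookkeeping here is to interleave the insertions consistently; it is routine.

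\textbf{Direction $(1) \Rightarrow (2)$.} Suppose $w' \in \shuffle{U}$ with $w \subword w'$. Fix a partition $Y_1, \dots, Y_n$ of $\pos{w'}$ with $\proj{w'}{Y_i} = u_i$, and fix $Z \subseteq \pos{w'}$ with $\proj{w'}{Z} = w$ witnessing $w \subword w'$. Now set $u_i' := \proj{w'}{Y_i \cap Z}$. Then on one hand $u_i' \subword \proj{w'}{Y_i} = u_i$ since $Y_i \cap Z \subseteq Y_i$, giving condition 2(2). On the other hand, the sets $Y_i \cap Z$ for $i \in \nset n$ partition $Z$ (because the $Y_i$ partition all of $\pos{w'}$), and projecting $w$ — itself $\proj{w'}{Z}$ — along this induced partition of $Z$ yields exactly the $u_i'$; hence $w \in \shuffle{\multiset{u_1', \dots, u_n'}}$, giving condition 2(1). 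The one point needing care is that ``projection commutes with restriction'': $\proj{(\proj{w'}{Z})}{Z'} = \proj{w'}{Z'}$ when $Z' \subseteq Z$, under the natural re-indexing of positions. This is immediate from the definition of $\proj{\cdot}{\cdot}$ as reading off letters in increasing order of position.

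\textbf{Main obstacle.} I do not expect a genuine obstacle — both directions are essentially symmetric manipulations of position sets. The only thing to be careful about is the re-indexing bookkeeping: in direction $(1)\Rightarrow(2)$ one must check that the blocks $Y_i \cap Z$, viewed inside $w = \proj{w'}{Z}$ rather than inside $w'$, still form a partition of $\pos w$ and still project to the $u_i'$, which follows because the order-preserving bijection $\pos w \to Z$ carries this induced partition to $\{Y_i \cap Z\}_i$. In direction $(2)\Rightarrow(1)$ the analogous care is needed when specifying \emph{where} to insert the extra letters of $u_i$ so that the resulting $w'$ has a clean block structure; choosing to append all insertions for block $i$ immediately after the corresponding letters (or in any fixed consistent order) makes this precise. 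I would present the argument at the level of position sets and omit the fully spelled-out index arithmetic.
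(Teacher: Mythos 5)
Your proof is correct and follows essentially the same route as the paper's: for $(1)\Rightarrow(2)$ the paper composes the injective embeddings of $w$ into $w'$ and of $w'$ into the disjoint union of the $u_i$ and then projects onto each component, which is exactly your intersection $Y_i\cap Z$; for $(2)\Rightarrow(1)$ the paper likewise builds $w'$ by inserting the letters of $u_i$ missing from $u_i'$ into $w$ in order. Your write-up is somewhat more explicit about the position-set bookkeeping, but the underlying argument is identical.
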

\begin{proof}
	Let $w \subword w'$ for some $w' \in  \shuffle{U}$. Then there is an injective map from the positions of $w$ to $w'$. Further since $w' \in  \shuffle{U}$, there is an injective map from $w'$ to positions in the disjoint union of the positions of $u_i$. Composing the two injective map and projecting to the relevant positions in $u_1, u_2, \dots u_n$ will provide us with the required $u_1', u_2' \dots u_n'$such that $w \in \shuffle{\multiset{u_1', u_2' \dots u_n'}}$.
	\medskip
	
	Suppose $u_1', u_2' \dots u_n'$ is such that $w \in \shuffle{\multiset{u_1', u_2' \dots u_n'}}$  and  $u_i' \subword u_i$. We can obtain the required $w'$ by iterating over all $i$ and  inserting into $w$,  the positions in $u_i$ but not in $u'_i$, in order.
\end{proof}

\newcommand{\tikzfigBlockDecom}	{
	\begin{tikzpicture}[node distance=1cm, initial text=,>=latex, thick,]	
		\node[state,inner sep=0pt,minimum size=5pt] (1b0e1) {};
		\node[state,,right of=1b0e1,inner sep=0pt,minimum size=5pt] (1b1r1) {};
		\node[state,,,right of=1b1r1,inner sep=0pt,minimum size=5pt] (1b1r2) {};
		\node[state,,right of=1b1r2,inner sep=0pt,,draw=none,minimum size=5pt] (1b1e1) {};
		
		\node[state,,right of=1b1e1,inner sep=0pt,minimum size=5pt] (1b2r1) {};
		\node[state,,,right of=1b2r1,inner sep=0pt,minimum size=5pt] (1b2r2) {};
		
		\node[state,,right of=1b2r2,inner sep=0pt,,draw=none,minimum size=5pt] (1b3e0) {};
		\node[state,,draw = none,right of=1b3e0,inner sep=0pt,minimum size=15pt] (1d1) {$\cdots$};
		
		\node[state,,right of=1d1,inner sep=0pt,minimum size=5pt] (1b3r1) {};
		\node[state,,,right of=1b3r1,inner sep=0pt,minimum size=5pt] (1b3r2) {};
		
		\node[state,,right of=1b3r2,inner sep=0pt,,minimum size=5pt] (1b4e1) {};
		
		\node[state,below of=1b0e1,node distance=0.65cm,inner sep=0pt,minimum size=5pt] (2b0e1) {};
		\node[state,,right of=2b0e1,inner sep=0pt,minimum size=5pt] (2b1r1) {};
		\node[state,,,right of=2b1r1,inner sep=0pt,minimum size=5pt] (2b1r2) {};
		
		\node[state,,right of=2b1r2,inner sep=0pt,,draw=none,minimum size=5pt] (2b2e1) {};
		\node[state,right of=2b2e1,inner sep=0pt,minimum size=5pt] (2b2r1) {};
		\node[state,,,right of=2b2r1,inner sep=0pt,minimum size=5pt] (2b2r2) {};
		
		\node[state,,right of=2b2r2,inner sep=0pt,,draw=none,minimum size=5pt] (2b3e0) {};
		\node[state,,draw = none,right of=2b3e0,inner sep=0pt,minimum size=15pt] (2d1) {$\cdots$};
		
		\node[state,,right of=2d1,inner sep=0pt,minimum size=5pt] (2b3r1) {};
		\node[state,,,right of=2b3r1,inner sep=0pt,minimum size=5pt] (2b3r2) {};
		
		\node[state,,right of=2b3r2,inner sep=0pt,,,minimum size=5pt] (2b4e1) {};

		\node[state,below of=2b0e1,node distance=1cm,inner sep=0pt,minimum size=5pt] (3b0e1) {};
		\node[state,,right of=3b0e1,inner sep=0pt,minimum size=5pt] (3b1r1) {};
		\node[state,,,right of=3b1r1,inner sep=0pt,minimum size=5pt] (3b1r2) {};
		
		\node[state,,right of=3b1r2,,draw=none,inner sep=0pt,minimum size=5pt] (3b2e1) {};
		\node[state,,right of=3b2e1,inner sep=0pt,minimum size=5pt] (3b2r1) {};
		\node[state,,,right of=3b2r1,inner sep=0pt,minimum size=5pt] (3b2r2) {};
		
		\node[state,,right of=3b2r2,draw=none,inner sep=0pt,minimum size=5pt] (3b3e0) {};
		\node[state,,draw = none,right of=3b3e0,inner sep=0pt,minimum size=15pt] (3d1) {$\cdots$};
		
		\node[state,,right of=3d1,inner sep=0pt,minimum size=5pt] (3b3r1) {};
		\node[state,,,right of=3b3r1,inner sep=0pt,minimum size=5pt] (3b3r2) {};
		
		\node[state,,right of=3b3r2,inner sep=0pt,minimum size=5pt] (3b4e1) {};
		
		\begin{pgfonlayer}{bg}
			\draw[draw=red, fill=red,,fill opacity=0.2, thin, rounded corners ,] ([xshift=-4pt,yshift=6pt]1b0e1.north west) rectangle ([xshift=4pt,yshift=-2pt]1b1r1.south east);		  
			\draw[draw=red, fill=red,fill opacity=0.2, thin,rounded corners  ] ([xshift=-4pt,yshift=6pt]1b1r1.north west) rectangle ([xshift=4pt,yshift=-2pt]1b2r1.south east);		  
			\draw[draw=red, fill=red,,fill opacity=0.2, thin, rounded corners] ([xshift=-4pt,yshift=6pt]1b2r1.north west) rectangle ([xshift=4pt,yshift=-2pt]1b3e0.south east);		
			\draw[draw=red, fill=red,,fill opacity=0.2, thin, rounded corners] ([xshift=-4pt,yshift=6pt]1b3r1.north west) rectangle ([xshift=4pt,yshift=-2pt]1b4e1.south east);
			
			\draw[draw=red, fill=red,,fill opacity=0.2, thin, rounded corners,] ([xshift=-4pt,yshift=6pt]2b0e1.north west) rectangle ([xshift=4pt,yshift=-2pt]2b1r1.south east);		  
			\draw[draw=red,fill=red,,fill opacity=0.2,  thin, rounded corners] ([xshift=-4pt,yshift=6pt]2b1r1.north west) rectangle ([xshift=4pt,yshift=-2pt]2b2r1.south east);		  
			\draw[draw=red, fill=red,,fill opacity=0.2, thin, rounded corners] ([xshift=-4pt,yshift=6pt]2b2r1.north west) rectangle ([xshift=4pt,yshift=-2pt]2b3e0.south east);		
			\draw[draw=red, fill=red,,fill opacity=0.2, thin,rounded corners ] ([xshift=-4pt,yshift=6pt]2b3r1.north west) rectangle ([xshift=4pt,yshift=-2pt]2b4e1.south east);

			\draw[draw=red, fill=red,,fill opacity=0.2, thin,rounded corners ,] ([xshift=-4pt,yshift=6pt]3b0e1.north west) rectangle ([xshift=4pt,yshift=-2pt]3b1r1.south east);		  
			\draw[draw=red, fill=red,,fill opacity=0.2, thin,rounded corners ] ([xshift=-4pt,yshift=6pt]3b1r1.north west) rectangle ([xshift=4pt,yshift=-2pt]3b2r1.south east);		  
			\draw[draw=red, fill=red,,fill opacity=0.2, thin,rounded corners ] ([xshift=-4pt,yshift=6pt]3b2r1.north west) rectangle ([xshift=4pt,yshift=-2pt]3b3e0.south east);		
			\draw[draw=red, fill=red,,fill opacity=0.2, thin,rounded corners ] ([xshift=-4pt,yshift=6pt]3b3r1.north west) rectangle ([xshift=4pt,yshift=-2pt]3b4e1.south east);
		\end{pgfonlayer}
		\draw[->, thin,inner sep=0.5pt, densely dashed] 
		(1b0e1)  -- (1b1r1);
		\draw[->, thin,inner sep=0.5pt, densely dashed] 
		(1b1r2)  -- (1b2r1);
		\draw[->, thin,inner sep=0.5pt, densely dashed] 
		(1b3r2)  -- (1b4e1)
		;
		
		\draw[->, thin,inner sep=0.5pt, densely dashed] 
		(2b0e1)  -- (2b1r1);
		\draw[->, thin,inner sep=0.5pt, densely dashed] 
		(2b1r2)  -- (2b2r1);
		\draw[->, thin,inner sep=0.5pt, densely dashed] 
		(2b3r2)  -- (2b4e1)
		;
		
		\draw[->, thin,inner sep=0.5pt,  densely dashed] 
		(3b0e1)  -- (3b1r1);
		\draw[->, thin,inner sep=0.5pt, densely dashed] 
		(3b1r2)  -- (3b2r1);
		\draw[->, thin,inner sep=0.5pt, densely dashed] 
		(3b3r2)  -- (3b4e1)
		;
		
		\draw[->, thin,inner sep=0.5pt,  densely dashed] 
		(1b2r2)  -- (1b3e0);			
		\draw[->, thin,inner sep=0.5pt,  densely dashed] 
		(2b2r2)  -- (2b3e0);
		\draw[->, thin,inner sep=0.5pt,  densely dashed] 
		(3b2r2)  -- (3b3e0);

		\draw[->, thin,inner sep=0.5pt, ] 
		(1b1r1)  edge node [above] {$a_1$} (1b1r2)
		(1b2r1)  edge node [above] {$a_2$}  (1b2r2)
		(1b3r1)  edge node [above] {$a_n$}  (1b3r2);
		
		\draw[->, thin,inner sep=0.5pt, ] 
		(2b1r1)  edge node [above] {$a_1$} (2b1r2)
		(2b2r1)  edge node [above] {$a_2$}  (2b2r2)
		(2b3r1)  edge node [above] {$a_n$}  (2b3r2);
		
		\draw[->, thin,inner sep=0.5pt, ] 
		(3b1r1)  edge node [above] {$a_1$} (3b1r2)
		(3b2r1)  edge node [above] {$a_2$}  (3b2r2)
		(3b3r1)  edge node [above] {$a_n$}  (3b3r2);
		
		
		\node[state,,,below of=3b0e1,node distance = 0.75cm,  draw=none,inner sep=0pt,minimum size=5pt] (s1) {$\overline{q_0} $};
		\node[state,,,below of=3b1r1,node distance = 0.75cm, draw=none,inner sep=0pt,minimum size=5pt] (s2) {$\overline{q_1} $};
		\node[state,,,below of=3b2r1,node distance = 0.75cm,  draw=none,inner sep=0pt,minimum size=5pt] (s3) {$\overline{q_2} $};
		\node[state,,,below of=3b3r1, node distance = 0.75cm, draw=none,inner sep=0pt,minimum size=5pt] (s4) {$\overline{q_n} $};
		
		\node[state,,xshift=0.5cm,above of=1b1r1,node distance = 0.75cm,  draw=none,inner sep=0pt,minimum size=5pt] (a1) {${a_1} $};
		\node[state,xshift=0.5cm,,above of=1b2r1,node distance = 0.75cm, draw=none,inner sep=0pt,minimum size=5pt] (a2) {${a_2} $};
		\node[state,above of=1d1,node distance = 0.75cm, draw=none,inner sep=0pt,minimum size=5pt] (a25) {$\cdots $};
		
		\node[state,,xshift=0.5cm,above of=1b3r1,node distance = 0.75cm,  draw=none,inner sep=0pt,minimum size=5pt] (a3) {${a_n} $};

		\node[state,,xshift=0cm,left of=1b0e1,node distance = 1cm,  draw=none,inner sep=0pt,minimum size=5pt] (r1) {${\rho_1} $};
		\node[state,xshift=0cm,,left of=2b0e1,node distance = 1cm, draw=none,inner sep=0pt,minimum size=5pt] (r2) {${\rho_2} $};
		\node[state,xshift=0cm,,below of=r2,node distance = 0.4cm, draw=none,inner sep=0pt,minimum size=5pt] (r25) {$\vdots $};
		\node[state,,xshift=0cm,left of=3b0e1,node distance = 1cm,  draw=none,inner sep=0pt,minimum size=5pt] (r3) {${\rho_m} $};
		\node[state,,xshift=0cm,above of=r1,node distance = 0.75cm,  draw=none,inner sep=0pt,minimum size=5pt] (w1) {${w} $};
		
		\draw[draw=blue,  thick,  dashed,rounded corners] ([xshift=-7pt,yshift=8pt]1b0e1.north west) rectangle ([xshift=4pt,yshift=-3pt]s1.south east);		
		\draw[draw=blue,  thick,  dashed,rounded corners,] ([xshift=-7pt,yshift=8pt]1b1r1.north west) rectangle ([xshift=4pt,yshift=-3pt]s2.south east);		  
		\draw[draw=blue,  thick,  dashed,rounded corners,] ([xshift=-7pt,yshift=8pt]1b2r1.north west) rectangle ([xshift=4pt,yshift=-3pt]s3.south east);	
		\draw[draw=blue,  thick,   dashed,rounded corners,] ([xshift=-7pt,yshift=8pt]1b3r1.north west) rectangle ([xshift=3pt,yshift=-3pt]s4.south east);		      	      
		
	\end{tikzpicture}
}
		\newcommand{\overbar}[1]{\mkern 1.5mu\overline{\mkern-1.5mu#1\mkern-1.5mu}\mkern 1.5mu}

\def\Ax{0}
\def\Ay{0}
\def\Bx{-4}
\def\By{-8}
\def\Cx{5}
\def\Cy{\By}
\def\Dx{\Bx}
\def\Dy{\By-2}
\def\Ex{\Cx}
\def\Ey{\Dy}
\def\Fx{\Bx}
\def\Fy{\Cy-1}
\def\Gx{\Cx}
\def\Gy{\Fy}
\def\Hx{\Bx}
\def\Hy{\Fy-4}
\def\Ix{\Cx}
\def\Iy{\Hy}
\def\Mlx{0}
\def\Mly{\By}
\newcommand{\tikzfigParseTree}{
{
	\begin{tikzpicture}
		
		\coordinate (A) at (\Ax,\Ay);
		\node at (A) [anchor = south] {};
		\coordinate (B) at (\Bx,\By);
		\node at (B) [anchor = east] {};
		\coordinate (C) at (\Cx,\Cy);
		\node at (C) [xshift=-25pt,anchor = west] {};
		
		\coordinate (D) at (\Mlx,\By);
		\node at (D) [anchor = south] {};
		\coordinate (E) at (\Mlx+0.7,\By);
		\node at (E) [anchor = south] {};
		\coordinate (F) at ([xshift=12,yshift=-100]A);
		\coordinate (G) at ([xshift=15,yshift=-160]A);
		\coordinate (H) at ([xshift=-50]D);
		\coordinate (I) at ([xshift=-4]D);
		\coordinate (J) at ([xshift=50]D);
		\coordinate (K) at ([xshift=100]D);
		\coordinate (L) at ([xshift=-130]D);
		\coordinate (M) at ([xshift=-112]D);
		\coordinate (N) at ([xshift=142]D);
		
		\draw[color=gray!40,line width = 0.4mm, fill=gray!20] (F) -- (H) -- (I) -- (G) -- (J) -- (K) -- cycle;
		
		
		\fill[black] (F) circle (3pt)node[yshift=-2,left]{\Large$A,\langle \overbar{q_1},\overbar{q_2}\rangle $};
		
		\fill[black] (G) circle (3pt)node[yshift=-2,left]{\Large$A,\langle \overbar{q_1},\overbar{q_2}\rangle $};
		
		
		\fill[black] (M) circle (3pt)node[yshift=-5,below]{\Large$a_{1}$};
		\fill[black] (H) circle (3pt)node[yshift=-5,below]{\Large $a_{i}$};
		\fill[black] (I) circle (3pt)node[yshift=-5,below]{\Large$a_{i'}$};
		\fill[black] (J) circle (3pt)node[yshift=-5,below]{\Large$a_{j'}$};
		\fill[black] (K) circle (3pt)node[yshift=-5,below]{\Large$a_{j}$};
		\fill[black] (N) circle (3pt)node[yshift=-5,below]{\Large$a_{n}$};

		
			\coordinate (D11) at ([xshift=35, yshift=-17]B);
		\node at (D11) [anchor = south] { $\cdots$};

		\coordinate (D11) at ([xshift=85, yshift=-17]B);
		\node at (D11) [anchor = south] { $\cdots$};

		\coordinate (D31) at ([xshift=135, yshift=-17]B);
\node at (D31) [anchor = south] { $\cdots$};

		\coordinate (D41) at ([xshift=185, yshift=-17]B);
		\node at (D41) [anchor = south] { $\cdots$};

		\coordinate (D21) at ([xshift=235, yshift=-17]B);
		\node at (D21) [anchor = south] { $\cdots$};

		\draw[line width = 0.4mm] (A) to[out=-90,in=70] (F) to[out=-100,in=70] (G);
		
		\draw [line width = 0.4mm](A)--(B)--(C)--cycle;
		
	\end{tikzpicture}
}}

We refer to $v_1, \dots v_n$ as the \emph{witnessing subwords} of $u_1, \dots u_n$ for $w \subword \shuffle{U}$.  We can also define the witnessing projection for these witnessing subwords.

\begin{definition}[Witnessing Projection of a subword expression]
	Let $U$ be a multiset of words, and $w$ be a word. Let $u_1, \dots u_n$ be an enumeration of $U$. Let $\pi: \{1,\dots,n\} \to \alphabet^\ast$ be a map. 
	We say that $\pi$ is a \emph{witnessing projection} for the expression  $w \subword \shuffle{U} $  if 
	\begin{enumerate}
		\item ${\pi(i)} \subword u_i$ 
		\item $w \in \shuffle{\{ {\pi(i)} \mid i \in \{1,\dots,n\}\}}$. 
	\end{enumerate}
\end{definition}
\begin{claim}\label{claim:SATiffWitnessProj}
	The expression $w \subword \shuffle{U} $ is satisfiable if and only if it has a witnessing projection. 
\end{claim}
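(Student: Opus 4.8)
The plan is to obtain this as an immediate corollary of Claim~\ref{claim:equivSAT}, observing that the notion of witnessing projection is nothing more than a repackaging of the tuple $(u_1', \dots, u_n')$ appearing in the second item of that claim. First I would fix once and for all an arbitrary enumeration $u_1, \dots, u_n$ of the multiset $U$; the property ``$w \subword \shuffle{U}$ is satisfiable'' does not refer to an enumeration, so any choice is harmless, and I would remark in passing that the existence of a witnessing projection is likewise independent of the enumeration chosen, since permuting the enumeration induces the corresponding permutation of $\pi$.

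For the forward direction, assume $w \subword \shuffle{U}$ is satisfiable, i.e.\ there is a word $w'$ with $w' \in \shuffle{U}$ and $w \subword w'$. This is precisely the first item of Claim~\ref{claim:equivSAT}, so the second item holds: there are words $u_1', \dots, u_n'$ with $w \in \shuffle{\multiset{u_1', \dots, u_n'}}$ and $u_i' \subword u_i$ for each $i$. Defining $\pi(i) := u_i'$ then yields a map satisfying both defining conditions of a witnessing projection for $w \subword \shuffle{U}$.

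For the converse, given a witnessing projection $\pi$ set $u_i' := \pi(i)$; the two conditions in the definition of a witnessing projection are exactly the two conjuncts of the second item of Claim~\ref{claim:equivSAT}, so by that claim the first item holds, i.e.\ there is $w' \in \shuffle{U}$ with $w \subword w'$, which says that $w \subword \shuffle{U}$ is satisfiable.

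I do not expect a genuine obstacle: all the combinatorial content (composing the two injective position maps, or conversely reinserting the missing positions) has already been discharged in the proof of Claim~\ref{claim:equivSAT}, and what remains is purely a matter of unfolding definitions. The only point requiring a sentence of care is keeping the bookkeeping between the multiset $U$ and a fixed enumeration of it consistent, which is why I would pin down one enumeration at the start.
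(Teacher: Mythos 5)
Your proposal is correct and follows exactly the paper's own route: the paper likewise derives the claim directly from Claim~\ref{claim:equivSAT}, identifying the witnessing projection $\pi$ with the tuple $(u_1',\dots,u_n')$ in its second item. Your version merely spells out the fixed-enumeration bookkeeping that the paper leaves implicit.
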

\begin{proof}
	We use the second statement from Claim~\ref{claim:equivSAT}. 
	If the expression is satisfiable, then a witnessing projection $\pi$ will assign $u_i'$ to each $i$. Conversely if there is a witnessing projection, then clearly that gives us the necessary $u_i'$. 
	\end{proof}

Next we extend the notion of witnessing projections to a set of expressions. 
\begin{definition}[Witnessing Projection for a set of expressions]
	Let $R$ be a set of expressions of the form $w \subword \shuffle{U}$. Let $\Pi$ be a map that assigns to each expression $r \equiv w \subword \shuffle{U}$ a map $\pi_r : \{1, 2, \dots \sizeof{U}\} \to \Sigma^\ast$. We say $\Pi$ is a witnessing projection for $R$ if  for each $r \in R$, $\Pi(r)$ is a witnessing projection for $r$.
\end{definition}

\begin{example}
	Consider the expression $baab \subword \shuffle{\multiset{aba, aba}}$. 
	A witnessing projection for this expression assigns $ba$ to the first occurrence and $ab$ to the second occurrence of $aba$ in the RHS.
\end{example}

\noindent
In light of the above definition, we can write Condition~\ref{cond:ext-rel} in the satisfaction of a constraint using extended assignment, equivalently  as
\begin{enumerate}[label = {\textbf{E}\arabic*'}]
	\setcounter{enumi}{2}
	\item\label{cond:ext-rel-dash} There exists a witnessing projection for $\extendedassignment(\Rel)$, where $\extendedassignment(\Rel) = \multiset{ \extendedassignment(r) \mid r\in \Rel}$. Here $\extendedassignment(y, Y) \equiv \extendedassignment(y) \subword \shuffle{\extendedassignment(Y)}$.
\end{enumerate}

Recall that $\extendedassignment(Y)$ is an overloaded  notation  for the multiset $\multiset{\extendedassignment(\outputvar{x}{i}) \mid x \in \variableset, i \in \{1, 2, \dots, \multiplicity(x)\}, \constraint{x}{i} = (y,Y)}.$ Suppose $\Pi$ is a witnessing projection for $\extendedassignment(\Rel)$. Let $r \equiv (x, Y) \in \Rel$ and  let $\Pi(r) = \pi_r$. Consider the enumeration of $\extendedassignment(Y) = \multiset{\extendedassignment(o_1), \extendedassignment(o_2), \dots} $ where each $o_j$ is an output variable. Instead of $\pi_r(i)$ we may also write $\pi_r(o_i)$.

Thus, for the inductive case, we need to find good $w'_i$, $u_1', \dots u'_{\multiplicity(x_i)}$ such that
\begin{enumerate}
	\item $\sizeof{w'_i} \le B_i$
	\item  for each $\ell$, $\sizeof{u'_\ell} \le 2 \cdot t \cdot B_i$,
	\item $w'_i  \in \Mem(x_i)$,
	\item $(w'_i, u_\ell') \in R(\transducer_\ell^{x_i})$ for each $\ell$,  and 
	\item $v_\ell \subword u'_\ell$ where $v_\ell$ is a witnessing subword of $u_\ell = \extendedassignment_{i-1}(\outputvar{x_i}{\ell})$ for the relation  $\extendedassignment_{i-1}(y) \subword  \shuffle{\extendedassignment_{i-1}(Y)}$ letting $(y, Y) = \constraint{x_i}{\ell}$. (Note that $v_\ell$ exists because $\extendedassignment_{i-1}$ is satisfiable by induction hypothesis.)
\end{enumerate}

\noindent\textbf{Block decomposition } Towards the above goal, let us consider $w_i$ and $\rho_1 \dots \rho_m$ where $m = \multiplicity(x_i)$ and for $\ell: 1 \le \ell \le m$,  $\rho_\ell$ is an accepting run of the transducer $\Transducer{x_i}{\ell}$ on the input $w_i$ producing $u_\ell = \extendedassignment_{i-1}(\outputvar{x_i}{\ell})$. This is depicted in Figure~\ref{fig:blockdecompose}. We factorize  each $\rho_\ell$ into blocks. The number of blocks is exactly $n+1$, where $ n = \sizeof{w_i}$. For $j>0$, the $j$th block contains the transition on the $j$th letter of $w_i$, followed by all the trailing $\epsilon$-input transitions. The very first block (block $0$),  contains the leading $\epsilon$-input transitions if present. Now, we decompose the output of $\rho_\ell$ according to the blocks. That is, $u_\ell = u^0_\ell u^1_\ell \cdot  u^2_\ell \dots  u^n_\ell  $. Next we want to identify the subword of $w_i$ (and subruns of $\rho_\ell$) that needs to be preserved. 

	\begin{figure}
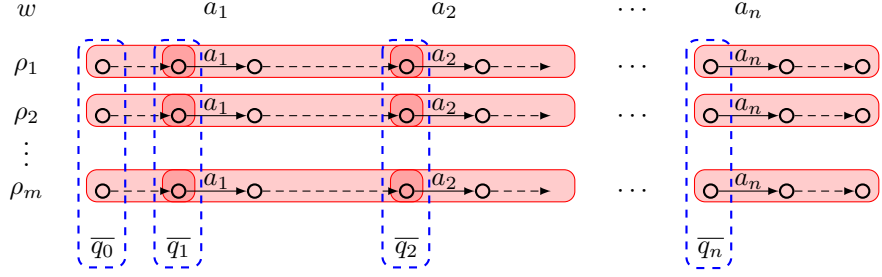

		\centering
	\scalebox{1}{\tikzfigBlockDecom}

	\caption{The figure describes the block decomposition of the runs of transducers. The very first block includes the sequence of $\epsilon$-transitions (if any) before the first letter of the input is read. Every other block includes a transition on input letter followed by a sequence of $\epsilon$ transitions.  Blocks here are represented as overlapping rectangles coloured in red.  The transition on input is represented as a solid arrow and a sequence of $\epsilon$ transitions is represented as a dashed arrows.} \label{fig:blockdecompose}
	\end{figure}

\medskip

\noindent\textbf{Identifying crucial blocks and positions.} Consider $v_\ell$, the witnessing subword of $u_\ell$ for $\extendedassignment_{i-1}(y) \subword  \shuffle{\extendedassignment_{i-1}(Y)}$  where $(y, Y) = \constraint{x_i}{\ell}$. Fix an embedding of $v_\ell$ in $u_\ell$. If this embedding is incident\footnote{This means that the image of the witnessing projection intersects the positions of the factor $u^j_\ell$.} on the factor $u^j_\ell$ for $j>0$, we will mark the the $j$th block as well as the $j$th letter of $w_i$ as \emph{crucial}. 
Since $\sizeof{\extendedassignment_{i-1}(y) } \le B_{i-1}$ (by induction hypothesis), the number of crucial blocks in $\rho_\ell$ is at most $B_{i-1}$. Hence the number of crucial positions in $w_i$ is at most $m \times B_{i-1} $ where $m = \multiplicity(x_i)$. Notice that if we  shrink $w_i$ to a subword that 1) preserves the crucial positions, 2) preserves membership in $\Mem(x_i)$ and 3) yields subruns of $\rho_\ell$ that preserves the crucial blocks, then the satisfiability would be preserved. Our next aim is to obtain such a shrinking, which is sufficiently small to also satisfy the length requirements.

\noindent\textbf{Annotated parse trees}  Consider a grammar $G_i$ for $\Mem(x_i)$ in Chomsky Normal Form and a parse tree of $w_i$ in $G_i$. Annotate the nodes of this parse-tree by pairs of $m$-tuple of states. The $m$-tuple of states $\overline{q_{j}}$ correspond to the states of the transducers at the boundary between $(j-1)$th block  and $j$th block. A node is annotated with  $\langle\overline{q_{j-1}}, \overline{q_{j'}}\rangle$ if the yield of the subtree rooted at that node generates the factor of $w_i$ from $j$th letter to $j'$th letter (for some $j' \ge j$). Notice that some of the leaves are marked as crucial. We will mark an internal node as crucial if it is the least common ancestor of two crucial nodes. 

\begin{figure}
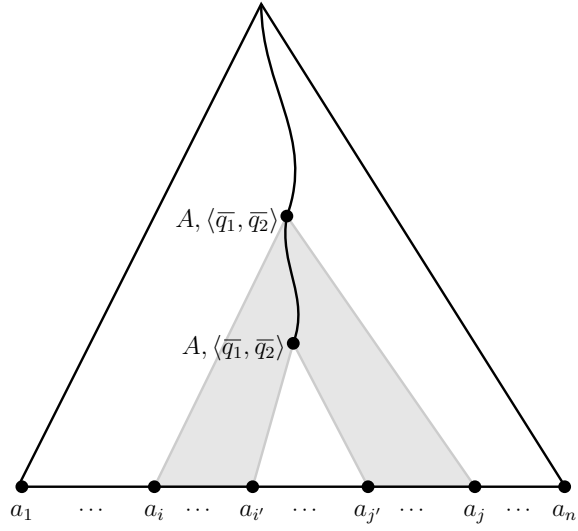

	\centering
	\scalebox{0.8}{\tikzfigParseTree}
	\caption{The figure illustrates the annotations of a nodes and pumping down in a parse tree}\label{fig:parsetree}
\end{figure}

\noindent\textbf{Shrinking the parse tree}
Now, if there are two nodes $n_1$ and $n_2$ in this tree such that 1) both  have the same annotated non-terminal, 2) $n_1$ is an ancestor of $n_2$, 3) there are no crucial nodes in the path from $n_1$ to $n_2$, then we replace the subtree rooted at $n_1$ with the subtree rooted at $n_2$ (pumping down). This is illustrated in Figure~\ref{fig:parsetree}. We repeat this until no more pumping down is possible. The yield of this shrunk parse tree is the required word $  {w'_i}$. Let us analyse the size of ${w'_i}$. 
 Any path without a crucial node is of length at most  $p^3 t^{2m} $. Hence the \emph{skeleton} of the parse tree that contains all the crucial nodes and the paths from them to the root will be of size at most $2  n_C  \times p^3 t^{2m}$, where $n_C$ is the number of crucial positions of $w_i$. 
 
  Any sub tree rooted at any of the nodes of the skeleton is of size at most $ 2^{p^3 t^{2m}} $. Hence the total size of the tree is at most $2  n_C  \times p^3 t^{2m} \times 2^{p^3  t^{2m}} $. Since $n_C \leq m  B_{i-1}  $, we have $\len{w'_i} \leq B_i $.

\noindent\textbf{Shrinking the transducer runs } Note that, since the shrinking preserves the annotations,  shrinking the $\rho_\ell$s appropriately   gives us an accepting subrun $\rho'_\ell$ that preserves the crucial blocks. The number of blocks in $\rho'_\ell$ is at most $B_i$. Now, we need to shrink the size of each block as well, in order to satisfy $\len{u'_\ell} \le 2ctB_i$. For this, consider the witnessing subword of $u_\ell$. Note that it is still embedded in $\out(\rho'_\ell)$. If this embedding is incident on the output of a transition we will mark this transition as crucial. Further all the transitions that read a letter from $w'_i$ are also crucial. Note that the number of crucial transitions is at most $B_i + B_{i-1}$. Now, let us shrink the run $\rho'_\ell$ without losing crucial transitions to get $\rho''_\ell$. The number of transitions in $\rho''_\ell$ is at most $t \times (B_i + B_{i-1} +1)$ where $t$ is the number of states. Let $u_\ell ' = \out(\rho''_\ell)$. Then it is easy to see that   $\len{u'_\ell} \le 2ctB_i$.

Finally, we can give the required $\extendedassignment_i$. Below, $x_j$ comes from $\variableset$ and $\outputvar{x}{\ell}$ comes from $O$. 

$\extendedassignment_i(x_j) = \begin{cases}  {w'_i} &\text{ if }  j = i\\
	\extendedassignment_{i-1}(x_j) &\text{ if } j \neq i
\end{cases}$
\hfil
$\extendedassignment_i(\outputvar{x}{\ell}) = \begin{cases}  {u'_\ell} &\text{ if }  x = x_i \\
	\extendedassignment_{i-1}(\outputvar{x}{\ell}) &\text{ otherwise}
\end{cases}$

This establishes the proof of Lemma~\ref{lemm:extendedsmallmodel}.
 \end{proof}

\section{ Satisfiability in NEXPTIME for regular constraints} \label{sec:reg}
Our approach towards an \nexp\ procedure is very similar to that of the previous section.  Towards this, we prove the following lemma that shows that if there is a satisfying assignment, then there is a satisfying extended assignment of at most exponential size. Further more,  the proof of the lemma is very similar to the proof of Lemma~\ref{lemm:extendedsmallmodel}.  We only highlight the main differences with it here.
\newcommand{\regbound}{D}
\begin{lemma}\label{lemm:regsmallmodel}
	Let $C$ be a satisfiable acyclic regular string constraint. Then $C$ has a satisfying extended assignment $\extendedassignment$ such that $\len{\extendedassignment(x)} \le \regbound $ for all $x \in \extendedvariableset(C)$, where  $\regbound$ is ${2^{\mathcal{O}(m\cdot k\cdot\log t) }}$ where 
		  $t = \max_{T \in \TRSET{C} \cup  \AUTSET{C}} \statesize{T}$ is the maximum number of states of any transducer or the NFA occurring in $C$, and
		  $m, k$ are as in Theorem \ref{thm:smallmodel}.
\end{lemma}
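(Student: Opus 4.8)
The plan is to mirror the proof of Lemma~\ref{lemm:extendedsmallmodel}, replacing the context-free reasoning (parse trees, Chomsky Normal Form, doubly-exponential subtree bounds) by regular reasoning (runs of an NFA, pumping on a linear run), which collapses one level of the tower. As before, fix a topological ordering $x_1, \dots, x_k$ of $\variableset$ with respect to $<$, start from a satisfying extended assignment $\extendedassignment_0$, and build a sequence $\extendedassignment_0, \extendedassignment_1, \dots, \extendedassignment_k = \extendedassignment$ where at stage $i$ we shrink $\extendedassignment_{i-1}(x_i)$ (and the associated output variables) while preserving Conditions~\ref{cond:ext-mem}--\ref{cond:ext-rel}. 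The invariant is now $\len{\extendedassignment_i(x_j)} \le D_j$ and $\len{\extendedassignment_i(\outputvar{x_j}{\ell})} \le 2ctD_j$ for $j \le i$, where $D_1 = t$ and $D_n = \mathcal{O}(m \cdot t^{2m} \cdot D_{n-1})$, so that $D_k = 2^{\mathcal{O}(mk\log t)}$ as claimed.

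The inductive step reuses the machinery verbatim: block decomposition of the transducer runs $\rho_1, \dots, \rho_m$ on $w_i = \extendedassignment_{i-1}(x_i)$; identifying crucial blocks and crucial positions via the witnessing subwords $v_\ell$ of $u_\ell = \extendedassignment_{i-1}(\outputvar{x_i}{\ell})$ (there are at most $D_{i-1}$ crucial blocks per run, hence at most $m D_{i-1}$ crucial positions of $w_i$); and then shrinking. The key change is that instead of annotating a parse tree we annotate an \emph{accepting run} of the NFA $\Mem(x_i)$ on $w_i$: each position of the run carries the NFA state together with the $m$-tuple of transducer states at the block boundary. A maximal run between two consecutive crucial positions that never repeats such a combined state has length at most $p \cdot t^{2m}$ (here $p$ is the NFA size, which is $\le t$), and whenever a combined state repeats in a crucial-free segment we excise the loop. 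This yields $\len{w'_i} \le \mathcal{O}(m D_{i-1} \cdot t^{2m}) = D_i$. Shrinking the transducer runs to $\rho''_\ell$ exactly as in the context-free proof gives $\len{u'_\ell} \le 2ctD_i$, re-establishing both invariants.

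The main obstacle — really the only subtlety — is making precise that excising a loop in the annotated NFA run simultaneously (i) preserves NFA acceptance, (ii) preserves the block boundary states of every $\rho_\ell$, so that the corresponding subruns $\rho'_\ell$ remain accepting runs with the crucial blocks intact, and (iii) preserves the witnessing projection for every relational constraint in which $x_i$ occurs on the left, since the crucial positions are exactly the ones touched by those projections. All three follow because the combined annotation records precisely the information needed to glue subruns, and crucial positions are never removed; this is the direct analogue of the parse-tree argument and requires no new idea. Since $C_\pcp$ is built in polynomial time and all other steps (guessing the extended assignment of size $\le D$, checking membership in NFAs, checking transductions, checking the NP-complete subword-in-shuffle conditions) stay within \nexp, Lemma~\ref{lemm:regsmallmodel} gives the \nexp\ upper bound of Theorem~\ref{thm:nexpcReg}; the matching lower bound is already noted to hold from \cite{AiswaryaMS22}.
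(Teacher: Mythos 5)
Your proposal is correct and follows essentially the same route as the paper: a topological induction that annotates each input position with the NFA state together with the $m$-tuple of transducer states at the block boundary, excises crucial-free loops on this combined state, and then shrinks the $\epsilon$-blocks of the transducer runs to bound the output variables. The only cosmetic difference is that you bound a repeat-free segment by $p\cdot t^{2m}$ where the combined state space actually gives $t^{m+1}$; both yield the claimed $2^{\mathcal{O}(mk\log t)}$ bound.
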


 As in the previous section, to obtain a satisfying extended assignment of exponential size,  we will construct a sequence of $k$ extended assignments  $ \extendedassignment_0, \extendedassignment_1, \dots ,\extendedassignment_k$ such that for each $i \in \{1,\dots,k\}$, $\extendedassignment_i$ is a satisfying extended assignment  and further for each $j \leq i$, $\extendedassignment_{i}(j) \leq D_i$, where $D_i = {2^{((m+1)\cdot i\cdot\log t + i \cdot \log m)}}$. The base case is immediate,   the smallest assignment for the variable $x_1$ is of size at most $t$. 
For the inductive case, we assume that we have already constructed the extended assignment $\extendedassignment_{i-1} $ of appropriate size. 
Consider  $w_i = a_1 \dots a_n =  \extendedassignment_{i-1}(x_i)$ and $\rho_1 \dots \rho_m$ where $m = \multiplicity(x_i)$ and for $\ell: 1 \le \ell \le m$,  $\rho_\ell$ is an accepting run of the transducer $\Transducer{x_i}{\ell}$ on the input $w_i$ producing $u_\ell = \extendedassignment_{i-1}(\outputvar{x_i}{\ell})$.  Further let $\rho$ be a run in $\Mem(x_i)$ on $w_i$. 
Now consider the block decomposition of each $\rho_i$, $i \in \{1,\dots,m\}$ as in the previous section and mark  the crucial blocks based on the witnessing subwords.  
Notice that there are at most $m \times D_{i-1} $ crucial blocks. We annotate each input letter with the $m$-tuple of states that appears in the corresponding block boundary, 
as in the previous section.
Further, we mark the input letter as crucial if the corresponding block is crucial. 
Now for any two input letters $a_i,a_j$, if the annotations ($m$ tuple of states) for it are the same, there are no crucial letters between them and the states reached in $\rho$ after reading it is the same, then we delete all the transitions between them. Further, we also delete the corresponding blocks. Let $\rho'$ be a sub-run of $\rho$ and $(\rho'_j)_{j\in \{ 1, \dots m\}}$ be  sub-runs of $(\rho_j)_{j\in \{ 1, \dots m\}}$ such that no more deletions are possible. Firstly notice that each of these are a valid runs in the respective automata.  We claim that $\len{\out(\rho'_0)} \leq D_i$, for this we note that there can be at most $m \times D_{i-1}  $ many crucial letters and between any two of them there can be at most $t^{m+1}$ many letters. With this we obtain that $\len{\out(\rho')} \leq (m \times D_{i-1} \times  t^{m+1}) \leq D_i$.
 The length of each $\rho'_j$ for $j \in \{ 1, \dots,m\}$ can still be very large. An analysis similar to the one done in the previous section that shrinks any long sequence of $\epsilon$ transitions within each block,  will also provide us with the required bounds for the variables from $O$.

\section{Discussions} \label{sec:discussions}

\subsection{Application: Regular abstractions and DFA sizes }

Let $L$ be any language.  We define the Parikh image closure ($\Pi(L)$), downward closure ($L\downarrow$) and upward closure ($L\uparrow$) of it as follows. Let $\Sigma = \{a_1, \cdots a_n \}$.  For any word $w \in \Sigma^*$, we let $p(w) = \langle \len{w}_{a_1}, \cdots, \len{w}_{a_n}  \rangle$ denote the Parikh image of $w$, that is, it counts the occurrences of each letter from $\Sigma$. Here, by $\len{w}_{a}$, we mean the number of times $a$ occurs in $w$. 
\[
\begin{array}{lr}
  \Pi(L) = \{v \in \Sigma^* \mid \exists w \in L, p(v) = p(w) \} \qquad&\qquad L\downarrow = \{v \in \Sigma^* \mid \exists w \in L, v \subword w\} \\
  L\uparrow = \{v \in \Sigma^* \mid \exists w \in L, w \subword v\} 
\end{array}
\]

Efficient computability of these regular abstractions of languages of infinite state systems is a relevant question for verification and automata theory \cite{AtigCHKSZ16, AtigBKS14, Zetzsche16,AtigBT08}. It is interesting to see if small automata representing these abstractions can be computed for succinctly given infinite state systems.

{	We address here the case where a large pushdown system is presented as a small pushdown system and a certain number of finite state automata (referred to as the smaller components). Here the language of the large pushdown system is same as the intersection of the languages of the smaller component.  We argue that the lower bound on the size of the regular abstraction  holds even when the language of a pushdown system is presented succinctly as an intersection of smaller components. 

	\medskip

For any $n \in \Nat$, let $L(n)$ be the language over $\Sigma = \{0,1,a\}$ that accepts the word $w = n_0o_1n_1o_2n_2 \cdots n_k$, where $n_0 = 0^n$, $n_k = 1^n$, $o_1,o_2, \cdots o_k = \inc$, then $\len{w}_{\inc} = 2^n$. From Section~\ref{sec:2nexp-lower-bound} we know that we can construct $n$ DFAs $B_1, B_2, \dots B_n$ such that $\bigcap_i L(B_i) = \{w\}$. Since any DFA recognising the closure of this language requires at least $2^n$ states, we have the following claim.

\begin{claim}
Given $n$ regular languages as $n$ finite state automata, let $L$ be the language obtained by intersecting the languages of these automata. Then, the regular representations for Parikh image closure, downward closure and upward closure of $L$ can be of exponential size.
\end{claim}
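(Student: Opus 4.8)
The plan is to leverage that the language $L=L(n)$ above is a \emph{singleton} $\{w\}$ with a \emph{small} presentation: as observed just before the claim, $L$ is the intersection of $n$ DFAs of size $\mathcal O(n)$ (the DFAs of Claim~\ref{claim:countingDFA}, plus a constant number of further $\mathcal O(n)$-state DFAs that restrict the operators to $\inc$, force the first number to $\bzero^n$ and the last to $\bone^n$, and forbid $\inc$ immediately after $\bone^n$), while $w$ contains $N:=\len{w}_{a}=2^n$ occurrences of the counter letter $a$ ($=\inc$). It therefore suffices to prove the self-contained statement: for any singleton $\{w\}$ in which a letter $a$ occurs $N$ times, each of $\Pi(\{w\})$, $\{w\}\downarrow$ and $\{w\}\uparrow$ needs a DFA with at least $N+1$ states. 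Since $N=2^n$ and the presentation has size $\mathcal O(n^2)$, the exponential lower bound follows.

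For $\{w\}\downarrow$ I would give a Myhill--Nerode argument on the $N+1$ words $a^0,a^1,\dots,a^N$, all of which lie in $\{w\}\downarrow$ because $a^N\subword w$: for $i<j\le N$ the right context $a^{N-i}$ sends $a^i$ to $a^N\in\{w\}\downarrow$ but sends $a^j$ to $a^{N+(j-i)}\notin\{w\}\downarrow$ (as $w$ has only $N$ occurrences of $a$), so these $N+1$ words are pairwise inequivalent. For $\Pi(\{w\})$ the argument is analogous: writing $c_a=N$, $c_{\bzero}$, $c_{\bone}$ for the letter counts of $w$, the same words $a^0,\dots,a^N$ are separated by the context $\bzero^{c_{\bzero}}\bone^{c_{\bone}}a^{c_a-i}$, which turns $a^i$ into a word of Parikh image $p(w)$ but turns $a^j$ (for $j>i$) into one with $c_a+(j-i)$ copies of $a$.

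For $\{w\}\uparrow$ I would argue on the $\len{w}+1$ prefixes $p_0,p_1,\dots,p_{\len{w}}$ of $w$, where $p_i$ has length $i$. For $i<j$, let $r$ be the suffix of $w$ of length $\len{w}-j$; then $p_j\cdot r=w\in\{w\}\uparrow$, whereas $\len{p_i\cdot r}=i+\len{w}-j<\len{w}$, so $p_i\cdot r$ is too short to be a superword of $w$ and lies outside $\{w\}\uparrow$. Hence the $\len{w}+1$ prefixes are pairwise Myhill--Nerode inequivalent and the minimal DFA for $\{w\}\uparrow$ has at least $\len{w}+1\ge N=2^n$ states.

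Since the surrounding discussion measures size by DFA states, the elementary Myhill--Nerode arguments above already suffice. The one asymmetry I would flag --- more a conceptual subtlety than a calculation --- is that the downward and Parikh cases additionally yield \emph{NFA} lower bounds, by intersecting the candidate automaton with $a^\ast$ (resp.\ $\bzero^\ast\bone^\ast a^\ast$) to get an NFA for a language whose shortest accepting run cannot be pumped and hence visits $\ge N+1$ states; for the upward closure this intersection collapses to the empty language (every superword of $w$ carries all letters of $w$ in their original relative order), so an NFA bound there would need a separate argument. A last routine check is that the extra constraint DFAs named above pare the intersection down to exactly $\{w\}$, which follows from the correctness of the constructions in Claim~\ref{claim:countingDFA} and Claim~\ref{claim:countingPDA}.
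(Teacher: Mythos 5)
Your proposal is correct and follows essentially the same route as the paper: exhibit the singleton $\{w\}$ with $2^n$ occurrences of $\inc$ as an intersection of $\mathcal{O}(n)$-size DFAs and observe that any DFA for its closures needs exponentially many states — the paper simply asserts this last step, while you supply the Myhill–Nerode details (and correctly note the small extra DFAs needed to pare $L_\ell$ down to the singleton). The NFA-versus-DFA remark is a nice bonus but not required, since the claim, as the paper uses it, is about DFA state counts.
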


Consider the language given in Claim~\ref{claim:countingPDA} i.e. $L(A) \cap L'_\ell$, since it can recognize words with exactly $2^{2^n}$ many symbols from $\Gamma_2$, we have the following claim.

\begin{claim}
	Given $n$ regular languages as $n$ finite state automata and a pushdown system, let $L$ be the language got by intersecting the languages of these automata. Then,  the regular representations for Parikh image closure, downward closure and upward closure of $L$ can be of double exponential size.
\end{claim}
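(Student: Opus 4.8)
The plan is to reuse the counting gadget from Section~\ref{sec:2nexp-lower-bound} verbatim, taking $\Gamma_2$ to be a single letter $a$. Let $A$ be the $3$-state PDA of Claim~\ref{claim:countingPDA} (with two stack symbols) and let $A_1,\dots,A_\ell$ be the $\ell$ DFAs of Claim~\ref{claim:countingDFA}, each with $\mathcal{O}(\ell)$ states; so the whole presentation --- the PDA together with the $\ell$ finite-state automata --- is of polynomial size in $\ell$. Set $L = L(A)\cap\bigcap_{i=1}^\ell L(A_i) = L(A)\cap L'_\ell$. By Claim~\ref{claim:countingPDA}, projecting every word of $L$ to $\{a\}$ yields exactly $a^{2^{2^\ell}}$; in particular $L\neq\emptyset$ and every $w\in L$ satisfies $\len{w}_{a} = 2^{2^\ell}$, hence also $\len{w}\ge 2^{2^\ell}$. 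It then remains to lower bound, separately, the number of states of any automaton recognising $L{\downarrow}$, $\Pi(L)$ and $L{\uparrow}$ by $2^{2^\ell}$, which is doubly exponential in the size of the presentation.

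For $L{\downarrow}$: since $\proj{w}{\{a\}} = a^{2^{2^\ell}}\subword w$ for any $w\in L$, we have $a^{2^{2^\ell}}\in L{\downarrow}$, whereas $a^k\notin L{\downarrow}$ for every $k>2^{2^\ell}$ because taking subwords cannot increase letter counts. Hence the words $a^0, a^1,\dots, a^{2^{2^\ell}}$ are pairwise Myhill--Nerode inequivalent for $L{\downarrow}$: for $i<j$ the suffix $a^{2^{2^\ell}-i}$ separates them, as $a^i a^{2^{2^\ell}-i} = a^{2^{2^\ell}}\in L{\downarrow}$ but $a^j a^{2^{2^\ell}-i}\notin L{\downarrow}$. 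So any DFA for $L{\downarrow}$ has at least $2^{2^\ell}+1$ states. For $\Pi(L)$: fix any $w_0\in L$, and for $0\le i\le 2^{2^\ell}$ let $z_i$ be any word with $\len{z_i}_{a} = 2^{2^\ell}-i$ and $\len{z_i}_{b} = \len{w_0}_{b}$ for all $b\neq a$. Then $p(a^i z_i) = p(w_0)$, so $a^i z_i\in\Pi(L)$, while $a^j z_i$ has $2^{2^\ell}+(j-i)>2^{2^\ell}$ occurrences of $a$ and therefore does not match the Parikh image of any word of $L$, so $a^j z_i\notin\Pi(L)$. Hence again $a^0,\dots,a^{2^{2^\ell}}$ form a fooling set of size $2^{2^\ell}+1$, forcing that many states.

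For $L{\uparrow}$ the previous trick fails, since words of $L$ contain $0$s and $1$s and so $L{\uparrow}\cap a^\ast=\emptyset$; instead I would argue via word lengths. $L{\uparrow}$ is nonempty (it contains $L$), and every $v\in L{\uparrow}$ has a subword $w\in L$, so $\len{v}\ge\len{w}\ge 2^{2^\ell}$; thus the shortest word accepted by $L{\uparrow}$ has length at least $2^{2^\ell}$. A standard pumping argument then forces many states: in any NFA (even with $\epsilon$-transitions), take a shortest accepting run for a shortest accepted word; if some state repeats, excising the loop between the two occurrences gives an accepting run reading either a strictly shorter word (contradicting minimality of the word) or the same word by a shorter run (contradicting minimality of the run), so the run visits no state twice and hence reads fewer than (number of states) letters. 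Therefore any NFA --- a fortiori any DFA --- for $L{\uparrow}$ has at least $2^{2^\ell}$ states. Combining the three bounds proves the claim. The argument is essentially routine once Claim~\ref{claim:countingPDA} is in hand; the only mildly delicate point is the $L{\uparrow}$ case, where one must reason about shortest accepting runs rather than about letter counts, because $L{\uparrow}$ contains no word over the single letter $a$.
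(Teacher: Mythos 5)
Your proposal is correct and follows essentially the same route as the paper: the paper's justification is simply to point at the language $L(A)\cap L'_\ell$ of Claim~\ref{claim:countingPDA}, whose every word contains exactly $2^{2^\ell}$ occurrences of $\Gamma_2$-letters, and to assert that any automaton for the three closures must therefore be doubly exponential. You use the identical construction and merely supply the (routine but correct) Myhill--Nerode, fooling-set, and shortest-accepted-word arguments that the paper leaves implicit.
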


\subsection{Concatenation instead of Shuffle}
In \cite{AiswaryaMS22} it is shown that shuffle can express concatenation with a polynomial blow-up, but preserving acyclicity. It is interesting to see if the hardness holds in the presence of concatenation alone. 
Already, in the setting of \cite{AiswaryaMS22} (no transductions), if acyclicity is not imposed, it is not known whether satisfiability of the regular string constraints is decidable if only concatenation is allowed instead of shuffle. In our setting (in the presence of tranducers), it turns out that satisfiability  is undecidable. We show this by modifying the reduction in \cite{AiswaryaMS22}.

\newcommand{\Lang}{\mathcal{L}}
\newcommand{\Aut}{\mathcal{A}}
\newcommand{\Vars}{\mathcal{V}}
Let $\pcp=(\Sigma_1, \Sigma_2, f, g)$ be a given PCP instance,  let $\Lang_{u} = (\{i \cdot f(i) \mid i \in \Sigma_1 \})^*$, $\Lang_{v} = (\{i \cdot g(i) \mid i \in \Sigma_1 \})^*$, $\Lang_i = \Sigma_1^+$ and $\Lang_s =  \Sigma_2^*$.
Notice that all these four languages are regular,  let $\Aut_{u}, \Aut_{v}, \Aut_{i}$ and $\Aut_{s}$ be their corresponding NFA.  Then the required string constraint is $ \tuple{\alphabet, \variableset,  \Mem, \Rel} $, where  $\Vars = \{ u,v,i,s \}$, for any $x \in \Vars$, $\Mem(x) = \Aut_x$.
Let $\transducer_\alphabet = \{(a_1 \dots a_n, \alphabet^*a_1\alphabet^*a_1 \dots \alphabet^* a_n \alphabet^*) \}$ be the transduction that arbitrarily inserts words from $\alphabet$. Then the set $\Rel$ is given by
\begin{align*}
	i \subword u    &&  s \subword u             &	 &               u \subword \transducer_{\Sigma_1}(s)  &&  u \subword \transducer_{\Sigma_2}(t) \\
	i \subword v           &&	  s \subword v       &    &       v \subword \transducer_{\Sigma_1}(s)  &&  v \subword \transducer_{\Sigma_2}(t) 
\end{align*}

In the case of acyclic  string constraints, one may wonder if the lower bounds hold in a setting where only concatenation is allowed instead of shuffle. This was not discussed in \cite{AiswaryaMS22}. Infact, in our case,  the lower bound holds even when only concatenation is allowed. Notice that, in our \twonexph\ reduction, all relational constraints have only one variable in the RHS. Hence, the \twonexph\ holds for acyclic pushdown string constraints with transducers, even when shuffle (or even concatenation) is disallowed. 

In fact, it is not known whether the lower bounds in \cite{AiswaryaMS22} hold for the variant with only concatenation instead of shuffle.  It is also open whether satisfiability is decidable for the unrestricted regular string constraints   (without acyclicity restriction) when only concatenation is allowed.

\section{Conclusions}

In this paper, we considered string constraints in the presence of sub-word relation, shuffle operator (which subsumes concatenation \cite{AiswaryaMS22}) and transducers. We studied this problem for two different kinds of membership constraints, namely regular and context free. We showed that in the case when only regular membership constraints are involved, the problem is \nexpc. Whereas, when context-free membership constraints are involved, the problem is \twonexpc. Towards the hardness proof, we showed how to count exactly $2^n$ using $n$ finite state automata each of size  $\mathcal{O}(n)$. As a consequence of this result, we also obtained a  lower bound for any regular representation of the upward closure, downward closure and Parikh image closure of the intersection of the language of $n$ finite state automata.
Similarly, we showed that we can count exactly $2^{2^n}$ using a pushdown automaton and $n$ finite state automata, each of size $\mathcal{O}(n)$.  
 With this, we also obtained a  lower bound for any regular representation of the upward closure, downward closure and Parikh image closure of the intersection language of  a pushdown  and $n$ finite state automata.
 
 \bibliography{subwordtransducer}

\end{document}